\providecommand{\U}[1]{\protect\rule{.1in}{.1in}}
\newcommand{\R}{\mathbb{R}}
\newcommand{\mb}[1]{\mathbf{#1}}
\newtheorem{assumption}{Assumption}
\newtheorem{theorem}{Theorem}
\newtheorem{corollary}{Corollary}
\newtheorem{lemma}{Lemma}
\newtheorem{remark}{Remark}
\useunder{\uline}{\ul}{}
\newcommand{\multiline}[1]{  \begin{tabularx}{\dimexpr\linewidth-\ALG@thistlm}[t]{@{}X@{}}
#1
\end{tabularx}
}
\begin{document}

\title{{\LARGE \textbf{Event-Driven Receding Horizon Control For Distributed Persistent Monitoring in Network Systems}}}
\author{Shirantha Welikala and Christos G. Cassandras \thanks{$^{\star}$Supported in part by NSF under grants ECCS-1931600, DMS-1664644, CNS-1645681, by AFOSR under grant FA9550-19-1-0158, by ARPA-E's NEXTCAR program under grant DE-AR0000796 and by the MathWorks.} \thanks{The authors are with the Division of Systems Engineering and Center for Information and Systems Engineering, Boston University, Brookline, MA 02446, \texttt{{\small \{shiran27,cgc\}@bu.edu}}.}}
\maketitle

\begin{abstract}
We address the multi-agent persistent monitoring problem defined on a set of nodes (targets) interconnected over a network topology. A measure of mean overall node state uncertainty evaluated over a finite period is to be minimized by controlling the motion of a cooperating team of agents. To address this problem, we propose an event-driven receding horizon control approach that is computationally efficient, distributed and on-line. The proposed controller differs from the existing on-line gradient-based parametric controllers and off-line greedy cycle search methods that often lead to either low-performing local optima or computationally intensive centralized solutions. A critical novel element in this controller is that it automatically optimizes its planning horizon length, thus making it parameter-free. We show that explicit globally optimal solutions can be obtained for every distributed optimization problem encountered at each event where the receding horizon controller is invoked. Numerical results are provided showing improvements compared to state of the art distributed on-line parametric control solutions.
\end{abstract}

\thispagestyle{empty} \pagestyle{empty}

\section{Introduction}

A \emph{persistent monitoring} problem arises when a dynamically changing environment is to be monitored by a set of mobile agents. In contrast to coverage problems, where agents equally value every point in the environment \cite{Lin2013,Huynh2010}, we focus on monitoring only a finite set of \textquotedblleft points of interest\textquotedblright\ (henceforth called \textquotedblleft targets\textquotedblright) which hold a value \cite{Hari2019,Zhou2019,Yu2016,Welikala2019P3,Rezazadeh2019,Zhou2018,Lan2013,Khazaeni2018,Song2014} and which the agent team senses (or collects information from) in order to reduce an \textquotedblleft uncertainty metric\textquotedblright\ associated with the target state. Typically, a target's uncertainty metric decreases when an agent can monitor the target by dwelling in its vicinity and increases when no agent is monitoring it. The global objective is to control each agent's motion so as to collectively minimize an overall measure of target uncertainties evaluated over a fixed period of interest. 

Such persistent monitoring problems encompass applications that include surveillance \cite{Leahy2016}, environmental sensing \cite{Trevathan2018}, event detection \cite{Yu2015}, data collecting \cite{khazaeni2018b,Smith2011} and energy management \cite{Maini2018,Mathew2015}. Moreover, different persistent monitoring problem settings have been considered in the literature varying in the specific global objective to be optimized, including event counts \cite{Yu2015}, idle times \cite{Hari2019}, error covariances \cite{Pinto2020,Lan2013,Welikala2020Ax6} or visibility states \cite{Maini2018}, as well as the nature of the target state dynamics which may be deterministic \cite{Yu2016,Zhou2019,Song2014} or probabilistic \cite{Rezazadeh2019,Pinto2020,Lan2013}. 

Classical optimal control techniques are exploited in \cite{Zhou2018} to solve persistent monitoring problems in 1D environments where the optimal solutions minimizing an average target uncertainty metric have been proven to be threshold-based parametric controllers. However, as shown in \cite{Lin2013}, this synergy between optimal control and parametric controllers does not extend to 2D environments. Instead, for such problems, agent trajectories may be restricted to specific parametric families (e.g., elliptical or Fourier) \cite{Lin2013,Khazaeni2018} and optimal solutions can still be determined within these families. Aside from the generally sub-optimal trajectories obtained, this approach is also limited by its inability to react to dynamic changes in target uncertainty states and the dependence of performance on the initial target/agent conditions which leads to local optima. An alternative approach is to exploit the network structure of the system consisting of targets and agents which is modeled as a graph, where targets are associated with nodes and inter-target agent trajectory segments are associated with edges, to formulate \emph{Persistent Monitoring on Networks} (PMN) problems.

In PMN problems \cite{Yu2020,Zhou2019,Rezazadeh2019}, each agent trajectory is defined by the sequence of \emph{visited targets} and the sequence of \emph{dwell times} spent at each visited target. Therefore, searching for the optimal set of agent decision sequences is a computationally intensive process. To overcome this issue, \cite{Rezazadeh2019} exploits the submodularity property of the objective function and proposes a sub-optimal greedy solution with a performance bound guarantee (see also \cite{Sun2020}). However, this approach and many others \cite{Yu2015,Maini2018,Hari2019,Song2014} are limited to centralized settings. The work in \cite{Zhou2019} overcomes this challenge by adopting a \emph{distributed} Threshold-based Control Policy (TCP) where each agent enforces a set of thresholds on its neighboring target uncertainty values to decide immediate trajectory decisions in a distributed manner: the dwell-time to be spent at the current target and the next-target to visit. A gradient-based technique using Infinitesimal Perturbation Analysis (IPA) \cite{Cassandras2010b} is then used to optimize on line the threshold values. However, due to the use of gradient techniques, this TCP approach to PMN problems often converges to poor locally optimal solutions. This issue is addressed in \cite{Welikala2019P3} by appending an \emph{off-line} \emph{centralized} threshold initialization scheme which has been shown to considerably increase performance at the expense of significant computational effort for the initialization process. However, since the on-line component of this solution is still governed by the TCP method \cite{Zhou2019}, any state (or system parameter) perturbation would trigger a new threshold-tuning process with a considerable amount of recovery time.

Aiming to address the challenges mentioned above, this paper (which is an extended version of \cite{Welikala2020P5}) departs from gradient-based approaches and follows an entirely different direction for PMN problems. Specifically, the event-driven nature of PMN systems is exploited to derive an Event-Driven \emph{Receding Horizon Controller} (RHC) to optimally control each of the agents in an \emph{on-line} \emph{distributed} manner using only a minimal amount of computational power. The conventional use of a RHC involves selecting a \emph{planning horizon} over which an optimization problem is solved (e.g., \cite{Li2006,Wang2017,khazaeni2018b,Chen2019}). A novelty in the proposed RHC approach is the ability to simultaneously determine the optimal value of the planning horizon to be used locally at each agent, making it not only gradient-free but also a parameter-free controller.

As a first step, we show that each agent's trajectory is fully characterized by the sequence of decisions it makes at specific discrete event times. Then, considering an agent at any one of these event times, we formulate a \emph{Receding Horizon Control Problem} (RHCP) to determine the immediate optimal decisions over an optimally determined planning horizon; these decisions are subsequently executed only over a shorter \emph{action horizon} with the process sequentially repeated as new events take place. Next, we exploit several structural properties of this RHCP which takes the form of a non-convex constrained optimization problem, to derive a unique global optimal solution for it in closed form. By introducing some modifications to the RHC architecture, we also show how to obtain higher-performing solutions. Finally, we investigate the performance improvement achieved (compared to the IPA-TCP method in \cite{Zhou2019}) and the controller's ability to take into account the presence of random effects affecting the system behavior.

The paper is organized as follows. Section \ref{Sec:ProblemFormulation} presents the problem formulation, some preliminary results and an overview of the RHC solution. In Section \ref{Sec:SolvingRHCPs}, we show how the RHCP is explicitly solved and modifications to it are presented in Section \ref{Sec:Improvements}. The performance and robustness of the proposed RHC method are illustrated through simulation results in Section \ref{Sec:SimulationResults}. Finally, Section \ref{Sec:Conclusion} concludes the paper.

\section{Problem Formulation} \label{Sec:ProblemFormulation}

\subsection{Persistent Monitoring On Networks (PMN) Problem}

We consider an $n$-dimensional mission space containing $M$ targets (nodes) in the set $\mathcal{T}=\{1,2,\ldots,M\}$ where the location of target $i\in\mathcal{T}$ is fixed at $Y_{i}\in\mathbb{R}^{n}$. A team of $N$ agents in the set $\mathcal{A}=\{1,2,\ldots,N\}$ is deployed to monitor the targets. Each agent $a\in\mathcal{A}$ moves within this mission space and its location at time $t$ is denoted by $s_{a}(t)\in\mathbb{R}^{n}$.

\paragraph{\textbf{Target Model}}

Each target $i\in\mathcal{T}$ has an associated \emph{uncertainty state} $R_{i}(t)\in\mathbb{R}$ which follows the dynamics \cite{Zhou2019}:
\begin{equation}
\dot{R}_{i}(t) =
\begin{cases}
A_{i}-B_{i}N_{i}(t) & \mbox{ if }R_{i}(t)>0\mbox{ or }A_{i}-B_{i}N_{i}(t)>0\\
0 & \mbox{ otherwise,}
\end{cases}
\label{Eq:TargetDynamics}%
\end{equation}
where $A_{i},B_{i}$ and $R_{i}(0)$ values are prespecified and $N_{i}(t)=\sum_{a\in\mathcal{A}}\mathbf{1}\{s_{a}(t)=Y_{i}\}$ ($\mathbf{1}\{\cdot\}$ is the indicator function). Therefore, $N_{i}(t)$ represents the number of agents present at target $i$ at time $t$. Following from \eqref{Eq:TargetDynamics}: (\romannum{1}) $R_{i}(t)$ increases at a rate $A_{i}$ when no agent is visiting target $i$,  (\romannum{2}) $R_{i}(t)$ decreases at a rate $B_{i}N_{i}(t)-A_{i}$ where $B_{i}$ is the uncertainty removal rate by a visiting agent (i.e., agent sensing or data collection rate) to the target $i$, and, (\romannum{3}) $R_{i}(t)\geq0,\ \forall t$.

This problem setup (same as in \cite{Zhou2019,Welikala2019P3}) has an attractive queueing system interpretation \cite{Zhou2019} where $A_{i}$ and $B_{i}N_{i}(t)$ are respectively thought of as the arrival rate and the controllable service rate at target (server) $i\in\mathcal{T}$ in a queueing network. Another application example of this problem setup is as follows. Each target $i$ can be thought of as an ecological station that monitors its surrounding environment while collecting data at a rate of $A_i$. The agent team is tasked to collect the data from the targets. Each agent is assumed to be capable of downloading data rate $B_i$ upon visiting a target $i$. In this paradigm, the target state $R_i$ represents the amount of data locally stored. Alternatively, $R_i$ can also be seen as the time since the last visit by an agent.



\paragraph{\textbf{Agent Model}}

Some persistent monitoring models (e.g., \cite{Zhou2018,Pinto2020}) assume each agent $a\in\mathcal{A}$ to have a finite sensing range $r_{a}>0$ so that it can decrease $R_{i}(t)$ whenever it is in the vicinity of target $i\in\mathcal{T}$ (i.e., whenever $\Vert s_{a}(t)-Y_{i}\Vert\leq r_{a}$). In our graph-based topology, the condition $\Vert s_{a}(t)-Y_{i}\Vert\leq r_{a}$ is represented by agent $a$ residing at the $i$\textsuperscript{th} vertex of a graph (i.e., $\mathbf{1}\{s_{a}(t)=Y_{i}\}$) and $N_{i}(t)$ is used to replace the role of the joint detection probability \cite{Zhou2018,Pinto2020} of a target $i$ by the team of agents.

\paragraph{\textbf{Graph Topology}}

We embed a directed graph topology $\mathcal{G}=(\mathcal{T},\mathcal{E})$ into the mission space such that the \emph{targets} are represented by the graph \emph{vertices} $\mathcal{T}=\{1,2,\ldots,M\}$ and the inter-target \emph{trajectory segments} are represented by the graph \emph{edges} $\mathcal{E}\subseteq\{(i,j):i,j\in\mathcal{T}\}$. We point out that these trajectory segments may take arbitrary shapes in $\mathbb{R}^{n}$ so as to account for potential constraints (e.g., physical obstacles) in the agent motion. In the graph $\mathcal{G}$, each trajectory segment represented by an edge $(i,j)\in\mathcal{E}$ is assigned a (predefined) value $\rho_{ij}\in\mathbb{R}_{\geq 0}$ representing the \emph{transit time} an agent spends to travel from target $i$ to $j$. Based on $\mathcal{E}$, the \emph{neighbor-set} and the \emph{neighborhood} of a target $i\in\mathcal{T}$ are defined respectively as
\begin{equation}
\mathcal{N}_{i}\triangleq\{j:(i,j)\in\mathcal{E}\}\mbox{ and }\bar
{\mathcal{N}}_{i}=\mathcal{N}_{i}\cup\{i\}. \label{Eq:Neighborset}%
\end{equation}

Note that our analysis in this paper is independent of the agent motion dynamic model which ultimately determines the values of $\rho_{ij}$ for edges $(i,j)\in\mathcal{E}$. An extension of this model is to consider $\rho_{ij}$ as functions of controllable motion variables (e.g., speed, acceleration). 

\paragraph{\textbf{Objective}}

Our objective is to minimize the \emph{mean system uncertainty} $J_{T}$ over a finite time interval $[0,T]$:
\begin{equation}
J_{T} \triangleq\frac{1}{T}\int_{0}^{T}\sum_{i\in\mathcal{T}}R_{i}(t)dt,
\label{Eq:MainObjective}%
\end{equation}
by controlling the motion of the team of agents through a suitable set of feasible distributed controllers described next.

\paragraph{\textbf{Control}}

Based on the graph topology $\mathcal{G}$, whenever an agent $a\in\mathcal{A}$ is ready to leave a target $i\in\mathcal{T}$, its \emph{next-visit} target $j$ is selected from $\mathcal{N}_{i}$. Next, the agent travels on the trajectory segment $(i,j)\in\mathcal{E}$ to arrive at target $j$ spending a transit time $\rho_{ij}$. Subsequently, it selects a \emph{dwell-time} $\tau_{j}\in\mathbb{R}_{\geq0}$ to spend at target $j$ (which contributes to decreasing $R_{j}(t)$), and then makes another next-visit decision.

Therefore, in a PMN problem, the control exerted on an agent consists of a sequence of \emph{next-visit} targets $j\in\mathcal{N}_{i}$ and \emph{dwell-times} $\tau_{i}\in\mathbb{R}_{\geq0}$. Our goal is to determine $(\tau_{i},j)$ for any agent residing at any target $i$ at any time $t\in\lbrack0,T]$ which are collectively optimal in the sense of minimizing \eqref{Eq:MainObjective}. As pointed out in \cite{Zhou2019,Welikala2019P3}, this is a challenging task due to the nature of the feasible control space, even for the simplest PMN problem configurations.

\paragraph{\textbf{Receding Horizon Control}}

The on-line distributed gradient-based TCP method proposed in \cite{Zhou2019} requires each agent to use a set of \emph{thresholds} applied to its neighborhood target uncertainties $\{R_{j}(t):j\in\bar{\mathcal{N}}_{i}\}$ in order to determine its dwell-time $\tau_{i}$ and next-visit $j\in \mathcal{N}_{i}$ decisions. Thus, the objective in \eqref{Eq:MainObjective} is viewed as a function of these threshold parameters. Starting from an arbitrary set of thresholds, each agent iteratively updates them using a gradient technique that exploits the information from observed events in agent trajectories. Although this TCP approach is efficient due to the use of IPA, it is limited by the presence of local optima. Indeed, all gradient-based methods for designing optimal agent trajectories are subject to this limitation in view of the non-convex nature of the objective functions involved. 

To address this limitation, this paper proposes an \emph{Event-Driven Receding Horizon Controller} (RHC) for each agent $a\in\mathcal{A}$. The basic idea of using a receding horizon in seeking solutions to hard dynamic optimization problems has its root in Model Predictive Control (MPC). Our approach (i) exploits the event-driven nature of the control actions in PMN problems, (ii) includes the planning horizon for each iteration of the RHC as a decision variable to be optimized, and (iii) provides solutions to each event-driven optimization problem which do not require a gradient-based method. As a result, the advantages of this approach are (i) a reduction in computational complexity by orders of magnitude due to the flexibility in the frequency of control updates, (ii) performance improvements by avoiding many local optima resulting from gradient-based optimization methods, and (iii) a parameter-free controller by optimizing the planning horizon length rather than treating it as a tunable parameter.

As introduced in \cite{Li2006} and extended later on in \cite{khazaeni2018b},\cite{Chen2019}, an event-driven receding horizon controller solves an optimization problem of the form \eqref{Eq:MainObjective} but limited to a prespecified \emph{planning horizon} whenever an \emph{event} is observed; the resulting (optimal) control is then executed over a generally shorter \emph{action horizon} defined by the occurrence of the next event of interest to the controller. This process is iteratively repeated in event-driven fashion.

In the PMN problem, the aim of the RHC, when invoked at time $t$ for an agent residing at target $i\in\mathcal{T}$, is to determine the immediate next-visit target $j\in\mathcal{N}_{i}$ and dwell times at targets $i$ and $j$ (i.e., $\tau_{i}$ and $\tau_{j}$ respectively). These three decisions jointly form a control $U_{i}(t)$ and its optimal value is determined by solving an optimization problem of the form:
\begin{equation}
U_{i}^{\ast}(t)=\underset{U_{i}(t)\in\mathbb{U}(t)}{\arg\min}\ \left[
J_{H}(X_{i}(t),U_{i}(t);H)+\hat{J}_{H}(X_{i}(t+H))\right]
,\label{Eq:RHCProblem}%
\end{equation}
where $X_{i}(t)$ is the current local state and $\mathbb{U}(t)$ is the \emph{feasible control set} at $t$ (whose exact definition will be provided later). The term $J_{H}(X_{i}(t),U_{i}(t);H)$ is the immediate cost over the planning horizon $[t,t+H]$ and $\hat{J}_{H}(X_{i}(t+H)$ is an estimate of the future cost evaluated at the end of the planning horizon $t+H$. In prior work \cite{Li2006,khazaeni2018b,Chen2019}, the value of the planning horizon length $H$ is selected \emph{exogenously}. However, in this paper we will include this value into the optimization problem and ignore the $\hat{J}_{H}(X_{i}(t+H))$ term. Thus, by optimizing the planning horizon, we compensate for the complexity and intrinsic inaccuracy of the $\hat{J}_{H}(X_{i}(t+H))$ term whose evaluation requires information from the full network. However, this is not possible in the proposed \emph{distributed} RHC setting which allows each agent to separately solve \eqref{Eq:RHCProblem} using only local state information.

Even though the target uncertainty model \eqref{Eq:TargetDynamics} and the main objective \eqref{Eq:MainObjective} play a crucial role in deriving the exact solution to the RHCP, the proposed overall RHC architecture can be readily adopted for persistent monitoring problems with different target state models and objectives such as the ones used in \cite{Lan2013,Song2014,Rezazadeh2019,Pinto2020}.

\subsection{Preliminary Results}

According to \eqref{Eq:TargetDynamics}, the target state $R_{i}(t)$, $i\in\mathcal{T}$, is piece-wise linear and its gradient $\dot{R}_{i}(t)$ changes only when one of the following (\emph{strictly local} to target $i$) \emph{events} occurs: 
(\romannum{1}) An agent arrival at $i$, 
(\romannum{2}) An event $[R_{i}\rightarrow0^{+}]$, or 
(\romannum{3}) An agent departure from $i$. 
Let the occurrence of such events associated with target $i$ be indexed by $k=1,2,\ldots$ and the respective event occurrence times be denoted by $t_{i}^{k}$ with $t_{i}^{0}=0$. Then, 
\begin{equation}
\dot{R}_{i}(t)=\dot{R}_{i}(t_{i}^{k}),\ \ \forall t\in\lbrack t_{i}^{k},t_{i}^{k+1}). \label{Eq:TargetDynamics2}
\end{equation}

As pointed out in \cite{Welikala2019P3,Yu2016}, allowing overlapping dwell intervals at some target (also referred to as \textquotedblleft simultaneous target sharing\textquotedblright) is known to lead to solutions with poor performance levels (clearly, this issue does not apply to single-agent problems). This observation motivates us to enforce a constraint on the controller to ensure:
\begin{equation}
\label{Eq:NoTargetSharing}N_{i}(t)\in\{0,1\},\ \forall t\in[0,T],\ \forall
i\in\mathcal{T}.
\end{equation}

If the control constraint \eqref{Eq:NoTargetSharing} is enforced, it follows from \eqref{Eq:TargetDynamics} and \eqref{Eq:TargetDynamics2} that the sequence $\{\dot{R}_{i}(t_{i}^{k})\}$, $k=0,1,2,\ldots$, is a \emph{cyclic order} of three elements: $\{-(B_{i}-A_{i}),0,A_{i}\}$.
Next, in order to ensure that each agent is capable of enforcing the event $[R_{i}\rightarrow0^{+}]$ at any $i\in\mathcal{T}$, the following simple stability condition is assumed (similar to \cite{Welikala2019P3}). 
\begin{assumption}
\label{As:TargetUncertaintyRateInequality} Target uncertainty rate parameters $A_{i}$ and $B_{i}$ of each target $i\in\mathcal{T}$ satisfy $0\leq A_{i} < B_{i}$.
\end{assumption}

\paragraph{\textbf{Decomposition of the Objective Function}}

Let the contribution of target $i$ to the objective $J_{T}$ in \eqref{Eq:MainObjective} during a time period $[t_{0},t_{1})$ be $\frac{1}{T}\,J_{i}(t_{0},t_{1})$ where
\[
J_{i}(t_{0},t_{1})\triangleq\int_{t_{0}}^{t_{1}}R_{i}(t)dt.
\]
Theorem \ref{Th:Contribution} provides a target-wise and temporal decomposition of the objective function $J_{T}$ based on $J_{i}(t_{0},t_{1})$.

\begin{theorem}
\label{Th:Contribution} The contribution to the objective $J_{T}$ by target $i\in\mathcal{T}$ during a time period $[t_{0},t_{1})\subseteq\lbrack t_{i}^{k},t_{i}^{k+1})$ for some $k\in\mathbb{Z}_{\geq0}$ is $\frac{1}{T}\,J_{i}(t_{0},t_{1})$, where,
\begin{equation}
J_{i}(t_{0},t_{1})=\frac{(t_{1}-t_{0})}{2}\left[  2R_{i}(t_{0})+\dot{R}%
_{i}(t_{0})(t_{1}-t_{0})\right]  . \label{Eq:Contribution}%
\end{equation}
\end{theorem}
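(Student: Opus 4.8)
The plan is to exploit the piecewise-linear structure of $R_i(t)$ already recorded in \eqref{Eq:TargetDynamics2}. Since the interval $[t_0,t_1)$ is assumed to lie entirely inside a single inter-event interval $[t_i^k,t_i^{k+1})$, the derivative $\dot R_i(t)$ is constant on it and equal to $\dot R_i(t_0)$. I would first integrate this constant-derivative relation, using the continuity of $R_i$ (which follows from \eqref{Eq:TargetDynamics}), to obtain the closed form $R_i(t)=R_i(t_0)+\dot R_i(t_0)(t-t_0)$ for all $t\in[t_0,t_1)$.

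Then I would substitute this expression into the definition $J_i(t_0,t_1)=\int_{t_0}^{t_1}R_i(t)\,dt$ and carry out the elementary integration, which yields $R_i(t_0)(t_1-t_0)+\frac{1}{2}\dot R_i(t_0)(t_1-t_0)^2$. Factoring out $\frac{1}{2}(t_1-t_0)$ gives exactly \eqref{Eq:Contribution}. Equivalently, one may note that the integral is simply the area of a trapezoid whose parallel sides have lengths $R_i(t_0)$ and $R_i(t_1)=R_i(t_0)+\dot R_i(t_0)(t_1-t_0)$ and whose width is $(t_1-t_0)$, which reproduces the same formula.

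There is essentially no obstacle here; the only point requiring care is to make explicit that the hypothesis $[t_0,t_1)\subseteq[t_i^k,t_i^{k+1})$ is precisely what guarantees $\dot R_i$ is constant over the whole interval — without it the integral would have to be split at the intermediate event times and \eqref{Eq:Contribution} would no longer hold with a single $\dot R_i(t_0)$. The accompanying target-wise and temporal decomposition asserted in the statement then follows directly from the linearity of the integral and the finite sum over $\mathcal{T}$ in \eqref{Eq:MainObjective}, together with the factor $\frac{1}{T}$.
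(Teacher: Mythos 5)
Your proof is correct and follows essentially the same route as the paper: both exploit the constancy of $\dot R_i$ on $[t_0,t_1)$ guaranteed by \eqref{Eq:TargetDynamics2}, and both reduce the integral $\int_{t_0}^{t_1}R_i(t)\,dt$ to the area of a trapezoid with parallel sides $R_i(t_0)$ and $R_i(t_1)=R_i(t_0)+\dot R_i(t_0)(t_1-t_0)$, together with the linearity-based decomposition of $J_T$ over targets and subintervals. No gaps; nothing further is needed.
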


\begin{proof}
In \eqref{Eq:MainObjective}, by taking the summation operator out of the integration and then splitting the time interval $t\in [0,T]$ of the integration of $R_i(t)$ profile into three parts gives
\begin{align}
    J_T
    &= \frac{1}{T} \left[ \sum_{j\in \mathcal{T}\backslash \{i\}} \int_0^T R_j(t)dt \right] \nonumber \\ 
    &+ \frac{1}{T} \left[ \int_0^{t_0} R_i(t)dt +\int_{t_0}^{t_1} R_i(t)dt + \int_{t_1}^T R_i(t)dt \right],
\end{align}
where $\cdot\backslash\cdot$ represents the set subtraction operator. Therefore, clearly the contribution of target $i$ to the main objective $J_T$ during the time period $t\in[t_0,t_1)$ is $\frac{1}{T}J_i(t_0,t_1)$ where,
$$J_i(t_0,t_1) = 
\int_{t_0}^{t_1}R_i(t)dt.$$
Moreover, since $[t_0,t_1) \subseteq [t_i^k,t_i^{k+1})$, the relationship \eqref{Eq:TargetDynamics2} implies that $\int_{t_0}^{t_1}R_i(t)dt$ represents  the area of a trapezoid (whose parallel sides are $R_i(t_0)$ and $R_i(t_1)$). Therefore, 
$$ J_i(t_0,t_1) = 
\left[ \frac{R_i(t_0)+R_i(t_1)}{2}\times (t_1-t_0)\right].$$ 
Also, \eqref{Eq:TargetDynamics2} gives that $R_i(t_1) = R_i(t_0) + \dot{R}_i(t_0)(t_1-t_0)$. Therefore,
\begin{align*}
    J_i(t_0,t_1) = \frac{(t_1-t_0)}{2}
    \left[ 2R_i(t_0) + \dot{R}_i(t_0)(t_1-t_0)\right].
\end{align*}
\end{proof}

A simple corollary of Theorem \ref{Th:Contribution} is to extend it to any interval $[t_{0},t_{1})$ which may include one or more event times $t_{i}^{k}$.

\begin{corollary}
\label{Col:Contribution} Let $t_{0}=t_{i}^{k}$ be the time when an agent arrived at target $i\in\mathcal{T}$, followed by an $[R_{i}\rightarrow0^{+}]$ event at $t=t_{i}^{k+1}$ and a departure event at $t=t_{i}^{k+2}$. Then, for any $t_{1}$ such that $t_{i}^{k+2} \leq t_{1}\leq t_{i}^{k+3}$, 
\begin{equation}
J_{i}(t_{0},t_{1})=\frac{u_{i}^{0}}{2}\left[  2R_{i}(t_{0})-(B_{i}-A_{i}%
)u_{i}^{0}\right]  +\frac{u_{i}^{1}}{2}\left[  A_{i}u_{i}^{1}\right]  ,
\label{Eq:Contribution2}%
\end{equation}
where $u_{i}^{0}=t_{i}^{k+1}-t_{0}$ and $u_{i}^{1}=t_{1}-t_{i}^{k+2}$.
\end{corollary}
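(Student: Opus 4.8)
The plan is to split the integration interval $[t_0,t_1)$ at the two intermediate event times $t_i^{k+1}$ and $t_i^{k+2}$, so that $J_i(t_0,t_1)=J_i(t_0,t_i^{k+1})+J_i(t_i^{k+1},t_i^{k+2})+J_i(t_i^{k+2},t_1)$, then apply Theorem \ref{Th:Contribution} to each of the three pieces (each of which lies within a single inter-event interval $[t_i^m,t_i^{m+1})$ by construction) and add the results.

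First I would treat the subinterval $[t_0,t_i^{k+1})$. Since the agent arrived at $t_0=t_i^k$ and does not depart until $t_i^{k+2}$, we have $N_i(t)=1$ throughout, and since $R_i(t)>0$ on this interval (the $[R_i\rightarrow 0^+]$ event occurs only at its right endpoint), the first branch of \eqref{Eq:TargetDynamics} gives $\dot R_i(t_0)=A_i-B_i=-(B_i-A_i)$. Theorem \ref{Th:Contribution} then yields $J_i(t_0,t_i^{k+1})=\frac{u_i^0}{2}\left[2R_i(t_0)-(B_i-A_i)u_i^0\right]$.

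Next, on $[t_i^{k+1},t_i^{k+2})$ the agent is still present ($N_i=1$) but $R_i=0$; by Assumption \ref{As:TargetUncertaintyRateInequality} we have $A_i-B_iN_i=A_i-B_i<0$, so the second branch of \eqref{Eq:TargetDynamics} applies and $\dot R_i=0$, keeping $R_i(t)\equiv 0$ on this subinterval. Hence its contribution to $J_i$ vanishes (equivalently, Theorem \ref{Th:Contribution} with $R_i(t_i^{k+1})=\dot R_i(t_i^{k+1})=0$ gives zero), and continuity of $R_i$ (valid since \eqref{Eq:TargetDynamics} has a piecewise-constant right-hand side) gives $R_i(t_i^{k+2})=0$. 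Finally, on $[t_i^{k+2},t_1)\subseteq[t_i^{k+2},t_i^{k+3})$ the agent has departed, so $N_i=0$ and $\dot R_i(t_i^{k+2})=A_i$, while $R_i(t_i^{k+2})=0$ from the previous step; Theorem \ref{Th:Contribution} then gives $J_i(t_i^{k+2},t_1)=\frac{u_i^1}{2}\left[A_iu_i^1\right]$. Summing the three contributions and discarding the middle (zero) term produces exactly \eqref{Eq:Contribution2}.

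The only mildly delicate point is correctly reading off the value of $R_i$ and of $\dot R_i$ at the start of each subinterval from the prescribed event sequence — this is where Assumption \ref{As:TargetUncertaintyRateInequality} and the cyclic-order observation preceding it are used (to guarantee $\dot R_i=0$ on the middle interval and $R_i=0$ at the departure time). Once these facts are established, the remainder is a routine, purely mechanical application of Theorem \ref{Th:Contribution} to each piece and an addition.
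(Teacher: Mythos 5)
Your proposal is correct and follows essentially the same route as the paper's own proof: split $[t_0,t_1)$ at $t_i^{k+1}$ and $t_i^{k+2}$, apply Theorem \ref{Th:Contribution} to each inter-event piece, use $R_i(t_i^{k+1})=R_i(t_i^{k+2})=0$ and $\dot R_i=0$ on the middle interval so that its contribution vanishes, and sum. The extra care you take in justifying the values of $R_i$ and $\dot R_i$ from \eqref{Eq:TargetDynamics} and Assumption \ref{As:TargetUncertaintyRateInequality} is a welcome elaboration of facts the paper simply invokes.
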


\begin{proof}
Applying Theorem \ref{Th:Contribution} to the three interested inter-event intervals gives  
\begin{align*}
    J_i(t_0,t_1) = 
&\frac{(t_i^{k+1}-t_0)}{2}
    \left[2R_i(t_0) + \dot{R}_i(t_0)(t_i^{k+1}-t_0)\right]\\
&+\frac{(t_i^{k+2}-t_i^{k+1})}{2}
    \left[2R_i(t_i^{k+1}) + \dot{R}_i(t_i^{k+1})(t_i^{k+2}-t_i^{k+1})\right]\\
&+\frac{(t_1-t_i^{k+2})}{2}
    \left[2R_i(t_i^{k+2}) + \dot{R}_i(t_i^{k+2})(t_1-t_i^{k+2})\right].
\end{align*}
Now, using: (\romannum{1}) the definitions of $u_i^0,u_i^1$ and $v_i \triangleq (t_i^{k+2}-t_i^{k+1})$, (\romannum{2}) the $\dot{R}_i$ values stated earlier and (\romannum{3}) the fact that $R_i(t_i^{k+1})=R_i(t_i^{k+2})=0$, the above expression can be simplified as
\begin{align*}
    J_i(t_0,t_1) = 
&\frac{u_i^0}{2}
    \left[2R_i(t_0) - (B_i-A_i)u_i^0\right] +\frac{v_i}{2}
    \left[2\times 0 + 0\times v_i\right]\\
&+\frac{u_i^1}{2}
    \left[2\times 0 + A_iu_i^1\right],\\
=& \frac{u_i^0}{2}
    \left[2R_i(t_0) - (B_i-A_i)u_i^0\right] +\frac{u_i^1}{2}
    \left[A_iu_i^1\right].
\end{align*}
\end{proof}

\paragraph{\textbf{Local Objective Function}}

In developing a distributed event-driven controller for an agent residing at some target $i\in\mathcal{T}$, this agent has to have the access to any necessary \emph{local} information from the neighborhood $\bar{\mathcal{N}}_{i}$. In particular we assume: 
\begin{enumerate}
    \item Target $i\in\mathcal{T}$ receives the information $\{A_{j},B_{j}\}$ at $t=0$ from its neighbors $j\in\mathcal{N}_{i}$.
    \item Target $i\in\mathcal{T}$ receives information $\{R_{j}(t),\dot{R}_{j}(t)\}$ at any time $t$ from its neighbors $j\in\mathcal{N}_{i}$. 
    \item Any agent $a\in\mathcal{A}$ residing at target $i\in\mathcal{T}$ at time $t$ (i.e., $s_{a}(t)=Y_{i}$) can obtain the above two types of information from any neighboring target $j\in\bar{\mathcal{N}}_{i}$.
\end{enumerate}

Next we define the \emph{local objective function} of target $i$ over a time period $[t_{0},t_{1})\subseteq\lbrack0,T]$ as 
\begin{equation}
\bar{J}_{i}(t_{0},t_{1})\triangleq\sum_{j\in\bar{\mathcal{N}}_{i}}J_{j}(t_{0},t_{1}). \label{Eq:LocalObjectiveFunction}
\end{equation}
The value of each $J_{j}(t_{0},t_{1})$ above is obtained through Theorem \ref{Th:Contribution} and its extension in Corollary \ref{Col:Contribution} if $[t_{0},t_{1})$ includes additional events (where $[t_{0},t_{1})$ is decomposed into a sequence of corresponding inter-event time intervals). In essence, $\bar{J}_{i}(t_{0},t_{1})$ can be evaluated by an agent residing at target $i$ at any required (event-driven) time instant (more details are provided in the sequel).

\subsection{RHC Problem (RHCP) Formulation}
\label{SubSec:RHCIntro}

Consider a situation where agent $a\in\mathcal{A}$ resides at target $i\in\mathcal{T}$ at some $t\in\lbrack0,T]$. In our distributed setting, we assume that agent $a$ is made aware of only \emph{local events} occurring in the neighborhood $\bar{\mathcal{N}}_{i}$. As defined earlier, the control $U_{i}(t)$ consists of (i) the \emph{dwell-time} $\tau_{i}$ at the current target $i$, (ii) the \emph{next target} $j\in\mathcal{N}_{i}$ to visit and (iii) the \emph{dwell-time} $\tau_{j}$ at the selected next target $j$. Moreover, a dwell time decision $\tau_{i}$ (or $\tau_{j}$) can be divided into two interdependent decisions: (\romannum{1}) the \emph{active time} $u_{i}$ (or $u_{j}$) when $R_{i}(t)>0$ ($R_{j}(t)>0$) and (\romannum{2}) the \emph{inactive }(or\emph{ idle}) \emph{time} $v_{i}$ (or $v_{j}$) when $R_{i}(t)=0$ ($R_{j}(t)=0$), as shown in Fig. \ref{Fig:OneVisitTimeline}. Thus, agent $a$ has to optimally choose five decision variables which form the control vector
\[
U_{i}(t)\triangleq\lbrack u_{i}(t),\,v_{i}(t),\,j(t),\,u_{j}(t),\,v_{j}(t)].
\]
Note that $j(t)$ is discrete while the remaining four components of $U_{i}(t)$ are real-valued. The time argument of each component of $U_{i}(t)$ is omitted henceforth for notational convenience.

\begin{figure}[b]
\centering
\includegraphics[width=3.4in]{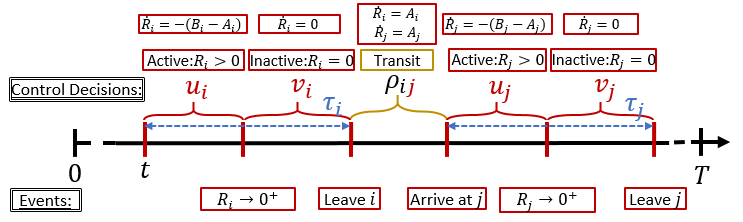} \caption{Event
timeline and control decisions under RHC.}%
\label{Fig:OneVisitTimeline}%
\end{figure}

\paragraph{\textbf{Fixed Planning Horizon}}

Recalling \eqref{Eq:RHCProblem}, the RHC depends on the planning horizon $H\in\mathbb{R}_{\geq0}$ which is normally viewed as a \emph{fixed} control parameter. Intuitively, selecting $H\geq\max_{(i,j)\in\mathcal{E}}\rho_{ij}$ ensures that all agents will consider traveling to all of their current neighboring targets. However, note that $t+H$ is constrained by $t+H\leq T$, hence, if this is violated, without loss of generality, we redefine (truncate) the planning horizon to be $H=T-t$ (i.e., $H\leq T-t$).


In \eqref{Eq:RHCProblem}, let the current local state be $X_{i}(t)=\{R_{m}(t):m\in\bar{\mathcal{N}}_{i}\}$ and let us decompose the control $U_{i}(t)$ into its real-valued components and its discrete component (omitting time arguments) as $U_{ij} \triangleq[u_{i},v_{i},u_{j},v_{j}]\in\mathbb{U}$ and $j\in\mathcal{N}_{i}$ respectively. Now, if the objective function $J_{H}(\cdot)$ in \eqref{Eq:RHCProblem} is chosen to reflect the contribution to the main objective $J_{T}$ in \eqref{Eq:MainObjective} by the targets in the neighborhood $\bar{\mathcal{N}}_{i}$ over the fixed time period $[t,t+H]$ (which is provided by \eqref{Eq:LocalObjectiveFunction} and Theorem \ref{Th:Contribution}), then,
\begin{equation}
\label{Eq:RHCConventionalChoices}\begin{aligned} &J_H(X_i(t),U_{ij};H) = \frac{1}{H}\ \bar{J}_i(t,t+H) \mbox{ with }\\ &\mathbb{U} = \{U:U\in \mathbb{R}^4,\ U \geq 0,\ \vert U\vert + \rho_{ij} = H\}. \end{aligned}
\end{equation}
The feasible control set $\mathbb{U}$ is such that $u_{i},$ $v_{i},$ $u_{j},$ and $v_{j}$ are non-negative real variables. Note that the notation $|\cdot|$ is used to represent the 1-norm or the cardinality operator when the argument is respectively a vector or a set.

In this setting, the optimal controls are obtained by solving the following set of optimization problems, henceforth called the RHC Problem (RHCP): 
\begin{align}
U_{ij}^{\ast}  &  =\underset{U_{ij}\in\mathbb{U}}{\arg\min}\ J_{H}%
(X_{i}(t),U_{ij};H);\ \forall j\in\mathcal{N}_{i}%
\ \mbox{ and }\label{Eq:RHCGenSolStep1}\\
j^{\ast}  &  =\underset{j\in\mathcal{N}_{i}}{\arg\min}\ J_{H}(X_{i}%
(t),U_{ij}^{\ast};H). \label{Eq:RHCGenSolStep2}%
\end{align}
Observe that \eqref{Eq:RHCGenSolStep1} involves solving $|\mathcal{N}_{i}|$ optimization problems, one for each $j\in\mathcal{N}_{i}$. Then, \eqref{Eq:RHCGenSolStep2} determines $j^{\ast}$ through a simple numerical comparison. Therefore, the final optimal decision variables are $U_{ij^{\ast}}^{\ast}$ and $j^{\ast}$.

According to \eqref{Eq:RHCConventionalChoices}, the choices for the four control variables in $U_{ij}$ are restricted by $U_{ij}\in\mathbb{U}$ such that $|U_{ij}|+\rho_{ij}=H$ (see also Fig.\ref{Fig:OneVisitTimeline}). Therefore, the selection of $H$ directly affects the RHCP's optimal solution. For example, if $H$ is very small, clearly the resulting optimal decisions $U_{ij^{\ast}}^{\ast}$ and $j^{\ast}$ (i.e., $U_{i}^{\ast}(t)$) are myopic.  Attempting to find the optimal choice of $H$ without compromising the on-line distributed nature of the proposed RHCP solution is a challenging task.

\paragraph{\textbf{Variable Planning Horizon}}

We address this problem by introducing a \emph{variable horizon} $w$ defined as
\begin{equation}
w\ \triangleq\ |U_{ij}|+\rho_{ij}\ =\ u_{i}+v_{i}+\rho_{ij}+u_{j}+v_{j},
\label{Eq:VariableHorizon}%
\end{equation}
and replacing $H$ in \eqref{Eq:RHCConventionalChoices} by $w$ while, at the same time, imposing the constraint $w\leq H$. It is important to observe that $w$ defined in \eqref{Eq:VariableHorizon} is a function of $u_{i}(t),v_{i}(t),u_{j}(t),v_{j}(t)$ but we omit explicitly showing this dependence for notational convenience. It is also important to note that now the value of $H$ is not critical as long as it is sufficiently large; for instance, it can be chosen to be $T-t$. Thus, we see that the solution of the RHCP \eqref{Eq:RHCGenSolStep1}-\eqref{Eq:RHCGenSolStep2} can now be obtained without any tunable parameters, making the resulting controller parameter-free. The objective function $J_{H}$ and the feasible control set $\mathbb{U}$ in the RHCP are now chosen as
\begin{equation}
\begin{aligned} &J_H(X_i(t),U_{ij};H) = \frac{1}{w}\bar{J}_i(t,t+w) \mbox{ and } \\ &\mathbb{U} = \{U:U \in \mathbb{R}^4,\ U \geq 0,\ \vert U \vert +\rho_{ij} \leq H\}. \end{aligned} \label{Eq:RHCNewChoices}%
\end{equation}
Therefore, this novel RHCP formulation allows us to simultaneously determine the \emph{optimal planning horizon} size $w^{\ast}$ in terms of the optimal control $U_{i}^{\ast}(t)$ as
\begin{equation}
w^{\ast}=|U_{ij^{\ast}}^{\ast}|+\rho_{ij^{\ast}}. \label{Eq:OptimumHorizon}%
\end{equation}
On the other hand, this incorporation of $w$ in \eqref{Eq:RHCNewChoices}, as opposed to \eqref{Eq:RHCConventionalChoices}, makes the denominator term of the objective function control-dependent and introduces new technical challenges that we address in the rest of the paper. To accomplish this, we will exploit structural properties of \eqref{Eq:RHCNewChoices} and show that the RHCP in \eqref{Eq:RHCGenSolStep1} can be solved analytically and efficiently to obtain its globally optimal solution.

\paragraph{\textbf{Event-Driven Action Horizon}}

As in all receding horizon controllers, the solution of each optimization problem over a certain planning horizon is executed only over a shorter \emph{action horizon} $h$. In the distributed RHC setting, the value of $h$ is determined by the first event that the agent observes after $t$, the time instant when the RHCP was last solved by the agent. Thus, in contrast to time-driven receding horizon control, the RHC solution is updated whenever asynchronous events occur; this prevents unnecessary steps to re-solve the RHCP \eqref{Eq:RHCGenSolStep1}-\eqref{Eq:RHCGenSolStep2} with \eqref{Eq:RHCNewChoices}.

Figure \ref{Fig:ActionHorizon} shows an example of three consecutive action horizons (labeled $h_{1},h_{2}$ and $h_{3}$) observed by an agent $a$ after an event at $t$ triggers the solution of the RHCP. Note that $w_{1}^{\ast},$ $w_{2}^{\ast},$ $w_{3}^{\ast}$ represent the three optimal planning horizon sizes (i.e., $w^{\ast}$ in \eqref{Eq:OptimumHorizon}) determined at each respective local event time $t$, $t+h_{1}$ and $t+h_{1}+h_{2}$.

\begin{figure}[t]
\centering
\includegraphics[width=3.4in]{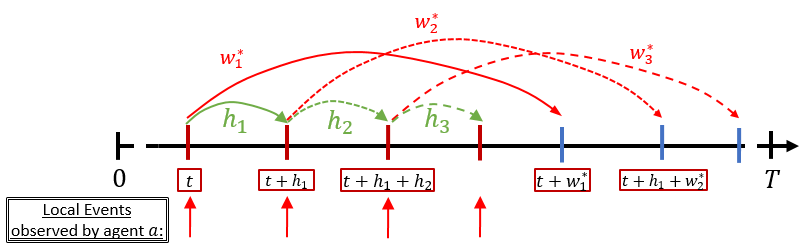}
\caption{Event driven receding horizon control approach.}%
\label{Fig:ActionHorizon}%
\end{figure}

Although our PMN problem setting is deterministic, in a general RHC setting there may be uncontrollable random events that trigger the controller (we will consider such cases when investigating the robustness of the RHC in Section \ref{Sec:SimulationResults}). Thus, in general, the determination of the action horizon $h$ may be \emph{controllable} or \emph{uncontrolled}. The former corresponds to the occurrence of any one event resulting from an agent solving a RHCP at some earlier time. We define next the three controllable events associated with an agent when it resides at target $i$; each of these events defines the action horizon $h$ following the solution $U_{i}^{\ast}(t)$ of a RHCP obtained by this agent at some time $t\in\lbrack0,T]$:

\textbf{1. Event }$[h\rightarrow u_{i}^{\ast}]$\textbf{: } This event occurs at time $t+u_{i}^{\ast}(t)$. If $R_{i}(t+u_{i}^{\ast}(t))=0$, this event coincides with an $[R_{i}\rightarrow0^{+}]$ event. Otherwise, $R_{i}(t+u_{i}^{\ast}(t))>0$ implies that the solution of the associated RHCP dictates ending the active time at target $i$ before the $[R_{i}\rightarrow0^{+}]$ event. Therefore, in that case, by definition, no inactive time may follow, i.e., $v_{i}^{\ast}(t)=0$, and $[h\rightarrow u_{i}^{\ast}]$ coincides with a departure event from target $i$.

\textbf{2. Event }$[h\rightarrow v_{i}^{\ast}]$\textbf{: } This event occurs at time $t+v_{i}^{\ast}(t)$. It is only feasible after an event $[h \rightarrow u_{i}^{\ast}]$ has occurred, including the possibility that $u_{i}^{\ast}(t)=0$ in the RHCP solution determined at $t$. Clearly, this always coincides with an agent departure event from target $i$.

\textbf{3. Event }$[h\rightarrow\rho_{ij^{\ast}}]$\textbf{: } This event occurs at time $t+\rho_{ij^{\ast}(t)}$. It is only feasible after an event $[h\rightarrow u_{i}^{\ast}]$ or $[h\rightarrow v_{i}^{\ast}]$ has occurred, including the possibility that $u_{i}^{\ast}(t)=0$ and $v_{i}^{\ast}(t)=0$ in the RHCP solution determined at $t$. Clearly, this coincides with an arrival event at target $j^{\ast}(t)$ as determined by the RHCP solution obtained at time $t$.

Observe that these events are mutually exclusive, i.e., only one is feasible at any one time. In addition, there are uncontrollable events associated with a neighboring target $j\in\mathcal{N}_{i}$ other than target $i$. In particular, let us define two additional events that may occur at any neighbor $j\in\mathcal{N}_{i}$ and trigger an event at the agent residing at target $i$. These events have been designed to enforce the no-simultaneous-target-sharing policy (i.e., the control constraint \eqref{Eq:NoTargetSharing}) and apply only to \emph{multi}-agent persistent monitoring problems.

A target $j\in\mathcal{T}$ is said to be \emph{covered} at time $t$ if it already has a residing agent or if an agent is en route to visit it from a neighboring target in $\mathcal{N}_{j}$. Thus, target $j$ is covered only if $\exists k\in\mathcal{N}_{j}$ and $\tau\in\lbrack t,t+\rho_{kj})$ such that $\sum_{a\in\mathcal{A}}\mathbf{1}\{s_{a}(\tau)=Y_{j}\}>0$. Since neighboring targets communicate with each other, this information can be determined at any target in $\bar{\mathcal{N}}_{j}$ at any time $t$. Therefore, an agent $a\in\mathcal{A}$ residing at target $i$ can prevent target sharing at target $j\in\mathcal{N}_{i}$ by simply modifying the neighbor set $\mathcal{N}_{i}$ used in the RHCP solved at time $t$ to exclude all covered targets. Let us use $\mathcal{N}_{i}(t)$ to indicate a \emph{time-varying} neighborhood of target $i$. Then, if target $j$ becomes covered at time $t$, we set
\begin{equation}
\mathcal{N}_{i}(t)=\mathcal{N}_{i}(t^{-})\backslash\{j\}.
\label{Eq:ReducedNeighborhood}%
\end{equation}
The effect of this modification is clear if a RHCP solved by an agent at target $i$ at some time $t$ leads to a next-visit solution $j^{\ast} \in\mathcal{N}_{i}(t)$: if this is followed by an event at $t^{\prime}>t$ causing target $j^{\ast}$ to become covered, then $\mathcal{N}_{i}(t^{\prime})=\mathcal{N}_{i}(t)\backslash\{j^{\ast}\}$ and the agent at target $i$ (whether active or inactive) must re-solve the RHCP at $t^{\prime}$ with the new $\mathcal{N}_{i}(t^{\prime})$. Note that as soon as an agent $a$ is en route to $j^{\ast}$, then $j^{\ast}$ becomes covered, hence preventing any other agent from visiting $j^{\ast}$ prior to agent $a$'s subsequent departure from $j^{\ast}$.

Based on this discussion, we define the following two additional \emph{neighbor-induced local events} triggered at $j\in\mathcal{N}_{i}$ and affecting an agent $a$ residing at target $i$:

\textbf{4. Covering Event }$C_{j}\mathbf{,}$ $j\in\mathcal{N}_{i}$: This event causes $\mathcal{N}_{i}(t)$ to be modified to $\mathcal{N}_{i}(t^{-})\backslash\{j\}$.

\textbf{5. Uncovering Event }$\bar{C}_{j}\mathbf{,}$ $j\in\mathcal{N}_{i}$: This event causes $\mathcal{N}_{i}(t)$ to be modified to $\mathcal{N}_{i}(t^{-})\cup\{j\}$.

If one of these two events takes place while an agent residing at target $i$ is either active or inactive, then the RHCP \eqref{Eq:RHCGenSolStep1}-\eqref{Eq:RHCGenSolStep2} is re-solved to account for the updated $\mathcal{N}_{i}(t)$. This may affect the values of the optimal solution $U_{i}^{\ast}$ from the previous solution. Note, however, that the new solution will still give rise to an event $[h\rightarrow u_{i}^{\ast}]$ (if the RHCP is solved while the agent is active) or $[h\rightarrow v_{i}^{\ast}]$ (if the RHCP is solved while the agent is inactive).

The nature of the events we have defined and the event timeline (Fig. \ref{Fig:OneVisitTimeline}), together with the fact that all transit times are non-zero, ensures that: (i) each agent can only have a finite number of events between each of its arrival and subsequent departure events, and (ii) each agent has to spend a finite time between each of its departure and subsequent arrival events. Therefore, Zeno behaviors (where an infinite number of events occur in finite time) do not occur in this setting with the proposed RHC architecture.

\paragraph{\textbf{Computing }$\bar{J}_i$} 

The existence of multiple agents hinders the ability to analytically express the function $\bar{J}_i(t,t+w)$ involved in the RHCP as it requires the agent $a$ (whom is residing in $i$ at $t$ planning a trajectory that visits neighbor $j$) to have the knowledge of the events that will occur at each neighbor $m\in \mathcal{N}_i\backslash\{j\}$ during the future time period $[t,t+w)$ (see \eqref{Eq:RHCNewChoices}, \eqref{Eq:LocalObjectiveFunction} and Theorem \ref{Th:Contribution}). 

However, this task becomes tractable when the aforementioned neighbor-set modification in \eqref{Eq:ReducedNeighborhood} is employed. For example, upon using \eqref{Eq:ReducedNeighborhood}, if some neighbor $m\in\mathcal{N}_i(t)$, then, there is no other agent residing in or en route to target $m$ at $t$. Therefore, clearly, $\dot{R}_m(\tau) = A_m$ for the  period $\tau\in[t,t+r)$ where $r \geq \min_{q\in \mathcal{N}_m} \rho_{qm}$. Now, if $[t,t+r)\subseteq[t,t+w)$, projections are used to estimate the remaining portion of the  $R_m(\tau)$ profile (i.e. for $\tau \in [t+r,t+w]$). This enables expressing $\bar{J}_i(t,t+w)$ analytically.

\paragraph{\textbf{Three Forms of The RHCP}}

It is clear from this discussion that the exact form of the RHCP to be solved at time $t$ depends on the event that triggered the end of the previous action horizon (i.e., the event occurring at time $t$) and the target state $R_{i}(t)$. In particular, there are three possible forms of the RHCP \eqref{Eq:RHCGenSolStep1}-\eqref{Eq:RHCGenSolStep2}: 

\textbf{RHCP1:} This problem is solved by an agent arriving at target $i$, i.e., when an event $[h\rightarrow\rho_{ki}]$ occurs at time $t$ for any $k\in\mathcal{N}_{i}(t)$. The solution $U_{i}^{\ast}(t)$ includes $u_{i}^{\ast}(t)\geq0$, representing the amount of time that the agent should be active at target $i$. This problem is also solved while the agent is active at target $i$ (i.e., while $R_{i}(t)>0$) if a $C_{j}$ or $\bar{C}_{j}$ event occurs for any $j\in\mathcal{N}_{i}(t)$.

\textbf{RHCP2: } This problem is solved by an agent residing at target $i$ when an event $[h\rightarrow u_{i}^{\ast}]$ occurs at time $t$ with $R_{i}(t)=0$. This problem is also solved while the agent is inactive (i.e., $R_{i}(t)=0$) at $i$ if a $C_{j}$ or $\bar{C}_{j}$ event occurs for any $j\in\mathcal{N}_{i}(t)$. In both cases, the solution $U_{i}^{\ast}(t)$ is now constrained to include $u_{i}^{\ast}(t)=0$ by default, since the agent can no longer be active at target $i$.

\textbf{RHCP3: } This problem is solved by an agent departing from target $i$ and may be triggered by one of two events: (i) Event $[h \rightarrow u_{i}^{\ast}]$ at time $t$ with $R_{i}(t)>0$. The solution $U_{i}^{\ast}(t)$ is constrained to include $u_{i}^{\ast}(t)=0$ by default; in addition, it is constrained to have $v_{i}^{\ast}(t)=0$ since the agent ceases being active while $R_{i}(t)>0$, implying that it must immediately depart from target $i$ without becoming inactive. (ii) Event $[h\rightarrow v_{i}^{\ast}]$ at time $t$, implying that the agent is no longer inactive and must depart from target $i$. As in case (i), the solution $U_{i}^{\ast}(t)$ is constrained to have both $u_{i}^{\ast}(t)=0$ and $v_{i}^{\ast}(t)=0$ by default.

\paragraph{\textbf{Complexity of RHCPs}}

As we will show next, all three problem forms of the RHCP discussed above can be solved to obtain the corresponding globally optimal solutions in closed form. Therefore, their complexity is constant and the overall RHC complexity scales linearly with the number of events occurring in $[0,T]$.

\section{Solving the Event-Driven Receding Horizon Control Problems}
\label{Sec:SolvingRHCPs}

In this section, we present the solutions to the identified three forms  of RHCPs discussed above. We begin with \textbf{RHCP3} due to its relative simplicity.

\subsection{Solution of \textbf{RHCP3}}

Recall that an agent solves \textbf{RHCP3} when it is ready to leave the target where it resides. Therefore, $u_{i}^{\ast}(t)=0$ and $v_{i}^{\ast}(t)=0$ by default and $U_{ij}$ in \eqref{Eq:RHCGenSolStep1} is reduced to $U_{ij}=[u_{j},v_{j}]$ with $U_{i}(t)=[j,u_{j},v_{j}]$. The obtained $j^{\ast}(t)$ directly defines the next destination to visit. Clearly, \textbf{RHCP3} plays a crucial role in defining agent trajectories in terms of targets visited.

The variable horizon $w$ \eqref{Eq:VariableHorizon} for this case is $w=\rho_{ij}+u_{j}+v_{j}$ and, from \eqref{Eq:RHCNewChoices}, $w$ is constrained so that $\rho_{ij}\leq w\leq H$. Therefore, $\rho_{ij}\leq H$ is assumed henceforth in this section.

\paragraph{\textbf{Constraints}}

We begin with identifying an upper bound for the active time control variable $u_{j}$. This is the maximum active time possible at target $j$, which is defined by the condition $R_{j}(t+\rho_{ij}+u_{j})=0$. Denoting this upper-bound by $u_{j}^{B}$, it follows from \eqref{Eq:TargetDynamics} that
\begin{equation}
u_{j}^{B}(t)\triangleq\frac{R_{j}(t+\rho_{ij})}{B_{j}-A_{j}}=\frac
{R_{j}(t)+A_{j}\rho_{ij}}{B_{j}-A_{j}}. \label{Eq:Lambdaj00}%
\end{equation}
Note that the dependence of $u_{j}^{B}(t)$ on $t$ captures its dependence on the initial condition $R_{j}(t)$; for notational simplicity, we shall henceforth omit this time dependence. A tighter upper-bound than $u_{j}^{B}$ on $u_{j}$, as well as an upper-bound on $v_{j}$, denoted respectively by $\bar{u}_{j}$ and $\bar{v}_{j}$ are imposed by the variable horizon constraint $w=u_{j}+v_{j}+\rho_{ij}\leq H$ as follows:
\begin{equation} 
\bar{u}_{j}\triangleq\min\{u_{j}^{B},\ H-\rho_{ij}\}\ \ \mbox{ and }\ \ \bar
{v}_{j}\triangleq H-(\rho_{ij}+u_{j}^{B}). \label{Eq:Lambdaj000}%
\end{equation}
Moreover, in order to have a positive inactive time $v_{j}>0$ a necessary condition is that it first spends the maximum active time possible $u_{j}=u_{j}^{B}$. Therefore, we now see that any feasible pair $U_{ij}=[u_{j},v_{j}]\in\mathbb{U}$ in \eqref{Eq:RHCGenSolStep1} belongs to one of the two constraint sets:
\begin{equation}
\mathbb{U}_{1}=\{0\leq u_{j}\leq\bar{u}_{j},\ v_{j}=0\}\mbox{ or }\mathbb{U}%
_{2}=\{u_{j}=u_{j}^{B},\ 0\leq v_{j}\leq\bar{v}_{j}\}, \label{Eq:Constraints1}%
\end{equation}
where $u_{j}^{B},\bar{u}_{j},\bar{v}_{j}$ are given in \eqref{Eq:Lambdaj00},\eqref{Eq:Lambdaj000} and $U_{ij}=[u_{j}^{B},0]$ is allowed to be a feasible control in both sets.

\paragraph{\textbf{Objective}}

Following from \eqref{Eq:RHCNewChoices}, the objective function corresponding to \textbf{RHCP3} is taken as 
$$J_{H}(U_{ij})=J_{H}(X_{i}(t),[0,0,U_{ij}];H)=\frac{1}{w}\bar{J}_{i}(t,t+w).$$
To obtain an exact expression for $J_{H}(U_{ij})$, first the local objective function $\bar{J}_{i}$ is decomposed using \eqref{Eq:LocalObjectiveFunction}:
\begin{equation}
\bar{J}_{i}=J_{j}+\sum_{m\in\bar{\mathcal{N}}_{i}\backslash\{j\}}J_{m}.
\label{Eq:ObjDerivationOP3Step1}%
\end{equation}
Considering the state trajectories shown in Fig. \ref{Fig:RHCP3Graphs} for the case where agent $a$ goes from target $i$ to target $j$ with decisions $u_{j}$ and $v_{j}$, both $J_{j}$ and $J_{m}$ terms in \eqref{Eq:ObjDerivationOP3Step1} are evaluated for the period $[t,t+w)$ using Theorem \ref{Th:Contribution} as 
\begin{align*}
    J_j &=
    \frac{\rho_{ij}}{T}\left[2R_j(t) + A_j\rho_{ij}\right]  + \frac{u_j}{T}\left[ 2(R_j(t) + A_j\rho_{ij})\right.  \\
    &-\left. (B_j-A_j)u_j \right] \mbox{ and }\\
    J_m &= 
    \frac{(\rho_{ij}+u_j+v_j)}{T}\left[2R_m(t) + A_m(\rho_{ij}+u_j+v_j)\right].
\end{align*}

\begin{figure}[!h]
    \centering
    \includegraphics[width=3in]{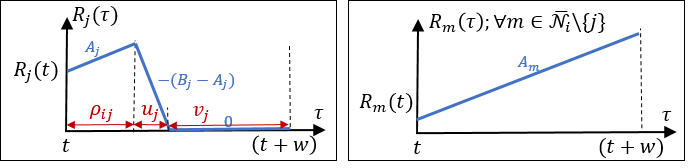}
    \caption{State trajectories during $[t,t+w)$ for the \textbf{RHCP3}.}
    \label{Fig:RHCP3Graphs}
\end{figure}

Now, combining the above two results and substituting it in \eqref{Eq:ObjDerivationOP3Step1} gives the complete objective function $J_H(U_{ij})$ as
\begin{equation}\label{Eq:OP3ObjectiveSimplified}
J_H(u_j,v_j) = \frac{C_{1} u_j^2 + C_{2} v_j^2 + C_{3} u_jv_j + C_{4} u_j + C_{5} v_j + C_{6}}{\rho_{ij}+u_j+v_j},
\end{equation}
where,
\begin{equation*}\label{Eq:OneStepCaseA}
\begin{aligned}
C_{1} &= \frac{1}{2}\left[\bar{A}-B_j\right],\ 
C_{2} = \frac{\bar{A}_j}{2},\ 
C_{3} = \bar{A}_j,\ 
C_{4} = \left[\bar{R}(t)+\bar{A}\rho_{ij}\right],\\ 
C_{5} &= \left[\bar{R}_j(t)+\bar{A}_j\rho_{ij}\right],\  
C_{6} = \frac{\rho_{ij}}{2}\left[2\bar{R}(t) + \bar{A}\rho_{ij}\right]
\end{aligned}
\end{equation*}
and (the neighbor-set parameters)
\begin{equation} \label{Eq:NeighborhoodParameters}
    \begin{aligned}
    \bar{A}_{ij} = \sum_{m\in \mathcal{N}_i \backslash \{j\}} A_m,\ \ \ \ 
    \bar{R}_{ij}(t) = \sum_{m\in \mathcal{N}_i \backslash \{j\}} R_m(t),\\
    \bar{A}_i = \bar{A}_{ij} + A_j,\ \ \bar{A}_j = \bar{A}_{ij} + A_i,\ \ 
    \bar{A} = \bar{A}_{ij} + A_i + A_j,\\
    \bar{R}_i = \bar{R}_{ij} + R_j,\ \ \bar{R}_j = \bar{R}_{ij} + R_i,\ \ 
    \bar{R} = \bar{R}_{ij} + R_i + R_j.\\
    \end{aligned}
\end{equation}
Note that $C_{i}\geq0$ for all $i$ except $C_{1}$ which is non-negative only when $B_{j}\leq\bar{A}$.

\paragraph{\textbf{Solving \textbf{RHCP3} for Optimal Control} $(u_{j}^{\ast},v_{j}^{\ast})$}

Based on the first step of RHCP \eqref{Eq:RHCGenSolStep1}, $(u_{j}^{*},v_{j}^{*})$ is given by 
\begin{equation}
\label{Eq:OP3_Formal}(u_{j}^{*},v_{j}^{*}) =\ \underset{(u_{j},v_{j})}%
{\arg\min}\ {J_{H}(u_{j},v_{j}),}%
\end{equation}
where $(u_{j},v_{j}) \in\mathbb{U}_{1}$ or $(u_{j},v_{j}) \in\mathbb{U}_{2}$ as in \eqref{Eq:Constraints1}.

\paragraph*{\textbf{- Case 1}}
$(u_{j},v_{j})\in\mathbb{U}_{1}=\{0\leq u_{j}\leq\bar{u}_{j},\ v_{j}=0\}$. Clearly, $v_{j}^{\ast}=0$ and \eqref{Eq:OP3_Formal} takes the form:
\begin{equation}
\begin{aligned} 
u_j^* =\  &\underset{u_j}{\arg\min}\ {J_H(u_j,0).} 
\\
&0 \leq u_j \leq \bar{u}_j
\end{aligned} \label{Eq:OP3_FormalPart1}%
\end{equation}

\begin{lemma}
\label{Lm:OP3_FormalPart1} The unique optimal solution of \eqref{Eq:OP3_FormalPart1} is 
\begin{equation}
u_{j}^{\ast}=%
\begin{cases}
\bar{u}_{j} & \mbox{ if }\bar{u}_{j}\geq u_{j}^{\#}\mbox{ and }\bar{A}%
<B_{j},\\
0 & \mbox{ otherwise,}
\end{cases}
\label{Eq:OP3_FormalPart1Sol}%
\end{equation}
where
\begin{equation}
u_{j}^{\#}=\frac{\bar{A}\rho_{ij}}{B_{i}-\bar{A}}.
\label{Eq:OP3_FormalPart1UCrit}%
\end{equation}
\end{lemma}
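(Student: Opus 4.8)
The plan is to view \eqref{Eq:OP3_FormalPart1} as a one-dimensional minimization of the rational function $f(u_j)\triangleq J_H(u_j,0)=\frac{C_1u_j^2+C_4u_j+C_6}{\rho_{ij}+u_j}$ over the compact interval $[0,\bar{u}_j]$, and to locate the minimizer through a sign analysis of $f'$. Since transit times are strictly positive, $\rho_{ij}+u_j>0$ throughout the interval, $f$ is differentiable, and the sign of $f'(u_j)$ coincides with that of the numerator of $f'$, which after cancellation reduces to the quadratic $g(u_j)\triangleq C_1u_j^2+2C_1\rho_{ij}u_j+(C_4\rho_{ij}-C_6)$. Substituting the explicit $C_4$ and $C_6$ from \eqref{Eq:OP3ObjectiveSimplified} gives $C_4\rho_{ij}-C_6=\frac{1}{2}\bar{A}\rho_{ij}^2\ge 0$, so $g(0)\ge 0$. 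If $\bar{A}\ge B_j$ (i.e. $C_1\ge 0$), then $\bar{A}\ge B_j>0$ (the positivity of $B_j$ coming from Assumption \ref{As:TargetUncertaintyRateInequality}), hence $g(0)=\frac{1}{2}\bar{A}\rho_{ij}^2>0$; together with $C_1\ge 0$ this gives $g>0$ and so $f'>0$ on $[0,\infty)$, i.e. $f$ is strictly increasing, and its unique minimizer is $u_j^{\ast}=0$, the ``otherwise'' branch of \eqref{Eq:OP3_FormalPart1Sol}.

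The more delicate case is $\bar{A}<B_j$, i.e. $C_1<0$, where $g$ is a strictly concave parabola. The product of its roots, $(C_4\rho_{ij}-C_6)/C_1\le 0$, forces them to be real and of opposite sign (or one equal to zero), so $g$ has a unique nonnegative root $\mu$, with $g>0$ on $[0,\mu)$ and $g<0$ on $(\mu,\infty)$. Consequently $f$ is strictly increasing on $[0,\mu]$ and strictly decreasing on $[\mu,\infty)$, so the single stationary point $\mu$ of $f$ is a local \emph{maximum}, not a minimum, and the minimum of $f$ over $[0,\bar{u}_j]$ must be attained at one of the endpoints $0$ or $\bar{u}_j$. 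Establishing this ``no interior minimum'' geometry cleanly is, I expect, the main obstacle; once it is in hand, only an endpoint comparison remains.

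For that comparison I would solve $f(u_j)=f(0)$: clearing denominators factors the equation as $u_j\big(\rho_{ij}C_1u_j+\rho_{ij}C_4-C_6\big)=0$, whose root other than $u_j=0$ is exactly $u_j^{\#}$ in \eqref{Eq:OP3_FormalPart1UCrit}. Because $f$ rises from $f(0)$ to the local maximum $f(\mu)>f(0)$ and then decreases monotonically, with $f(u_j)\to-\infty$ as $u_j\to\infty$ since $C_1<0$, this crossing satisfies $u_j^{\#}\ge\mu$, and moreover $f(u_j)>f(0)$ for all $u_j\in(0,u_j^{\#})$ while $f(u_j)<f(0)$ for all $u_j>u_j^{\#}$. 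Hence $f(\bar{u}_j)<f(0)$, so that $u_j^{\ast}=\bar{u}_j$ is the unique minimizer on $[0,\bar{u}_j]$, precisely when $\bar{u}_j>u_j^{\#}$, and $u_j^{\ast}=0$ otherwise; combined with the first case this is \eqref{Eq:OP3_FormalPart1Sol}. The borderline $\bar{u}_j=u_j^{\#}$, at which both endpoints attain the minimum, is the only non-generic configuration and is covered by the stated convention of selecting $\bar{u}_j$.
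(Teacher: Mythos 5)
Your proof is correct and follows essentially the same route as the paper's: a sign/monotonicity analysis of $J_H(u_j,0)$ showing it first increases and then decreases (so the minimum must sit at an endpoint of $[0,\bar{u}_j]$), followed by the endpoint comparison through the break-even point where $J_H(u_j,0)=J_H(0,0)$, which is precisely how the paper obtains $u_j^{\#}$. Your explicit root computation also confirms that the denominator in \eqref{Eq:OP3_FormalPart1UCrit} should read $B_j-\bar{A}$ rather than $B_i-\bar{A}$ (a typo in the statement), consistent with the paper's own derivation from $C_1$, $C_4$ and $C_6$.
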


\begin{proof}
Using \eqref{Eq:OP3ObjectiveSimplified}, first and second order derivatives of $J_H(u_j,0)$  can be obtained respectively as $J^{\prime}(u_j)$ and $J^{\prime\prime}(u_j)$, where 
\begin{eqnarray*}
    J^{\prime}(u_j) &=& 
    \frac{\bar{A}-B_j}{2} + \frac{B_j \rho_{ij}^2}{2(\rho_{ij}+u_j)^2} \mbox{ and }\\  
    J^{\prime\prime}(u_j) &=& 
    -\frac{B_j\rho_{ij}^2}{(\rho_{ij}+u_j)^3}.
\end{eqnarray*}
Notice that $J^{\prime}(0)>0$ and $J^{\prime\prime}(u_j)<0, \forall u_j \geq 0$. This implies that $J^{\prime}(u_j)$ is monotonically decreasing with $u_j \geq 0$. Also note that $\lim_{u_j\rightarrow \infty} J^{\prime}(u_j) = \frac{\bar{A}-B_j}{2}$. 

Therefore, for the case where $\bar{A} \geq B_j$, the objective $J_H(u_j,0)$ is monotonically increasing with $u_j$. Hence $u_j^* = 0$ in \eqref{Eq:OP3_FormalPart1}.

For the case where $\bar{A}<B_j$, the limiting value of $J^{\prime}(u_j)$ is negative. This implies an existence of a maximum (of $J_H(u_j,0)$) at some $u_j \geq 0$. However, such a maximizing $u_j$ value is irrelevant to  \eqref{Eq:OP3_FormalPart1}. Nevertheless, a crucial $u_j$ value is located at the point where $J_H(0,0) = J_H(u_j,0)$ occurs. Using \eqref{Eq:OP3ObjectiveSimplified}, this can be determined as $u_j = \frac{C_6-C_4\rho_{ij}}{C_1}$ which simplifies to $u_j = u_j^{\#}$ where $u_j^{\#}$ is given in \eqref{Eq:OP3_FormalPart1UCrit}. According to the nature of $J^{\prime}(u_j)$ and $J^{\prime\prime}(u_j)$, it is clear that $J_H(u_j,0)$ should be decreasing with $u_j \geq u_j^{\#}$ (below $J_H(0,0)$ value). Therefore, when $\bar{u}_j \geq u_j^{\#}$ (and $\bar{A}<B_j$), $u_j^* = \bar{u}_j$ in \eqref{Eq:OP3_FormalPart1}. 
\end{proof}

Note that $u_{j}^{\#}$ in \eqref{Eq:OP3_FormalPart1UCrit} is completely known to the agent and it can be thought of as a break-even point for $u_{j}$, where if $\bar{u}_{j}$ allows $u_{j}$ to increase beyond the $u_{j}^{\#}$ value, it is always optimal to do so by choosing the extreme point $u_{j}^{\ast}=\bar{u}_{j}$. It is also worth pointing out that Lemma \ref{Lm:OP3_FormalPart1} still holds for \emph{time-varying} target parameters or the transit time $\rho_{ij}$. However, in such a case, the solution $u_{j}^{\ast}$ in \eqref{Eq:OP3_FormalPart1} will become time-dependent as it has to switch between $\bar{u}_{j}$ and $0$ depending on the conditions $\bar{u}_{j} \geq u_{j}^{\#}$ and $\bar{A}\leq B_{j}$.

\begin{remark}
When $H$ is sufficiently large, according to \eqref{Eq:Constraints1} and \eqref{Eq:Lambdaj00}, $\bar{u}_{j}=u_{j}^{B}=\frac{R_{j}(t)+A_{j}\rho_{ij}}{B_{j}-A_{j}}$. Therefore, the condition $\bar{u}_{j}\geq u_{j}^{\#}$ used in \eqref{Eq:OP3_FormalPart1} becomes explicitly dependent on the target state $R_{j}(t)$:
\begin{equation}
u_{j}^{\ast}=
\begin{cases}
\bar{u}_{j} & \mbox{ if }R_{j}(t)\geq\rho_{ij}\left[  \frac{B_{j}-A_{j}}%
{B_{j}-\bar{A}}\cdot\bar{A}-A_{j}\right]  \mbox{ and }\bar{A}<B_{j}\\
0 & \mbox{ otherwise.}
\end{cases}
\label{Eq:OP3_FormalPart1SolStateDep}%
\end{equation}
\end{remark}

\paragraph*{\textbf{- Case 2}} $(u_{j},v_{j})\in\mathbb{U}_{2}=\{u_{j}=u_{j}^{B},\ 0 \leq v_{j} \leq\bar{v}_{j}\}$. Obviously, $u_{j}^{\ast}=u_{j}^{B}$ and \eqref{Eq:OP3_Formal} takes the form: 
\begin{equation}
\begin{aligned} v_j^* =\  &\underset{v_j}{\arg\min}\ {J_H(u_j^B,v_j).} \\
&0 \leq v_j \leq \bar{v}_j
\end{aligned} \label{Eq:OP3_FormalPart2}%
\end{equation}

\begin{lemma}
\label{Lm:OP3_FormalPart2} The unique optimal solution of \eqref{Eq:OP3_FormalPart2} is 
\begin{equation}
v_{j}^{\ast} =
\begin{cases}
0 & \mbox{ if }\bar{A}\geq B_{j}\left[  1-\frac{\rho_{ij}^{2}}{(\rho
_{ij}+u_{j}^{B})^{2}}\right] \\
\min\{v_{j}^{\#},\,\bar{v}_{j}\} & \mbox{ otherwise, }
\end{cases}
\label{Eq:OP3_FormalPart2Sol}%
\end{equation}
where
\begin{equation}
v_{j}^{\#}=\sqrt{\frac{(B_{j}-A_{j})(\rho_{ij}+u_{j}^{B})^{2}-B_{j}\rho
_{ij}^{2}}{\bar{A}_{j}}}-(\rho_{ij}+u_{j}^{B}).
\label{Eq:OP3_FormalPart2VCrit}%
\end{equation}
\end{lemma}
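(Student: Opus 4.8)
The plan is to follow the same philosophy as the proof of Lemma \ref{Lm:OP3_FormalPart1} (reduce to a one‑dimensional problem, characterize the shape of the objective, then clip to the feasible interval), but the decisive step will be a change of variables that turns the ratio objective into an \emph{affine‑plus‑reciprocal} function. First I would substitute $u_{j}=u_{j}^{B}$ into \eqref{Eq:OP3ObjectiveSimplified}, which collapses $J_{H}$ into a ratio of a quadratic to an affine function of the single remaining variable $v_{j}$,
\[
J_{H}(u_{j}^{B},v_{j})=\frac{a\,v_{j}^{2}+b\,v_{j}+c}{p+v_{j}},
\]
with $p\triangleq\rho_{ij}+u_{j}^{B}$, $a\triangleq C_{2}=\bar{A}_{j}/2$, $b\triangleq C_{3}u_{j}^{B}+C_{5}$ and $c\triangleq C_{1}(u_{j}^{B})^{2}+C_{4}u_{j}^{B}+C_{6}$, using the coefficients $C_{1},\dots,C_{6}$ defined below \eqref{Eq:OP3ObjectiveSimplified}. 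Then I would set $z=p+v_{j}$, so that $z$ ranges over $[\,p,\,p+\bar{v}_{j}\,]$ as $v_{j}$ ranges over $[0,\bar{v}_{j}]$, and rewrite the objective as $J_{H}=a z+(b-2ap)+\gamma/z$ with $\gamma\triangleq a p^{2}-b p+c$. From this form, $\tfrac{d}{dz}J_{H}=a-\gamma/z^{2}$ and $\tfrac{d^{2}}{dz^{2}}J_{H}=2\gamma/z^{3}$, so the sign of $\gamma$ completely determines the shape of the objective on $z>0$.

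Next I would split into the two regimes of the statement. If $\gamma\le 0$, then (since $a=\bar{A}_{j}/2>0$) $J_{H}$ is strictly increasing in $z$, hence strictly increasing in $v_{j}\ge 0$, so the unique minimizer over $[0,\bar{v}_{j}]$ is its left endpoint $v_{j}^{\ast}=0$. If $\gamma>0$, then $J_{H}$ is strictly convex on $z>0$ with unique unconstrained minimizer $z^{\ast}=\sqrt{\gamma/a}$, equivalently $v_{j}^{\#}=\sqrt{\gamma/a}-p$; clipping to the box $[0,\bar{v}_{j}]$ yields $v_{j}^{\ast}=0$ when $v_{j}^{\#}\le 0$ and $v_{j}^{\ast}=\min\{v_{j}^{\#},\bar{v}_{j}\}$ otherwise. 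The two sub‑cases giving $v_{j}^{\ast}=0$ (namely $\gamma\le 0$, and $0<\gamma\le a p^{2}$) merge into the single inequality $\gamma\le a p^{2}$, and strict monotonicity/convexity in each regime gives uniqueness. Thus the lemma's case split reduces to comparing $\gamma$ with $a p^{2}$ and, in the complementary case, to reading off $v_{j}^{\#}$.

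The remaining work is purely algebraic, and I expect it to be the only delicate part: showing that $\gamma$ simplifies to $\gamma=\tfrac12\!\left[(B_{j}-A_{j})p^{2}-B_{j}\rho_{ij}^{2}\right]$. I would obtain this by expanding $a p^{2}-b p+c$ with the definitions of $C_{1},\dots,C_{6}$, applying the neighbor‑set identities $\bar{A}=\bar{A}_{j}+A_{j}$ and $\bar{R}=\bar{R}_{j}+R_{j}$ from \eqref{Eq:NeighborhoodParameters}, and finally the arrival‑state relation $R_{j}(t)=(B_{j}-A_{j})u_{j}^{B}-A_{j}\rho_{ij}$ implied by \eqref{Eq:Lambdaj00}; after the cancellations all the neighbor terms $\bar{A}_{j},\bar{R}_{j}$ drop out and only the stated expression survives. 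Dividing by $a=\bar{A}_{j}/2$ then gives $\gamma/a=[(B_{j}-A_{j})p^{2}-B_{j}\rho_{ij}^{2}]/\bar{A}_{j}$, so $v_{j}^{\#}$ matches \eqref{Eq:OP3_FormalPart2VCrit}, and the threshold $\gamma\le a p^{2}$ rearranges to $B_{j}(p^{2}-\rho_{ij}^{2})\le \bar{A}\,p^{2}$, i.e. $\bar{A}\ge B_{j}\bigl[1-\rho_{ij}^{2}/(\rho_{ij}+u_{j}^{B})^{2}\bigr]$, which is exactly the condition in \eqref{Eq:OP3_FormalPart2Sol}. Together with the bound $\bar{v}_{j}$ from \eqref{Eq:Lambdaj000} and \eqref{Eq:Constraints1} for the clipped case, this establishes \eqref{Eq:OP3_FormalPart2Sol}. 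A minor caveat worth noting separately is the degenerate case $\bar{A}_{j}=0$ (all of target $i$'s relevant arrival rates vanish), where $J_{H}(u_{j}^{B},\cdot)$ is a Möbius function of $v_{j}$ and the optimum is still an endpoint of $[0,\bar{v}_{j}]$; otherwise $\bar{A}_{j}>0$ and the argument above applies verbatim.
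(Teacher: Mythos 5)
Your proposal is correct and follows essentially the same route as the paper: fix $u_{j}=u_{j}^{B}$, reduce to a one-variable problem in $v_{j}$, use convexity/monotonicity governed by the derivative at $v_{j}=0$ (your threshold $\gamma\leq a p^{2}$ is exactly the paper's condition $J^{\prime}(0)\geq 0$), locate the interior stationary point $v_{j}^{\#}$, and clip to $[0,\bar{v}_{j}]$. The substitution $z=\rho_{ij}+u_{j}^{B}+v_{j}$ is merely a cleaner way of computing the derivatives the paper computes directly, and your algebraic identity $\gamma=\tfrac{1}{2}\left[(B_{j}-A_{j})(\rho_{ij}+u_{j}^{B})^{2}-B_{j}\rho_{ij}^{2}\right]$ checks out, so both $v_{j}^{\#}$ and the case condition coincide with \eqref{Eq:OP3_FormalPart2Sol}--\eqref{Eq:OP3_FormalPart2VCrit}.
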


\begin{proof}
Similar to the proof of Lemma \ref{Lm:OP3_FormalPart1}, using \eqref{Eq:OP3ObjectiveSimplified}, first and second order derivatives of $J_H(u_j^B,v_j)$ with respect to $v_j$ can be obtained respectively as $J^{\prime}(v_j)$ and $J^{\prime\prime}(v_j)$, where   
\begin{eqnarray*}
    J^{\prime}(0) &=& 
    \frac{\bar{A}}{2} - \frac{B_j}{2}\left[1-\frac{\rho_{ij}^2}{(\rho_{ij}+u_j^B)^2}\right] \mbox{ and }\\ J^{\prime\prime}(v_j) &=&
    \frac{R_j^2(t)+2B_j\rho_{ij}R_j(t)+A_jB_j\rho_{ij}^2}{(B_j-A_j)(\rho_{ij}+u_j^B+v_j)}.
\end{eqnarray*} 
Note that $J^{\prime\prime}(v_j)>0$ for all $v_j \geq 0$. This implies that $J_H(u_j^B,v_j)$ is convex in the positive orthant of $v_j$, and  $J^{\prime}(v_j)$ is increasing with $v_j \geq 0$ starting from $J^{\prime}(0)$ given above. 

Now, if $J^{\prime}(0) \geq 0$, it implies that $J_H(u_j^B,v_j)$ is monotonically increasing with $v_j \geq 0$. Therefore, for such a case, $v_j^* = 0$ and it proves the first case in \eqref{Eq:OP3_FormalPart2Sol}.

When $J^{\prime}(0) < 0$, there should exist a minimum to $J_H(u_j^B,v_j)$ at some $v_j \geq 0$. Using calculus, the minimizing $v_j$ value can be found easily as $v_j=v_j^{\#}$ given in \eqref{Eq:OP3_FormalPart2VCrit}.

Now, based on the constraint $0\leq v_j\leq\bar{v}_j$ in \eqref{Eq:OP3_FormalPart2Sol} and the convexity of $J_H(u_j^B,v_j)$, it is clear that whenever $v_j^{\#} \leq \bar{v}_j \implies v_j^* = v_j^{\#}$ in \eqref{Eq:OP3_FormalPart2Sol} and whenever $v_j^{\#} > \bar{v}_j \implies v_j^*=\bar{v}_j$. This proves the second case in \eqref{Eq:OP3_FormalPart2Sol}.
\end{proof}

Similar to $u_{j}^{\#}$ given in \eqref{Eq:OP3_FormalPart1UCrit}, $v_{j}^{\#}$ in \eqref{Eq:OP3_FormalPart2VCrit} is completely known to the agent. However, unlike $u_{j}^{\#}$, $v_{j}^{\#}$ represents an optimal choice available for $v_{j}$. Therefore, whenever the constraints on $v_{j}$ in \eqref{Eq:OP3_FormalPart2} (i.e., $0\leq v_{j}\leq\bar{v}_{j}$) allow it, $v_{j}^{\ast}=v_{j}^{\#}$ should be chosen. Moreover, note that similar to Lemma \ref{Lm:OP3_FormalPart1}, Lemma \ref{Lm:OP3_FormalPart2} also holds for time-varying system parameters.

\begin{remark}
The terms $u_{j}^{B}$ and $\bar{v}_{j}$ involved in \eqref{Eq:OP3_FormalPart2Sol} can be simplified (using \eqref{Eq:Lambdaj00} and \eqref{Eq:Constraints1} respectively) to illustrate the state dependent nature of $v_{j}^{\ast}$ as follows:
\begin{equation} 
v_{j}^{\ast} =
\begin{cases}
0 & \mbox{ if }\bar{A}\geq B_{j}\ \mbox{ or }R_{j}(t)\leq\left[  \frac
{\rho_{ij}(B_{j}-A_{j})\sqrt{B_{j}}} {\sqrt{B_{j}-\bar{A}}}-\rho_{ij}
B_{j}\right] \\
v_{j}^{\#} &
\begin{split}
\mbox{ else if }R_{j}(t)  &  \leq\Big[\sqrt{(B_{j}-A_{j})(H^{2}\bar{A}%
_{j}+\rho_{ij}^{2}B_{j})}\\
&  -\rho_{ij}B_{j}\Big]
\end{split}
\\
\bar{v}_{j} & \mbox{ otherwise. }
\end{cases}
\label{Eq:OP3_FormalPart2StateDep}%
\end{equation}
\end{remark}

Cases 1 and 2 discussed above can be combined to yield the following result.

\begin{theorem}
\label{Th:OP3FormalCombined} The optimal solution of \eqref{Eq:OP3_Formal} is
\begin{equation}
\label{Eq:OP3FormalCombined}(u_{j}^{*},v_{j}^{*})=
\begin{cases}
(0,0) & \mbox{ if } u_{j}^{\#} > \bar{u}_{j} \mbox{ or } \bar{A} \geq B_{j}\\
(\bar{u}_{j},0) & \mbox{ else if } \bar{u}_{j} < u_{j}^{B}\\
(u_{j}^{B},0) & \mbox{ else if } B_{j} > \bar{A} \geq B_{j}\left[
1-\frac{\rho_{ij}^{2}}{(\rho_{ij}+u_{j}^{B})^{2}}\right] \\
(u_{j}^{B},v_{j}^{\#}) & \mbox{ else if } v_{j}^{\#} \leq\bar{v}_{j}\\
(u_{j}^{B},\bar{v}_{j}) & \mbox{ otherwise,}
\end{cases}
\end{equation}
where $u_{j}^{\#}$ is given in \eqref{Eq:OP3_FormalPart1UCrit} and $v_{j}^{\#}$ is given in \eqref{Eq:OP3_FormalPart2VCrit}.
\end{theorem}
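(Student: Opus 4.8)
The plan is to obtain the global minimizer of \eqref{Eq:OP3_Formal} by separately solving the two sub-problems over $\mathbb{U}_1$ and $\mathbb{U}_2$ -- which Lemmas~\ref{Lm:OP3_FormalPart1} and \ref{Lm:OP3_FormalPart2} already do -- and then picking whichever candidate has the smaller objective value, since $\mathbb{U} = \mathbb{U}_1 \cup \mathbb{U}_2$. Two structural observations should make this selection essentially mechanical: (i) by \eqref{Eq:Lambdaj000}, $\mathbb{U}_2$ is nonempty only when $\bar u_j = u_j^B$ (otherwise $\bar v_j < 0$); and (ii) whenever $\mathbb{U}_2 \neq \varnothing$ the two sets meet only at the single point $(u_j^B, 0)$, so $J_H$ is continuous along the ``L-shaped'' feasible set, and along that whole set $w = \rho_{ij} + u_j + v_j$ equals $\rho_{ij}$ plus the arclength measured from $(0,0)$, so that $J_H = \bar J_i/w$ is a ratio whose denominator increases linearly along the path.

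Next I would walk through the cases dictated by Lemma~\ref{Lm:OP3_FormalPart1}. If $u_j^\# > \bar u_j$ or $\bar A \geq B_j$, its solution says the $\mathbb{U}_1$-optimum is $(0,0)$; otherwise the $\mathbb{U}_1$-optimum is $(\bar u_j, 0)$. In the latter situation, if $\bar u_j < u_j^B$ then observation~(i) gives $\mathbb{U}_2 = \varnothing$, so $(\bar u_j, 0)$ is already the global optimum, which is Case~2 of the claim. If instead $\bar u_j = u_j^B$, the $\mathbb{U}_1$-optimum is the corner point $(u_j^B, 0)$, and I would invoke Lemma~\ref{Lm:OP3_FormalPart2} on $\mathbb{U}_2$: when $\bar A \geq B_j\big[1 - \rho_{ij}^2/(\rho_{ij}+u_j^B)^2\big]$ the $\mathbb{U}_2$-optimum is again $(u_j^B, 0)$, so the two candidates coincide (Case~3); otherwise $J_H(u_j^B,\cdot)$ is convex and strictly decreasing at $v_j = 0$, so the $\mathbb{U}_2$-optimum $(u_j^B, \min\{v_j^\#, \bar v_j\})$ strictly improves on the corner value and hence beats the $\mathbb{U}_1$-optimum, giving Case~4 if $v_j^\# \leq \bar v_j$ and Case~5 otherwise.

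The main obstacle is the first case: showing that $(0,0)$ is genuinely globally optimal when the $\mathbb{U}_1$-optimum is $(0,0)$ but $\mathbb{U}_2$ is nonempty with a non-monotone restricted objective. For this I would exploit the monotonicity facts embedded in the two lemma proofs. On $\mathbb{U}_1$, $J_H(\cdot,0)$ rises from $J_H(0,0)$, peaks, and returns to the level $J_H(0,0)$ exactly at $u_j^\#$; hence $u_j^B = \bar u_j < u_j^\#$ forces $J_H(u_j^B, 0) \geq J_H(0,0)$. On $\mathbb{U}_2$, $J_H(u_j^B, \cdot)$ is convex in $v_j$, so if its derivative at $v_j=0$ is nonnegative we are done immediately; the delicate part is the case where that derivative is negative, where I would need to verify by a direct computation with \eqref{Eq:OP3ObjectiveSimplified} that the convex minimum value $J_H(u_j^B, v_j^\#)$ still does not drop below $J_H(0,0)$ under the hypothesis $u_j^B < u_j^\# = \bar A\rho_{ij}/(B_j - \bar A)$. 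I expect this inequality to be the only genuinely non-trivial computation; everything else is bookkeeping of which lemma-supplied candidate wins, plus observation~(i) to rule out an empty $\mathbb{U}_2$. If the inequality were to fail in some corner of the parameter space, the condition isolating Case~1 would need to be strengthened accordingly.
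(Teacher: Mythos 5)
Your route is the same as the paper's: the paper proves Theorem \ref{Th:OP3FormalCombined} in one sentence as a composition of Lemmas \ref{Lm:OP3_FormalPart1} and \ref{Lm:OP3_FormalPart2}, and your candidate comparison over $\mathbb{U}_1\cup\mathbb{U}_2$, together with the observation that $\mathbb{U}_2$ is feasible only when $\bar{u}_j=u_j^B$, is exactly that composition made explicit; your handling of Cases 2--5 (shared corner $(u_j^B,0)$, strict improvement when the derivative at $v_j=0$ is negative) is correct and more careful than the paper's. The genuine gap is the step you flag and then defer: in Case 1 with $\bar{u}_j=u_j^B<u_j^{\#}$ and $\bar{A}<B_j$ you must still show $J_H(0,0)\le \min_{0\le v_j\le \bar{v}_j}J_H(u_j^B,v_j)$, and you only promise a ``direct computation.'' That computation does not close: the inequality can fail. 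Take $\mathcal{N}_i=\{j\}$, $A_i=A_j=1$, $B_i=B_j=10$, $\rho_{ij}=1$, $R_i(t)=0$, $R_j(t)=1$, $H$ large. Then $u_j^B=2/9<u_j^{\#}=1/4$, so Case 1 selects $(0,0)$ with $J_H(0,0)=\bar{R}(t)+\bar{A}\rho_{ij}/2=2$; but $\bar{A}=2<B_j[1-\rho_{ij}^2/(\rho_{ij}+u_j^B)^2]=400/121$, so Lemma \ref{Lm:OP3_FormalPart2} returns the interior point $v_j^{\#}=\sqrt{31}/3-11/9\approx 0.63$, whose cost is $J_H(u_j^B,v_j^{\#})=\sqrt{31}/3\approx 1.86<2$. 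Hence $(0,0)$ is not the global minimizer of \eqref{Eq:OP3_Formal} in this configuration: the inequality you hoped to verify is false on the window $\rho_{ij}(\sqrt{B_j/(B_j-\bar{A})}-1)<u_j^B<\rho_{ij}\bar{A}/(B_j-\bar{A})$, which is nonempty whenever $\bar{A}<B_j$.

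So your closing caveat is exactly what is needed: Case 1 of \eqref{Eq:OP3FormalCombined} requires an explicit comparison between $J_H(0,0)$ and the $\mathbb{U}_2$ candidate (equivalently, a strengthened Case-1 condition), and without it your argument for Case 1 is incomplete. To be fair, the paper's one-line proof never performs this cross-set comparison either, so your more explicit bookkeeping exposes a weakness of the stated case decomposition itself rather than a defect peculiar to your write-up; but as a proof of the statement as written, the proposal does not go through. (Minor side note: \eqref{Eq:OP3_FormalPart1UCrit} prints $B_i$ in the denominator of $u_j^{\#}$ while the derivation inside Lemma \ref{Lm:OP3_FormalPart1} gives $B_j$; the example above takes $B_i=B_j$ so this ambiguity is immaterial.)
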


\emph{Proof: } This result is a composition of the respective solutions given in Lemmas \ref{Lm:OP3_FormalPart1} and \ref{Lm:OP3_FormalPart2}.
\hfill$\blacksquare$

The above theorem implies that whenever: (\romannum{1}) $H$ is sufficiently large (ensuring $\bar{u}_{j}=u_{j}^{B}$ in \eqref{Eq:Lambdaj000}), (\romannum{2}) the sensing capabilities are high enough to ensure $\bar{A}<B_{j}$ and (\romannum{3}) target uncertainty $R_{j}(t)$ exceeds a known threshold (ensuring $u_{j}^{\#}<\bar{u}_{j}=u_{j}^{B}$), it is optimal to select $u_{j}^{\ast}=u_{j}^{B}$, hence planning ahead to drive $R_{j}(t)$ to zero. This conclusion is in line with Theorem 1 in \cite{Zhou2019}.

\paragraph{\textbf{Solving for Optimal Next Destination $j^{\ast}$}}
Using Theorem \ref{Th:OP3FormalCombined}, when agent $a$ is ready to leave target $i$ at some local event time $t$, it can compute the optimal trajectory costs $J_{H}(u_{j}^{\ast},v_{j}^{\ast})$ for all $j\in\mathcal{N}_{i}$. Based on the second step of the RHCP \eqref{Eq:RHCGenSolStep2}, the optimal neighbor to visit next is $j^{\ast}$ where 
\begin{equation}
j^{\ast}=\underset{j\in\mathcal{N}_{i}}{\arg\min}\ J_{H}(u_{j}^{\ast},v_{j}^{\ast}). \label{Eq:OP3_FormalStep2}
\end{equation}
Thus, upon solving \textbf{RHCP3} agent $a$ departs from target $i$ at time $t$ and follows the path $(i,j^{\ast})\in\mathcal{E}$ to visit target $j^{\ast}$. This optimal control will be updated upon the occurrence of the next event, which, in this case, will be the arrival of the agent at target $j^{\ast}$, triggering the solution of an instance of \textbf{RHCP1} at $j^{\ast}$ with the new neighborhood $\mathcal{N}_{j^{\ast}}$.

On a separate note, note that both $J_H(u_j^*,v_j^*)$ and $j^*$ are heavily dependent on $\{R_j(t),\bar{u}_j,\bar{v}_j,A_j,B_j,\rho_{ij}: \forall j\in\mathcal{N}_i\}$ values. Having such a dependence on the neighborhood is crucial as it forces the agent to consider the underlying network topology and monitoring aspects. Appendix B provides a counter example where the same \textbf{RHCP3} have been considered but with a different objective function form (other than \eqref{Eq:RHCNewChoices}). For that case, it is proven (in in Theorem \ref{Eq:OP3CombinedResultOld}) that $J_H(u_j^*,v_j^*)$ is dependent only on $\rho_{ij}$ (see \eqref{Eq:Op3OptimumCostF}) - which leads to unfavorable results (see Theorem \ref{Th:OP3jStarOld}).

\subsection{Solution of \textbf{RHCP2}}

An agent $a$ residing in target $i$ has to solve \textbf{RHCP2} only when an observed event is: (\romannum{1}) $[R_{i}\rightarrow0^{+}]$ or (\romannum{2}) a neighbor induced event $C_{j}$ or $\bar{C}_{j}$, $j\in\mathcal{N}_{i}$, while $R_{i}(t)=0$. Therefore, $u_{i}^{\ast}(t)=0$ by default and $U_{i}(t)=[v_{i},j,u_{j},v_{j}]$ with $U_{ij}=[v_{i},u_{j},v_{j}]$ in \eqref{Eq:RHCGenSolStep1}. Upon solving this \textbf{RHCP2}, the obtained $v_{i}^{\ast}$ defines the remaining inactive time to be spent on target $i$ until the next local event occurs. The variable horizon $w$ defined in \eqref{Eq:VariableHorizon} for this case is $w=v_{i}+\rho_{ij}+u_{j}+v_{j}$.

\paragraph{\textbf{Constraints}}

Following the previously used notation, the maximal possible active time at target $j$ forms an upper-bound to the control variable $u_{j}$ given by 
\begin{equation}
u_{j}^{B}(v_{i})=\frac{R_{j}(t+v_{i}+\rho_{ij})}{B_{j}-A_{j}}=\frac
{R_{j}(t)+A_{j}\rho_{ij}}{B_{j}-A_{j}}+\frac{A_{j}}{B_{j}-A_{j}}\cdot v_{i}.
\label{Eq:Lambdaj01}%
\end{equation}
Note that due to the inclusion of $v_{i}$ in \textbf{RHCP2} (compared to \textbf{RHCP3}), $u_{j}^{B}$ is now dependent on $v_{i}$ (see \eqref{Eq:Lambdaj00}). Based on the same arguments as in the analysis of \textbf{RHCP3}: (\romannum{1}) to spend a positive inactive time at target $j$, the agent has to first spend the maximum active time possible $u_{j}^{B}(v_{i})$, and (\romannum{2}) the variable horizon is subject to $w\leq H$, we now see that any feasible $U_{ij}=[v_{i},u_{j},v_{j}]\in\mathbb{U}$ in \eqref{Eq:RHCGenSolStep1} for \textbf{RHCP2} should belong to one of the two constraint sets:
\begin{equation}
\begin{aligned} 
\mathbb{U}_1 =&\  \{0 \leq v_i \leq \bar{v}_i(u_j,v_j),\  0 \leq u_j \leq \bar{u}_j(v_i),\  v_j = 0 \},\\ \mathbb{U}_2 =&\  \{0 \leq v_i \leq \bar{v}_i(u_j,v_j),\  u_j = u_j^B(v_i),\  0 \leq v_j \leq \bar{v}_j(v_i) \}, \end{aligned} \label{Eq:Constraints2}%
\end{equation}
where, $u_{j}^{B}(v_{i})$ is given in \eqref{Eq:Lambdaj01} and
\begin{align*}
\bar{v}_{i}(u_{j},v_{j})  &  =H-(\rho_{ij}+u_{j}+v_{j}),\\
\bar{u}_{j}(v_{i})  &  =\min\{u_{j}^{B}(v_{i}),\ H-(v_{i}+\rho_{ij})\},\\
\bar{v}_{j}(v_{i})  &  =H-(v_{i}+\rho_{ij}+u_{j}^{B}(v_{i})).
\end{align*}
Similar to \eqref{Eq:Constraints1}, $\bar{u}_{j}$ and $\bar{v}_{j}$ respectively represent the limiting values of active and inactive times at $j$. Along the same lines, $\bar{v}_{i}$ is the upper bound to the inactive time at $i$. However, in contrast to \eqref{Eq:Constraints1}, the aforementioned three quantities are control-dependent in \eqref{Eq:Constraints2}. Moreover, note that in \eqref{Eq:Constraints2}, under $\mathbb{U}_{1}$, $\bar{v}_{i}(u_{j},v_{j})=\bar{v}_{i}(u_{j},0)$ and under $\mathbb{U}_{2}$, $\bar{v}_{i}(u_{j},v_{j})=\bar{v}_{i}(u_{j}^{B}(v_{i}),v_{j})$.

\paragraph{\textbf{Objective}}
Following from \eqref{Eq:RHCNewChoices}, the objective function corresponding to \textbf{RHCP2} is 
$$J_{H}(U_{ij})=J_{H}(X_{i}(t),[0,U_{ij}];H)=\frac{1}%
{w}\bar{J}_{i}(t,t+w).$$
To obtain an explicit expression for $J_{H}(U_{ij})$, $\bar{J}_{i}$ in \eqref{Eq:LocalObjectiveFunction} is decomposed as,
\begin{equation}
\label{Eq:ObjDerivationOP2Step1}\bar{J}_{i}=J_{i}+J_{j}+\sum_{m\in
\mathcal{N}_{i}\backslash\{j\}}J_{m}.
\end{equation}
Considering the feasible state trajectories shown in Fig. \ref{Fig:RHCP2Graphs} for this case, the three terms $J_{i},J_{j}$ and $J_{m}$ in \eqref{Eq:ObjDerivationOP2Step1} are evaluated for the period $[t,t+w)$ using Theorem \ref{Th:Contribution} as
\begin{align*}
    J_i =&
    \frac{A_i(\rho_{ij}+u_j+v_j)^2}{2},\\
    J_j =&
    \frac{(v_i+\rho_{ij})}{2}\left[2R_j(t) + A_j(v_i+\rho_{ij})\right] \\
    &+ \frac{u_j}{2}\left[ 2(R_j(t) + A_j(v_i+\rho_{ij})) - (B_j-A_j)u_j \right],\\
    J_m=& 
    \frac{(v_i+\rho_{ij}+u_j+v_j)}{2} \left[ 2R_m(t) + A_m(v_i+\rho_{ij}+u_j+v_j)\right].
\end{align*}

\begin{figure}[!h]
    \centering
    \includegraphics[width=3.3in]{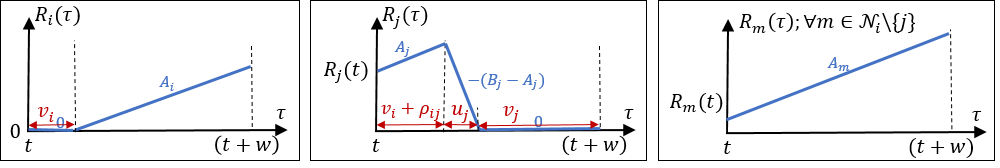}
    \caption{State trajectories of targets in $\bar{\mathcal{N}}_i$ during $[t,t+w)$.}
    \label{Fig:RHCP2Graphs}
\end{figure}

Combining these results and substituting them in \eqref{Eq:ObjDerivationOP2Step1} gives the complete objective function $J_{H}(U_{ij})$ as
\begin{equation}%
\begin{split}
J_{H}(v_{i},u_{j},v_{j})=  &  \frac{1}{v_{i}+\rho_{ij}+u_{j}+v_{j}}[C_{1}%
v_{i}^{2}+C_{2}u_{j}^{2}+C_{3}v_{j}^{2}\\
&  +C_{4}v_{i}u_{j}+C_{5}v_{i}v_{j}+C_{6}u_{j}v_{j}+C_{7}v_{i}\\
&  +C_{8}u_{j}+C_{9}v_{j}+C_{10}],
\end{split}
\label{Eq:OP2ObjectiveSimplified}%
\end{equation}
where 
\begin{equation*}
\begin{aligned}
C_{1} &= \frac{\bar{A}_i}{2},\ \ 
C_{2} = \frac{\bar{A}-B_j}{2},\ \ 
C_{3} = \frac{\bar{A}_j}{2},\ \ 
C_{4} = \bar{A}_i,\ \ 
C_{5} = \bar{A}_{ij},\\
C_{6} &= \bar{A}_j,\ \
C_{7} = \left[\bar{R}_i(t)+\bar{A}_i\rho_{ij}\right],\ \ 
C_{8} = \left[\bar{R}_i(t)+\bar{A}\rho_{ij}\right],\\ 
C_{9} &= \left[\bar{R}_{ij}(t) + \bar{A}_j\rho_{ij}\right] \mbox{ and } 
C_{10} = \frac{\rho_{ij}}{2}\left[2\bar{R}_{i}(t) + \bar{A}\rho_{ij}\right].
\end{aligned}
\end{equation*}
The remaining parameters are same as \eqref{Eq:NeighborhoodParameters}. Note that $C_{i}\geq0$ for all $i$ except for $C_{2}$, where $C_{2}\geq 0\iff\bar{A}\geq B_{j}$.

\paragraph{\textbf{Solving the \textbf{RHCP2} for Optimal Control }$(v_{i}^{*}, u_{j}^{*},v_{j}^{*})$}

Based on the first step of \eqref{Eq:RHCGenSolStep1}, the optimal controls for \textbf{RHCP2} are determined by
\begin{equation}
(v_{i}^{\ast},u_{j}^{\ast},v_{j}^{\ast})=\ \underset{(v_{i},u_{j},v_{j})}%
{\arg\min}\ {J_{H}(v_{i},u_{j},v_{j}),} \label{Eq:OP2_Formal}%
\end{equation}
where $(v_{i},u_{j},v_{j})\in\mathbb{U}_{1}$ or $(v_{i},u_{j},v_{j}%
)\in\mathbb{U}_{2}$ as in \eqref{Eq:Constraints2}.

\paragraph*{\textbf{- Case 1}}

$(v_{i},u_{j},v_{j})\in\mathbb{U}_{1}$ in \eqref{Eq:Constraints2}: Then, $v_{j}^{\ast}=0$ and \eqref{Eq:OP2_Formal} takes the form: 
\begin{equation}
\begin{aligned} (v_i^*,u_j^*) =\  &\underset{(v_i,u_j)}{\arg\min}\ {J_H(v_i,u_j,0)}\\ 
&v_i \geq 0,\ \ \ 0 \leq u_j \leq u_j^B(v_i),\\ 
&v_i + u_j \leq H-\rho_{ij}.\\ \end{aligned} \label{Eq:OP2_FormalPart1}%
\end{equation}
The above constraints follow from \eqref{Eq:Constraints2} and the relationships:
\begin{align*}
&  v_{i}\leq\bar{v}_{i}(u_{j},0)\iff v_{i}\leq H-(\rho_{ij}+u_{j}%
)\mbox{ and }\\
&  u_{j}\leq\bar{u}_{j}(v_{i})\iff u_{j}\leq u_{j}^{B}(v_{i})\And u_{j}\leq
H-(v_{i}+\rho_{ij}).
\end{align*}
Note that $u_{j}^{B}(v_{i})$ is linear in $v_{i}$ (see \eqref{Eq:Lambdaj01}). Prior to presenting the approach for solving \eqref{Eq:OP2_FormalPart1}, let us also formulate the second sub-problem of \eqref{Eq:OP2_Formal} based on $\mathbb{U}_{2}$ in \eqref{Eq:Constraints2}.

\paragraph*{\textbf{- Case 2}}

$(v_{i},u_{j},v_{j})\in\mathbb{U}_{2}$ in \eqref{Eq:Constraints2}: Then, $u_{j}^{\ast}=u_{j}^{B}(v_{i}^{\ast})$ and \eqref{Eq:OP2_Formal} takes the form:
\begin{equation}
\begin{aligned} (v_i^*,v_j^*) =\  &\underset{(v_i,v_j)}{\arg\min}\ {J_H(v_i,u_j^B(v_i),v_j)}\\ 
&v_i \geq 0,\ \ \ v_j \geq 0, \\ 
&v_i + u_j^B(v_i) + v_j \leq H-\rho_{ij}. \end{aligned} \label{Eq:OP2_FormalPart2}%
\end{equation}
The constraints in \eqref{Eq:OP2_FormalPart2} follow from
\eqref{Eq:Constraints2} and the relationships:
\begin{align*}
&  v_{i}\leq\bar{v}_{i}(u_{j}^{B}(v_{i}),v_{j})\iff v_{i}\leq H-(\rho
_{ij}+u_{j}^{B}(v_{i})+v_{j})\mbox{ and }\\
&  v_{j}\leq\bar{v}_{j}(v_{i})\iff v_{j}\leq H-(v_{i}+\rho_{ij}+u_{j}%
^{B}(v_{i})).
\end{align*}

\paragraph*{\textbf{- Combined Result}}

The two optimization problems \eqref{Eq:OP2_FormalPart1} and \eqref{Eq:OP2_FormalPart2} belong to the class of \emph{constrained bi-variate rational function optimization problems} (RFOPs) in \eqref{Eq:ConstrainedOptimizationBivariate} discussed in Appendix A. Specifically, Appendix A presents a computationally efficient, analytical procedure for obtaining the globally optimal solution of such RFOPs. As we will see in the rest of this paper, all remaining problems we need to solve belong to the class of RFOPs.

To provide details, note that \eqref{Eq:OP2_FormalPart1} is a special case of \eqref{Eq:ConstrainedOptimizationBivariate} where
\begin{equation}
\begin{aligned}
    x =& v_i,\ \ y = u_j,\ \ H(x,y) = J_H(v_i,u_j,0),\\
    \mb{P} =& \frac{A_j}{B_j-A_j},\ \  
    \mb{L} = \frac{R_j(t)+A_j\rho_{ij}}{B_j-A_j},\ \ 
    \mb{Q} = 1,\\  
    \mb{M} =& H-\rho_{ij},\ \ 
    \mb{N} = \infty. 
\end{aligned}
\end{equation}
Similarly, \eqref{Eq:OP2_FormalPart2} is also a special case of \eqref{Eq:ConstrainedOptimizationBivariate} where \begin{equation}
\begin{aligned}
    x =& v_i,\ \ y = v_j,\ \ H(x,y) = J_H(v_i,u_j^B(v_i),v_j),\\
    \mb{P} =& 0,\ \  
    \mb{L} = \infty,\ \ 
    \mb{Q} = \frac{B_j}{B_j-A_j},\\  
    \mb{M} =& H-\rho_{ij}-\frac{R_j(t)+A_j\rho_{ij}}{B_j-A_j},\ \ 
    \mb{N} = \infty. 
\end{aligned}
\end{equation}

Upon individually obtaining solutions to \eqref{Eq:OP2_FormalPart1} and \eqref{Eq:OP2_FormalPart2}, the main optimization problem \eqref{Eq:OP2_Formal} is solved by simply comparing the objective function values of those individual solutions.

\paragraph{\textbf{Solving for Optimal (Planned) Next Destination $j^{*}$}}

The second step of \textbf{RHCP2} (i.e., \eqref{Eq:RHCGenSolStep2}) is to choose the optimal neighbor $j\in\mathcal{N}_{i}$ according to 
\begin{equation}
j^{\ast}=\underset{j\in\mathcal{N}_{i}}{\arg\min}\ J_{H}(v_{i}^{\ast}%
,u_{j}^{\ast},v_{j}^{\ast}). \label{Eq:OP2_FormalStep2Rev}%
\end{equation}
This step requires the objective value of the optimal solution $U_{ij}^{\ast}=[v_{i}^{\ast},u_{j}^{\ast},v_{j}^{\ast}]$ obtained for each $j\in\mathcal{N}_{i}$ (in \eqref{Eq:OP2_Formal}). Now, $v_{i}^{\ast}$ taken from $U_{ij^{\ast}}^{\ast}$ defines the inactive time that the agent should spend at current target $i$ starting from the current time $t$ until the next local event occurs. This next event is either: (\romannum{1}) $[h \rightarrow v_{i}^{\ast}]$ or (\romannum{2}) a neighbor-induced $C_{j}$ or $\bar{C}_{j}$ event for some $j\in\mathcal{N}_{i}$. Depending on this event, the agent will have to subsequently solve an instance of \textbf{RHCP3} or \textbf{RHCP2} respectively.

\subsection{Solution of \textbf{RHCP1}}

An agent $a$ residing at target $i$ has to solve \textbf{RHCP1} only when an observed event is: (\romannum{1}) agent $a$'s arrival at target $i$, or (\romannum{2}) a neighbor induced event (i.e., $C_{j}$ or $\bar{C}_{j}$ for some $j\in\mathcal{N}_{i}$) while $R_{i}(t)>0$. Therefore, \textbf{RHCP1} involves all the decision variables $U_{ij}\triangleq\lbrack u_{i},v_{i},u_{j},v_{j}]$ and $j$ included in \eqref{Eq:RHCNewChoices}. Upon solving \textbf{RHCP1}, the obtained $u_{i}^{\ast}$ gives the active time remaining to be spent at target $i$ - until the next local event occurs. The variable horizon $w$ for this case is $w=u_{i}+v_{i}+\rho_{ij}+u_{j}+v_{j}$, as in \eqref{Eq:VariableHorizon}.

\paragraph{\textbf{Constraints}}

Following the same notation as before, the maximum possible active times at targets $i$ and $j$ respectively are $u_{i}^{B}$ and $u_{j}^{B}$ (omitting the dependence on time $t$) where
\begin{equation}
\begin{aligned} u_i^B =& \frac{R_i(t)}{B_i-A_i} \mbox{ and }\\ u_j^B(u_i,v_i) =& \frac{R_j(t)+A_j\rho_{ij}}{B_j-A_j} + \frac{A_j}{B_j-A_j}\cdot(u_i+v_i). \end{aligned} \label{Eq:Lambdaj02}%
\end{equation}
Note that $u_{j}^{B}$ is now dependent on the control variables $u_{i}$ and $v_{i}$. Based on the same arguments as in the analysis of \textbf{RHCP2}: (\romannum{1}) to spend a positive inactive time at any target, the agent should spend the maximum possible active time at that target, and (\romannum{2}) the variable horizon is subject to $w\leq H$. Thus, any feasible $(u_{i},v_{i},u_{j},v_{j})$ in \eqref{Eq:RHCGenSolStep1} belongs to one of the four constraint set pairs: $(\mathbb{U}_{ik},\mathbb{U}_{jl}),$ $k,l\in\{1,2\}$ where
\begin{equation}
\begin{aligned} \mathbb{U}_{i1} =& \{0 \leq u_i \leq \bar{u}_i(u_j,v_j),\  v_i = 0\},\\ \mathbb{U}_{i2} =& \{u_i = u_i^B,\  0 \leq v_i \leq \bar{v}_i(u_j,v_j)\},\\ \mathbb{U}_{j1} =& \{ 0 \leq u_j \leq \bar{u}_j(u_i,v_i),\ v_j = 0\},\\ \mathbb{U}_{j2} =& \{u_j = u_j^B(u_i,v_i),\  0 \leq v_j \leq \bar{v}_j(u_i,v_i)\}, \end{aligned} \label{Eq:Constraints3}%
\end{equation}
with $u_{i}^{B}$ and $u_{j}^{B}(u_{i},v_{i})$ given in \eqref{Eq:Lambdaj02} and
\begin{align*}
\bar{u}_{i}(u_{j},v_{j})  &  =\min\{u_{i}^{B},\ H-(\rho_{ij}+u_{j}+v_{j})\},\\
\bar{v}_{i}(u_{j},v_{j})  &  =H-(u_{i}^{B}+\rho_{ij}+u_{j}+v_{j}),\\
\bar{u}_{j}(u_{i},v_{i})  &  =\min\{u_{j}^{B}(u_{i},v_{i}),\ H-(u_{i}%
+v_{i}+\rho_{ij})\},\\
\bar{v}_{j}(u_{i},v_{i})  &  =H-(u_{i}+v_{i}+\rho_{ij}+u_{j}^{B}(u_{i}%
,v_{i})).
\end{align*}
These are similar to \eqref{Eq:Constraints1} and \eqref{Eq:Constraints2}, but, unlike \eqref{Eq:Constraints1} or \eqref{Eq:Constraints2}, each of these four limiting values are now dependent on two control decisions.

\paragraph{\textbf{Objective}}

According to \eqref{Eq:RHCNewChoices}, the objective function corresponding to \textbf{RHCP1} is 
$$J_{H}(U_{ij}) = J_{H}(X_{i}(t),U_{ij};H) = \frac{1}{w}\bar{J}_{i}(t,t+w).$$
To obtain an explicit expression for $J_{H}(U_{ij})$, $\bar{J}_{i}$ in \eqref{Eq:LocalObjectiveFunction} is decomposed as in \eqref{Eq:ObjDerivationOP2Step1} and the three terms $J_{i},\,J_{j}$ and $J_{m}$ are evaluated for trajectories such that the agent moves from target $i$ to $j$ following decisions $[u_{i},v_{i},u_{j},v_{j}]$ during the period $[t,t+w)$. With the aid of Fig. \ref{Fig:RHCP1Graphs} and Theorem \ref{Th:Contribution} we obtain:
\begin{align*}
    J_i=& \frac{u_i}{2}\left[2R_i(t)-(B_i-A_i)u_i\right]
    +\frac{(\rho_{ij}+u_j+v_j)}{2}\\
    &\times \left[2(R_i(t)-(B_i-A_i)u_i)+A_i(\rho_{ij}+u_j+v_j)\right],\\
    J_j=& \frac{(u_i+v_i+\rho_{ij})}{2}\left[2R_j(t) + A_j(u_i+v_i+\rho_{ij})\right] \\
    &+ \frac{u_j}{2}\left[ 2(R_j(t) + A_j(u_i+v_i+\rho_{ij}))-(B_j-A_j)u_j\right],\\
    J_m=& \frac{(u_i+v_i+\rho_{ij}+u_j+v_j)}{2} \left[ 2R_m(t)\right. \\
    &+A_m(u_i+v_i+\rho_{ij}+u_j+v_j)\left.\right].
\end{align*}

\begin{figure}[!h]
    \centering
    \includegraphics[width=3.3in]{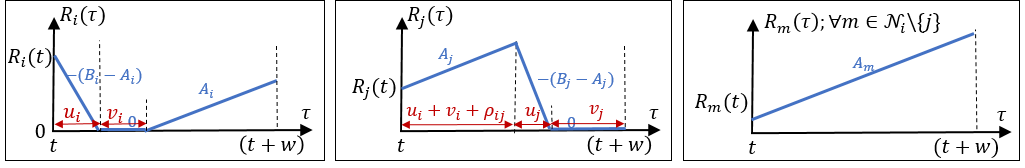}
    \caption{State trajectories of targets in $\bar{\mathcal{N}}_i$ during $[t,t+w)$.}
    \label{Fig:RHCP1Graphs}
\end{figure}

Combining the above three results and substituting it in \eqref{Eq:ObjDerivationOP2Step1} gives the complete objective function $J_H(U_{ij})$ as \begin{equation}\label{Eq:OP1ObjectiveSimplified}
\begin{split}
&J_H(u_i,v_i,u_j,v_j) = \hfill \\
&\frac{\left[
\begin{split}
C_1u_i^2 + C_2v_i^2 + C_3u_j^2 + C_4v_j^2 
+ C_5u_iv_i\\ + C_6u_iu_j + C_7u_iv_j
+ C_8v_iu_j + C_9v_iv_j + C_{10}u_jv_j\\ + C_{11}u_i 
+ C_{12}v_i + C_{13}u_j + C_{14}v_j + C_{15} 
\end{split}
\right]}{u_i + v_i+\rho_{ij}+u_j+v_j},
\end{split}
\end{equation}
where
\begin{equation*}
\begin{aligned}
C_{1} &= \frac{\bar{A}-B_i}{2},\ \ 
C_{2} = \frac{\bar{A}_i}{2},\ \ 
C_{3} = \frac{\bar{A}-B_j}{2},\ \ 
C_{4} = \frac{\bar{A}_j}{2},\\
C_{5} &= \bar{A}_{i},\ \ 
C_{6} = \bar{A}-B_i,\ \
C_{7} = \bar{A}_j-B_i,\ \ 
C_{8} = \bar{A}_i,\\
C_{9} &= \bar{A}_{ij},\ \ 
C_{10} = \bar{A}_j,\ \  
C_{11} = \left[\bar{R}(t)+(\bar{A}-B_i)\rho_{ij}\right],\\ 
C_{12} &= \left[\bar{R}_i(t)+\bar{A}_i\rho_{ij}\right],\ \
C_{13} = \left[\bar{R}(t)+\bar{A}\rho_{ij}\right],\\
C_{14} &= \left[\bar{R}_j(t)+\bar{A}_j\rho_{ij}\right] \mbox{ and }\ \ 
C_{15} = \frac{\rho_{ij}}{2}\left[2\bar{R}(t) + \bar{A}\rho_{ij}\right].
\end{aligned}
\end{equation*}
The remaining parameters are same as \eqref{Eq:NeighborhoodParameters}. Note that all the coefficients stated above are non-negative except for $C_1,C_3,C_6,C_7$ and $C_{11}$.

\paragraph{\textbf{Solving the \textbf{RHCP1} for Optimal Control } $(u_{i}^{*}, v_{i}^{*}, u_{j}^{*},v_{j}^{*})$}

The first step of \textbf{RHCP1} \eqref{Eq:RHCGenSolStep1} can be stated using \eqref{Eq:OP1ObjectiveSimplified} as
\begin{equation}
(u_{i}^{\ast},v_{i}^{\ast},u_{j}^{\ast},v_{j}^{\ast})=\ \underset{(u_{i}%
,v_{i},u_{j},v_{j})}{\arg\min}\ {J_{H}(u_{i},v_{i},u_{j},v_{j})},
\label{Eq:OP1_Formal}%
\end{equation}
where $(u_{i},v_{i})\in\mathbb{U}_{ik},\ (u_{j},v_{j})\in\mathbb{U}_{jl}$ for $k,l\in\{1,2\}$ as in \eqref{Eq:Constraints3}. There are four different cases for this problem depending on which of the four constraint set pairs in \eqref{Eq:Constraints3} is used.

\paragraph*{\textbf{- Case 1}}
$(u_{i},v_{i})\in\mathbb{U}_{i1},\ (u_{j},v_{j})\in\mathbb{U}_{j1}$ in \eqref{Eq:Constraints3}: Then, $v_{i}^{\ast}=0,\ v_{j}^{\ast}=0$ and \eqref{Eq:OP1_Formal} takes the form:
\begin{equation}
\begin{aligned} (u_i^*,u_j^*) =\  &\underset{(u_i,u_j)}{\arg\min}\ {J_H(u_i,0,u_j,0)}\\ 
&0 \leq u_i \leq u_i^B,\ \ \ 
0 \leq u_j \leq u_j^B(u_i,0),\\
&u_i + u_j \leq H - \rho_{ij}. 
\end{aligned} \label{Eq:OP1_FormalPart1}
\end{equation}
The above constraints follow from \eqref{Eq:Constraints3} and the relationships:
\begin{align*}
&  u_{i}\leq\bar{u}_{i}(u_{j},0)\iff u_{i}\leq u_{i}^{B}\And u_{i}\leq
H-(\rho_{ij}+u_{j})\mbox{ and }\\
&  u_{j}\leq\bar{u}_{j}(u_{i},0)\iff u_{j}\leq u_{j}^{B}(u_{i},0)\And
u_{j}\leq H-(u_{i}+\rho_{ij}).
\end{align*}
Note that $u_{j}^{B}(u_{i},0)$ is linear and increasing with $u_{i}$ (see \eqref{Eq:Lambdaj02}). Similar to before, prior to presenting the approach for solving \eqref{Eq:OP1_FormalPart1}, let us also formulate the remaining sub-problems of \eqref{Eq:OP1_Formal}.

\paragraph*{\textbf{- Case 2}}

$(u_{i},v_{i}) \in\mathbb{U}_{i1},\ (u_{j},v_{j}) \in\mathbb{U}_{j2}$ in \eqref{Eq:Constraints3}: Then, $v_{i}^{*} = 0,\ u_{j}^{*} = u_{j}^{B}(u_{i}^{*},0)$ and \eqref{Eq:OP1_Formal} takes the form:
\begin{equation}
\label{Eq:OP1_FormalPart2}\begin{aligned} (u_i^*,v_j^*) =\  &\underset{(u_i,v_j)}{\arg\min}\ {J_H(u_i,0,u_j^B(u_i,0),v_j)}\\ 
&0 \leq u_i \leq u_i^B,\ \ \ 
v_j \geq 0,\\ 
&u_i+u_j^B(u_i,0)+v_j \leq H-\rho_{ij}. \end{aligned}
\end{equation}
The constraints in \eqref{Eq:OP1_FormalPart2} are from \eqref{Eq:Constraints3} and the relationships:
\begin{align*}
&u_i \leq \bar{u}_i(u_j^B(u_i,0),v_j) \iff \\
&u_i \leq u_i^B \And u_i \leq   H-(\rho_{ij}+u_j^B(u_i,0)+v_j) \mbox{ and }\\
&v_j \leq \bar{v}_j(u_i,0) \iff v_j \leq H-(u_i+\rho_{ij}+u_j^B(u_i,0)).
\end{align*}

\paragraph*{\textbf{- Case 3}}

$(u_{i},v_{i}) \in\mathbb{U}_{i2},\ (u_{j},v_{j}) \in\mathbb{U}_{j1}$ in \eqref{Eq:Constraints3}: Then, $u_{i}^{*} = u_{i}^{B},\ v_{j}^{*} = 0$ and \eqref{Eq:OP1_Formal} takes the form:
\begin{equation}
\label{Eq:OP1_FormalPart3}\begin{aligned} (v_i^*,u_j^*) =\  &\underset{(v_i,u_j)}{\arg\min}\ {J_H(u_i^B,v_i,u_j,0)}\\
&v_i \geq 0,\ \ \ 
0 \leq u_j \leq u_j^B(u_i^B,v_i),\\ 
&v_i+u_j \leq H-(u_i^B+\rho_{ij}). 
\end{aligned}
\end{equation}
The constraints in \eqref{Eq:OP1_FormalPart3}, are from \eqref{Eq:Constraints3} and the relationships:
\begin{align*}
&v_i \leq \bar{v}_i(u_j,0) \iff v_i \leq  H-(u_i^B+\rho_{ij}+u_j) \mbox{ and }\\
&u_j \leq \bar{u}_j(u_i^B,v_i) \iff \\
&u_j \leq u_j^B(u_i^B,v_i) \And 
  u_j \leq H-(u_i^B+v_i+\rho_{ij}).
\end{align*} 
Note that $u_{j}^{B}(u_{i}^{B},v_{i})$ is linear and increasing with $v_{i}$ \eqref{Eq:Lambdaj02}.

\paragraph*{\textbf{- Case 4}}

$(u_{i},v_{i}) \in\mathbb{U}_{i2},\ (u_{j},v_{j})\in\mathbb{U}_{j2}$ in \eqref{Eq:Constraints3}: Then, $u_{i}^{*} = u_{i}^{B},\ u_{j}^{*} = u_{j}^{B}(u_{i}^{B},v_{i}^{*})$ and \eqref{Eq:OP1_Formal} takes the form:
\begin{equation}
\label{Eq:OP1_FormalPart4}\begin{aligned} (v_i^*,v_j^*) =\  &\underset{(v_i,v_j)}{\arg\min}\ {J_H(u_i^B,v_i,u_j^B(u_i^B,v_i),v_j)}\\ 
&v_i \geq 0, \ \ \ 
v_j \geq 0, \\ 
&v_i + v_j + u_j^B(u_i^B,v_i) \leq H-(u_i^B+\rho_{ij}). 
\end{aligned}
\end{equation}
To write the last constraint in \eqref{Eq:OP1_FormalPart4}, \eqref{Eq:Constraints3} and the relationships:
\begin{align*}
v_i \leq&\ \bar{v}_i(u_j^B(u_i^B,v_i),v_j) \iff \\ 
&v_i \leq H-(u_i^B+\rho_{ij}+u_j^B(u_i^B,v_i)+v_j) \mbox{ and }\\
v_j \leq&\ \bar{v}_j(u_i^B,v_i) \iff v_j \leq H-(u_i^B+v_i+\rho_{ij}+u_j^B(u_i^B,v_i)),
\end{align*} 
have been used. 

\paragraph*{\textbf{- Combined Result}}

The optimization problems \eqref{Eq:OP1_FormalPart1}, \eqref{Eq:OP1_FormalPart2}, \eqref{Eq:OP1_FormalPart3} and \eqref{Eq:OP1_FormalPart4} belong to the same class of RFOPs in \eqref{Eq:ConstrainedOptimizationBivariate} discussed in Appendix A (similar to \eqref{Eq:OP2_FormalPart1} and \eqref{Eq:OP2_FormalPart2}). Therefore, each of these four problems are solved exploiting the computationally efficient, analytical, and globally optimal solution presented in Appendix A. 

To provide details, note that \eqref{Eq:OP1_FormalPart1}-\eqref{Eq:OP1_FormalPart4} conforms to \eqref{Eq:ConstrainedOptimizationBivariate} using following four sets of mappings respectively: \newline (\romannum{1}) Case 1:
\begin{equation}
\begin{aligned}
    x =& u_i,\ \ y = u_j,\ \ H(x,y) = J_H(u_i,0,u_j,0),\\
    \mb{P} =& \frac{A_j}{B_j-A_j},\ \ 
    \mb{L} = \frac{R_j(t)+A_j\rho_{ij}}{B_j-A_j},\ \ 
    \mb{Q} = 1,\\  
    \mb{M} =& H-\rho_{ij},\ \ 
    \mb{N} = \frac{R_i(t)}{B_i-A_i}, 
\end{aligned}
\end{equation}
(\romannum{2}) Case 2:
\begin{equation}
\begin{aligned}
    x =& u_i,\ \ y = v_j,\ \ H(x,y) = J_H(u_i,0,u_j^B(u_i,0),v_j),\\
    \mb{P} =& 0,\ \ 
    \mb{L} = \infty,\ \
    \mb{Q} = \frac{B_j}{B_j-A_j},\\  
    \mb{M} =& H-\rho_{ij}-\frac{R_j(t)+A_j\rho_{ij}}{B_j-A_j},\ \
    \mb{N} = \frac{R_i(t)}{B_i-A_i},
\end{aligned}
\end{equation}
(\romannum{3}) Case 3:
\begin{equation}
\begin{aligned}
    x =& v_i,\ \ y = u_j,\ \ H(x,y) = J_H(u_i^B,v_i,u_j,0),\\
    \mb{P} =& \frac{A_j}{B_j-A_j},\ \ 
    \mb{L} = \frac{R_j(t)+A_j\rho_{ij}}{B_j-A_j} + \frac{A_j}{B_j-A_j}\times\frac{R_i(t)}{B_i-A_i},\\
    \mb{Q} =& 1,\ \  
    \mb{M} = H-\rho_{ij}-\frac{R_i(t)}{B_i-A_i},\ \ 
    \mb{N} = \infty, 
\end{aligned}
\end{equation}
(\romannum{4}) Case 4:
\begin{equation}
\begin{aligned}
    x =& v_i,\ \ y = v_j,\ \ H(x,y) = J_H(u_i^B,v_i,u_j^B(u_i^B,v_i),v_j),\\
    \mb{P} =& 0,\ \ 
    \mb{L} = \infty,\ \
    \mb{Q} = \frac{B_j}{B_j-A_j},\ \ 
    \mb{N} = \infty,\\  
    \mb{M} =& H-\rho_{ij}-\frac{R_j(t)+A_j\rho_{ij}}{B_j-A_j}-\frac{B_j}{B_j-A_j} \times\frac{R_i(t)}{B_i-A_i}.
\end{aligned}
\end{equation}

Upon obtaining the individual solutions to \eqref{Eq:OP1_FormalPart1}-\eqref{Eq:OP1_FormalPart4}, the main optimization problem \eqref{Eq:OP1_Formal} is solved by simply comparing the obtained optimal objective function values.

\paragraph{\textbf{Solving for Optimal (Planned) Next Destination $j^{*}$}}

The second step of \textbf{RHCP1}, same as \eqref{Eq:RHCGenSolStep2}, is to choose the optimal next neighbor $j$ according to 
\begin{equation}
j^{\ast}=\underset{j\in\mathcal{N}_{i}}{\arg\min}\ J_{H}(u_{i}^{\ast}%
,v_{i}^{\ast},u_{j}^{\ast},v_{j}^{\ast}). \label{Eq:OP1_FormalStep2Rev}%
\end{equation}
This step requires the objective values corresponding to the optimal solutions $U_{ij}^{\ast}=[u_{i}^{\ast},v_{i}^{\ast},u_{j}^{\ast},v_{j}^{\ast}]$ for each $j\in\mathcal{N}_{i}$ in \eqref{Eq:OP1_Formal}. Finally, recall that $u_{i}^{\ast}$ within the optimal solution $U_{ij^{\ast}}^{\ast}$ defines the active time that the agent should spend at current target $i$ until the next local event occurs. This next event is either (\romannum{1}) $[h \rightarrow u_{i}^{\ast}]$ with $R_{i}(t+u_{i}^{\ast})>0$, (\romannum{2}) $[h\rightarrow u_{i}^{\ast}]$ with $R_{i}(t+u_{i}^{\ast})=0$, or (\romannum{3}) a neighbor induced $C_{j}$ or $\bar{C}_{j}$ event for some $j\in\mathcal{N}_{i}$ (while $R_{i}>0)$. Therefore, the agent will have to subsequently solve an instance of  \textbf{RHCP3}, \textbf{RHCP2} or \textbf{RHCP1} respectively.

\section{Controller Enhancements}
\label{Sec:Improvements}

There are two reasons why our distributed RHC approach cannot guarantee a global minimum of \eqref{Eq:MainObjective}. First, in order to operate in distributed fashion, we have omitted the future cost estimate term $\hat{J}_{H}(X_{i}(t+H))$ in \eqref{Eq:RHCProblem} and have defined a local objective function \eqref{Eq:LocalObjectiveFunction} for an agent at target $i$ which reflects the structure of \eqref{Eq:MainObjective} limited to the neighborhood of target $i$. In \eqref{Eq:LocalObjectiveFunction}, all neighboring target states are equally weighted which does not take into account the specific neighboring topology. In particular, an optimal next-visit target $j^{\ast}$ determined by \textbf{RHCP3} favors neighbors with smaller $\rho_{ij}$ and $R_{j}(t)$ values. This can be alleviated by adopting different weights in the targets $j\in\mathcal{N}_{i}$ included in \eqref{Eq:LocalObjectiveFunction}.

The second reason also stems from the distributed nature of the RHC, as we have limited the information available to an agent located at target $i$ to its neighborhood $\bar{\mathcal{N}}_{i}$. One can expect that performance can be improved by considering an extended neighborhood whereby additional information may become available to the agent.

In this section, we address these two issues. Specifically, we focus on improving the formulation of \textbf{RHCP3} as it involves the crucial next-visit decision $j^{\ast}$, which highly affects the agent trajectories.

\subsection{Using a Weighted Local Objective in \textbf{RHCP3}}
\label{SubSec:EDRHCAlphaMethod}

We generalize the local objective function decomposition in \eqref{Eq:ObjDerivationOP3Step1} by introducing a weighted version of it as follows:
\begin{equation}
\bar{J}_{i}=\alpha J_{j}+(1-\alpha)\sum_{m\in\bar{\mathcal{N}}_{i}%
\backslash\{j\}}J_{m}, \label{Eq:ModifiedObjectiveForRHCP3}%
\end{equation}
where $\alpha\in\lbrack0,1]$. This approach is more effective as it can emphasize the contribution to the global cost by the \textquotedblleft neglected neighbor targets\textquotedblright\ $m\in\bar{\mathcal{N}}_{i}\backslash\{j\}$ due to the choice of target $j\in\mathcal{N}_{i}$ as the next visit. We will refer to the modified RHC approach using \eqref{Eq:ModifiedObjectiveForRHCP3} as the \textquotedblleft RHC$^{\alpha}$ method\textquotedblright. It should be noted that this modification has no significant effect on the theoretical results of the previous sections.

Regarding desirable values of the weight $\alpha$, we have found that $\alpha<0.5$ is preferred. In order to extend the parameter-free nature of the original RHC method to this RHC$^{\alpha}$ method, we will use $\alpha=\frac{1}{|\bar{\mathcal{N}}_{i}|^{2}}$ as a nominal choice, so as to reduce the importance of target $j$ based on the size of the neighborhood of target $i$.

\begin{lemma}
\label{Lm:ED-RHCPalpha} 
If $\alpha=0$ is used in RHC$^{\alpha}$, the optimal next-visit target $j^{\ast}$ given by \textbf{RHCP3} (i.e., the solution to \eqref{Eq:RHCGenSolStep2}) is 
\begin{equation}
j^{\ast}=\underset{j\in\bar{\mathcal{N}}_{i}}{\arg\min}\left[  (2\bar
{R}(t)+\bar{A}\rho_{ij})-(2R_{j}(t)+A_{j}\rho_{ij})\right]  .
\label{Eq:ED-RHCPalpha}%
\end{equation}
\end{lemma}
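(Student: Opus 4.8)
Here is how I would prove Lemma~\ref{Lm:ED-RHCPalpha}.

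The plan is to substitute $\alpha=0$ directly into the weighted decomposition \eqref{Eq:ModifiedObjectiveForRHCP3} and re-derive the optimal cost of \textbf{RHCP3} from scratch, because dropping the $J_j$ term destroys the rational structure \eqref{Eq:OP3ObjectiveSimplified} so that Lemmas \ref{Lm:OP3_FormalPart1}--\ref{Lm:OP3_FormalPart2} and Theorem \ref{Th:OP3FormalCombined} no longer apply. With $\alpha=0$ the local objective reduces to $\bar{J}_i=\sum_{m\in\bar{\mathcal{N}}_i\backslash\{j\}}J_m$, and from the derivation preceding \eqref{Eq:OP3ObjectiveSimplified} each such term equals $J_m=\frac{w}{2}[2R_m(t)+A_m w]$, where $w=\rho_{ij}+u_j+v_j$ is the common variable horizon (none of the targets $m\in\bar{\mathcal{N}}_i\backslash\{j\}$ is visited during $[t,t+w)$, so $\dot R_m=A_m$ on that interval).

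Dividing by $w$ as prescribed in \eqref{Eq:RHCNewChoices} cancels the leading factor, yielding $J_H(u_j,v_j)=\frac12\sum_{m\in\bar{\mathcal{N}}_i\backslash\{j\}}[2R_m(t)+A_m w]$. Using the notation \eqref{Eq:NeighborhoodParameters}, and observing that $\bar R(t)$ and $\bar A$ are the \emph{totals} over $\bar{\mathcal{N}}_i$ (hence independent of the choice of $j$), the identities $\sum_{m\in\bar{\mathcal{N}}_i\backslash\{j\}}R_m(t)=\bar R(t)-R_j(t)$ and $\sum_{m\in\bar{\mathcal{N}}_i\backslash\{j\}}A_m=\bar A-A_j$ give $J_H(u_j,v_j)=\frac12[\,2(\bar R(t)-R_j(t))+(\bar A-A_j)\,w\,]$. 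This is affine and nondecreasing in $w$, since $\bar A-A_j=\sum_{m\in\bar{\mathcal{N}}_i\backslash\{j\}}A_m\ge 0$ by Assumption \ref{As:TargetUncertaintyRateInequality}, while the feasible set $\mathbb{U}_1\cup\mathbb{U}_2$ of \eqref{Eq:Constraints1} enforces $w\ge\rho_{ij}$, with equality attainable at $(u_j,v_j)=(0,0)\in\mathbb{U}_1$. Hence the optimal value of the inner problem \eqref{Eq:OP3_Formal} under the $\alpha=0$ objective is $J_H(u_j^\ast,v_j^\ast)=\frac12[(2\bar R(t)+\bar A\rho_{ij})-(2R_j(t)+A_j\rho_{ij})]$.

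It then remains to carry out the second stage \eqref{Eq:OP3_FormalStep2} of \textbf{RHCP3}: since the constant $\frac12$ does not affect the minimizer, $j^\ast=\arg\min_j[(2\bar R(t)+\bar A\rho_{ij})-(2R_j(t)+A_j\rho_{ij})]$, which is exactly \eqref{Eq:ED-RHCPalpha}. I do not anticipate any real obstacle; the only point needing care is recognizing that at $\alpha=0$ the objective degenerates from a rational function of $(u_j,v_j)$ into an affine function of the single horizon variable $w$, so the general \textbf{RHCP3} machinery must be bypassed in favor of the elementary monotonicity argument above, together with the observation that $\bar R$ and $\bar A$ do not depend on $j$ and may therefore be pulled out of the final $\arg\min$.
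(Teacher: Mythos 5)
Your proposal is correct and follows essentially the same route as the paper's proof: substitute $\alpha=0$ to reduce the \textbf{RHCP3} objective to $\bar{R}_{j}(t)+\tfrac{1}{2}\bar{A}_{j}w$, observe it is nondecreasing in $w$ so $u_{j}^{\ast}=v_{j}^{\ast}=0$, then carry out \eqref{Eq:RHCGenSolStep2} and rewrite via $\bar{R}_{j}=\bar{R}-R_{j}$, $\bar{A}_{j}=\bar{A}-A_{j}$. You merely spell out more explicitly (and correctly) why the general \textbf{RHCP3} machinery is bypassed, which the paper's "clearly the minimizing $U_{ij}$ choice" step leaves implicit.
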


\emph{Proof: } Recall that $U_{ij}=[u_{j},v_{j}]$ and $w=\rho_{ij}+u_{j}+v_{j}$ for \textbf{RHCP3}. Applying $\alpha=0$ in \eqref{Eq:ModifiedObjectiveForRHCP3} and using it in the \textbf{RHCP3} objective $J_{H}(U_{ij})=\frac{1}{w}\,\bar{J}_{i}(t,t+w)$ gives $J_{H}(U_{ij})=\bar{R}_{j}(t)+\frac{1}{2}\bar{A}_{j}(\rho_{ij}+u_{j}+v_{j})$. Therefore, clearly the minimizing $U_{ij}$ choice is $u_{j}^{\ast}=v_{j}^{\ast}=0$ (i.e., the solution to \eqref{Eq:RHCGenSolStep1}). Hence, the optimal next-visit target $j^{\ast}$ following from \eqref{Eq:RHCGenSolStep2} is $j\in\mathcal{N}_{i}$ with the minimum $\bar{R}_{j}(t)+\frac{1}{2}\bar{A}_{j}\rho_{ij}$ value. Using the relationships $\bar{R}_{j}=\bar{R}-R_{j}$ and $\bar{A}_{j}=\bar{A}-A_{j}$ (see \eqref{Eq:NeighborhoodParameters}), this $j^{\ast}$ choice yields \eqref{Eq:ED-RHCPalpha}. \hfill$\blacksquare$

Note that the first and second terms in \eqref{Eq:ED-RHCPalpha} approximate the contribution to the main objective \eqref{Eq:MainObjective} during the transit time $\rho_{ij}$ of the neighborhood and of target $j$ respectively. This is an important result (even if it is valid only under $\alpha=0$) as it provides a direct, simple and intuitive approach to obtain the next-visit target decision $j^{\ast}$ (skipping \eqref{Eq:RHCGenSolStep1}) for \textbf{RHCP3}.

Finally, to provide some intuition regarding how the choice of $\alpha$ affects the PMN problem performance, Fig \ref{Fig:AlphaVariation} shows examples of how performance varies with $\alpha$ in two specific PMN problems. We can see that $\alpha=0$ is sometimes directly the optimal choice while in other cases there may be a particular $\alpha<0.5$ which provides the optimal performance.

\begin{figure}[t]
\centering
\begin{subfigure}[b]{0.48\columnwidth}
\centering
\includegraphics[width = 1.5in]{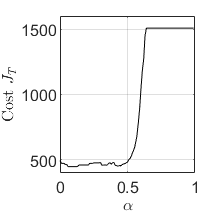}
\caption{Single-Agent Case in Fig. \ref{Fig:SASE4}}
\end{subfigure}
\hfill\begin{subfigure}[b]{0.48\columnwidth}
\centering
\includegraphics[width = 1.5in]{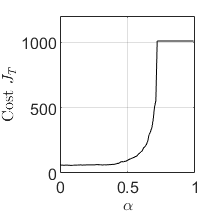}
\caption{Multi-Agent Case in Fig. \ref{Fig:MASE4}}
\end{subfigure}
\caption{Variation of $J_{T}$ in \eqref{Eq:MainObjective} with $\alpha$ in
\eqref{Eq:ModifiedObjectiveForRHCP3}.}%
\label{Fig:AlphaVariation}%
\end{figure}

\subsection{Extending \textbf{RHCP3} to a Two-Target Look Ahead}
\label{Sec:TwoTargetLookahead}

In accordance with the goal of the RHC being decentralized, \textbf{RHCP3} limits feasible agent trajectories to a one-target lookahead $j \in \mathcal{N}_{i}$ ahead of target $i$. Therefore, an obvious extension expected to provide improvements is to consider agent trajectories two targets ahead of target $i$, assuming such information can be provided to target $i$. This is achieved by extending the associated planning horizon of \textbf{RHCP3} as shown in Fig. \ref{Fig:ExtendedTimeline} so that it includes an extra target visit to $k\in\mathcal{N}_{j}$ beyond vising target $j\in\mathcal{N}_{i}$.

\begin{figure}[h]
\centering
\includegraphics[width=3.2in]{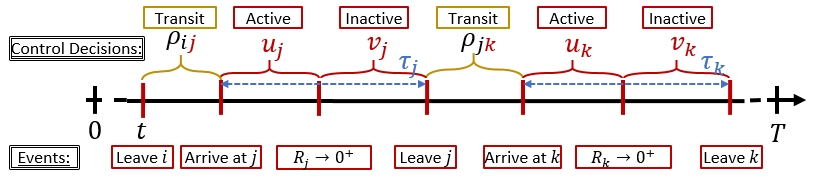}\caption{Extended
planning timeline for \textbf{RHCP3}.}%
\label{Fig:ExtendedTimeline}%
\end{figure}

In this case, the real-valued and discrete decision variables become $U_{ijk}=[u_{j},v_{j},u_{k},v_{k}]$ and $\{j,k\}$ respectively. The variable horizon $w$ defined in \eqref{Eq:VariableHorizon} becomes $w=\rho_{ij}+u_{j}+v+j+\rho_{jk}+u_{k}+v_{k}$. To obtain the optimal values of these decision variables, we first extend the concepts of neighborhood $\bar{\mathcal{N}}_{i}$ \eqref{Eq:Neighborset}, local objective $\bar{J}_{i}$ \eqref{Eq:LocalObjectiveFunction} and local state $X_{i}(t)$ respectively as $\tilde{\mathcal{N}}_{i}$, $\tilde{J}_{i}$, and $\tilde{X}_{i}(t)$ where
\[
\tilde{\mathcal{N}}_{i}=\cup_{j\in\bar{\mathcal{N}}_{i}}\mathcal{N}%
_{j},\ \ \ \tilde{J}_{i}=\sum_{m\in\tilde{\mathcal{N}}_{i}}J_{m}%
,\ \ \ \tilde{X}_{i}(t)=\{R_{m}(t);m\in\tilde{\mathcal{N}}_{i}\}.
\]
The extended \textbf{RHCP3} now takes the following form by modifying \eqref{Eq:RHCGenSolStep1}, \eqref{Eq:RHCGenSolStep2} and \eqref{Eq:RHCNewChoices}:
\begin{align}
&  U_{ijk}^{\ast}=\underset{U_{ijk}\in\mathbb{U}}{\arg\min}\ J_{H}(\tilde
{X}_{i}(t),U_{ijk};H);\forall j\in\mathcal{N}_{i},\forall k\in\mathcal{N}%
_{j},\label{Eq:RHCP3Extended1}\\
&  \{j^{\ast},k^{\ast}\}=\underset{j\in\mathcal{N}_{i},\ k\in\mathcal{N}_{j}}{\arg\min}\ J_{H}(\tilde{X}_{i}(t),U_{ijk}^{\ast};H) \mbox{ with }\label{Eq:RHCP3Extended2}
\end{align}%
\begin{equation}
\begin{aligned}\label{Eq:RHCNewChoices2} J_H(\tilde{X}_i(t),U_{ijk};H)=\frac{1}{w}\tilde{J}_i(t,t+w) \mbox{ and } \\ \mathbb{U} = \{U:U\in \mathbb{R}^4, U\geq 0, \vert U \vert +\rho_{ij}+\rho_{jk}\leq H\} \end{aligned}
\end{equation}

Note that target sequences $\{j,k\}$ where $\rho_{ij}+\rho_{jk}>H$ are omitted from evaluating \eqref{Eq:RHCP3Extended1}. Moreover, target sequences where one of the targets is covered are also omitted from evaluating \eqref{Eq:RHCP3Extended1}. As we will see next, there are many similarities between solving \eqref{Eq:RHCP3Extended1} and solving \eqref{Eq:RHCGenSolStep1} under \textbf{RHCP1}.

\paragraph{\textbf{Constraints}}

Following the same arguments used in obtaining \eqref{Eq:Lambdaj02}, an upper-bounds on control decisions $u_{j}$ and $u_{k}$ can be obtained as $u_{j}^{B}$ and $u_{k}^{B}$ where 
\begin{equation}
\label{Eq:Lambdaj03}\begin{aligned} u_j^B =& \frac{R_j(t)+A_j\rho_{ij}}{B_j-A_j} \mbox{ and }\\ u_k^B = u_k^B(u_j,v_j) =& \frac{R_k(t)+A_k(\rho_{ij}+\rho_{jk})}{B_k-A_k} + \frac{A_k}{B_k-A_k}\cdot(u_j+v_j). \end{aligned}
\end{equation}
Moreover, following \eqref{Eq:Constraints3}, four constraint set pairs $(\mathbb{U}_{jl},\mathbb{U}_{kn}),\, l,n\in\{1,2\}$ that define feasible $(u_{j},v_{j},u_{k},v_{k})$ in \eqref{Eq:RHCP3Extended1} (i.e., $U_{ijk}\in\mathbb{U}$) can be obtained as:
\begin{equation}
\label{Eq:Constraints4}
\begin{aligned} 
\mathbb{U}_{j1} =& \{0 \leq u_j \leq \bar{u}_j(u_k,v_k),\  v_j = 0\},\\ 
\mathbb{U}_{j2} =& \{u_j = u_j^B,\  0 \leq v_j \leq \bar{v}_j(u_k,v_k)\},\\ 
\mathbb{U}_{k1} =& \{0 \leq u_k \leq \bar{u}_k(u_j,v_j),\  v_k = 0\},\\
\mathbb{U}_{k2} =& \{u_k = u_k^B(u_j,v_j),\  0 \leq v_k \leq \bar{v}_k(u_j,v_j)\}, 
\end{aligned}
\end{equation}
where, $u_{j}^{B}$ and $u_{k}^{B}(u_{j},v_{j})$ are given in \eqref{Eq:Lambdaj03} and
\begin{align*}
\bar{u}_{j}(u_{k},v_{k})  &= \min\{u_{j}^{B},\ H-(\rho_{ij}+\rho_{jk}+u_{k}+v_{k})\},\\
\bar{v}_{j}(u_{k},v_{k}) &= H-(\rho_{ij}+u_{j}^{B}+\rho_{jk}+u_{k}+v_{k}),\\
\bar{u}_{k}(u_{j},v_{j})  &= \min\{u_{k}^{B}(u_{j},v_{j}),\ H-(\rho_{ij}+u_{j}+v_{j}+\rho_{jk})\},\\
\bar{v}_{k}(u_{j},v_{j})  &= H-(\rho_{ij}+u_{j}+v_{j}+\rho_{jk}+u_{k}^{B}(u_{j},v_{j})).
\end{align*}
The notation $\bar{u}_{j}$ and $\bar{v}_{j}$ respectively represent the limiting values of active and inactive times feasible at $j$. And $\bar{u}_{k}$ and $\bar{v}_{k}$ represent the same for target $k$.

\paragraph{\textbf{Objective}}

Following from the definition in \eqref{Eq:RHCNewChoices2}, the objective function of the extended \textbf{RHCP3} is taken as $J_{H}(U_{ijk}) = J_{H}(\tilde{X}_{i}(t),U_{ijk};H) = \frac{1}{w}\,\tilde{J}_{i}(t,t+w)$. To obtain an expression for $J_{H}(U_{ijk})$, extended neighborhood objective $\tilde{J}_{i}$ is decomposed as
\begin{equation}
\label{Eq:ObjDerivationOP3EStep1}
\tilde{J}_{i} = J_{j} + J_{k} + \sum_{m\in\tilde{\mathcal{N}}_{i}\backslash\{j,k\}}J_{m}.
\end{equation}
Next, the three terms $J_{j},\ J_{k}$ and $J_{m}$ are evaluated for a case where the agent goes from target $i$ to $j$ and then to $k$ following decisions $U_{ijk}$ during the period $[t,t+w)$. State trajectories for a such scenario is given in Fig. \ref{Fig:RHCP3EGraphs}. Theorem \ref{Th:Contribution} is utilized for this purpose to obtain:
\begin{align*}
    J_j=& 
    \frac{\rho_{ij}}{2}\left[ 2R_j(t) + A_j\rho_{ij} \right]  
    +\frac{u_j}{2}\left[2(R_j(t)+A_j\rho_{ij})-(B_j-A_j)u_j\right]\\
    &+\frac{(\rho_{jk}+u_k+v_k)}{2} \left[2(R_j(t)+A_j\rho_{ij}-(B_j-A_j)u_j)\right.\\
    &+\left.A_j(\rho_{jk}+u_k+v_k)\right],\\
    J_k=& \frac{(\rho_{ij}+u_j+v_j+\rho_{jk})}{2}\left[2R_k(t) + A_k(\rho_{ij}+u_j+v_j+\rho_{jk})\right] \\
    &+ \frac{u_k}{2}\left[ 2(R_k(t) + A_k(\rho_{ij}+u_j+v_j+\rho_{jk}))-(B_k-A_k)u_k\right],\\
    J_m=& \frac{(\rho_{ij}+u_j+v_j+\rho_{jk}+u_k+v_k)}{2} \left[ 2R_m(t)\right. \\
    &+A_m(\rho_{ij}+u_j+v_j+\rho_{jk}+u_k+v_k)\left.\right].
\end{align*}

\begin{figure}[!h]
    \centering
    \includegraphics[width=3.3in]{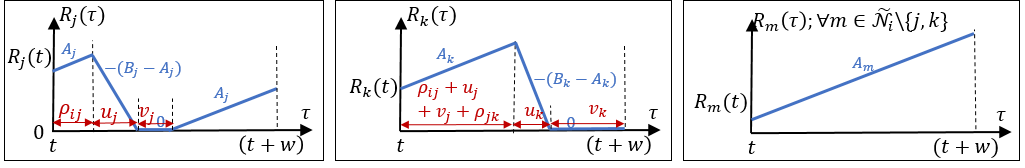}
    \caption{State trajectories of targets in $\bar{\mathcal{N}}_j$ during $[t,t+w)$.}
    \label{Fig:RHCP3EGraphs}
\end{figure}

Combining the above three results and substituting it in
\eqref{Eq:ObjDerivationOP3EStep1} gives the complete objective function $J_{H}(U_{ijk})$ as \begin{equation}\label{Eq:OP3EObjectiveSimplified}
\begin{split}
&J_H(u_j,v_j,u_k,v_k) = \hfill \\
&\frac{\left[
\begin{split}
C_1u_j^2 + C_2v_j^2 + C_3u_k^2 + C_4v_k^2 
+ C_5u_jv_j\\ + C_6u_ju_k + C_7u_jv_k
+ C_8v_ju_k + C_9v_jv_k + C_{10}u_kv_k\\ + C_{11}u_j 
+ C_{12}v_j + C_{13}u_k + C_{14}v_k + C_{15} 
\end{split}
\right]}{r_{ij} + u_j + v_j + \rho_{jk} + u_k + v_k},
\end{split}
\end{equation}
where
\begin{equation*}
\begin{aligned}
C_{1} &= \frac{\tilde{A}-B_j}{2},\ \ 
C_{2} = \frac{\tilde{A}_j}{2},\ \ 
C_{3} = \frac{\tilde{A}-B_k}{2},\ \ 
C_{4} = \frac{\tilde{A}_k}{2},\\
C_{5} &= \tilde{A}_{j},\ \ 
C_{6} = \tilde{A}-B_j,\ \
C_{7} = \tilde{A}_k-B_j,\ \ 
C_{8} = \tilde{A}_j,\\
C_{9} &= \tilde{A}_{jk},\ \ 
C_{10} = \tilde{A}_k,\ \  
C_{11} = \left[\tilde{R}(t)-B_j\rho_{jk}+\tilde{A}(\rho_{ij}+\rho_{jk})\right],\\ 
C_{12} &= \left[\tilde{R}_j(t)+\tilde{A}_j(\rho_{ij}+\rho_{jk})\right],\ \
C_{13} = \left[\tilde{R}(t)+\tilde{A}(\rho_{ij}+\rho_{jk})\right],\\
C_{14} &= \left[\tilde{R}_k(t)+\tilde{A}_k(\rho_{ij}+\rho_{jk})\right] \mbox{ and }\\ 
C_{15} &= \frac{(\rho_{ij}+\rho_{jk})}{2}\left[2\tilde{R}(t) + \tilde{A}(\rho_{ij}+\rho_{jk})\right],
\end{aligned}
\end{equation*}
with (neighborhood parameters)
\begin{equation} \label{Eq:NeighborhoodParameters2}
    \begin{aligned}
    \tilde{A}_{jk} = \sum_{m\in \tilde{\mathcal{N}}_i \backslash \{j,k\}} A_m,\ \ \ \ 
    \tilde{R}_{jk}(t) = \sum_{m\in \tilde{\mathcal{N}}_i \backslash \{j,k\}} R_m(t),\\
    \tilde{A}_j = \tilde{A}_{jk} + A_k,\ \ \tilde{A}_k = \tilde{A}_{jk} + A_j,\ \ 
    \tilde{A} = \tilde{A}_{jk} + A_j + A_k,\\
    \tilde{R}_j = \tilde{R}_{jk} + R_k,\ \ \tilde{R}_k = \tilde{R}_{jk} + R_k,\ \ 
    \tilde{R} = \tilde{R}_{jk} + R_j + R_k.\\
    \end{aligned}
\end{equation}
Note that all the coefficients stated above are non-negative except for $C_1,C_3,C_6,C_7$ and $C_{11}$.

\paragraph{\textbf{Solving the Extended \textbf{RHCP3} for }$(u_{j}^{*},v_{j}^{*}, u_{k}^{*},v_{k}^{*})$ and $\{j^{*},k^{*}\}$}

Based on \eqref{Eq:RHCP3Extended1} and \eqref{Eq:OP3EObjectiveSimplified}, the optimal decisions $(u_{j}^{*}, v_{j}^{*}, u_{k}^{*},v_{k}^{*})$ are given by
\begin{equation}
\label{Eq:OP3E_Formal}
u_{j}^{*}, v_{j}^{*}, u_{k}^{*}, v_{k}^{*})=\ \underset{(u_{j}, v_{j}, u_{k}, v_{k})}{\arg\min}\ {J_{H}(u_{j}, v_{j},u_{k}, v_{k}),}%
\end{equation}
where $(u_{j},v_{j})\in\mathbb{U}_{jl},\ (u_{k},v_{k})\in\mathbb{U}_{kn}$ for $l,n \in\{1,2\}$ as in \eqref{Eq:Constraints4}.

\paragraph*{\textbf{- Case 1}} 
$(u_j,v_j)\in\mathbb{U}_{j1}$, $(u_k,v_k)\in\mathbb{U}_{k1}$ in \eqref{Eq:Constraints4}: Then, $v_j = v_j^* = 0,\ v_k = v_k^* = 0$ and \eqref{Eq:OP3E_Formal} takes the form: 
\begin{equation}\label{Eq:OP3E_FormalPart1}
    \begin{aligned}
    (u_j^*,u_k^*) =\  &\underset{(u_j,u_k)}{\arg\min}\ {J_H(u_j,0,u_k,0)}\\
    &0 \leq u_j \leq u_j^B,\\
    &0 \leq u_k \leq u_k^B(u_j,0),\\
    &u_j+u_k \leq H-(\rho_{ij}+\rho_{jk}). 
    \end{aligned}
\end{equation}
To write the constraints in \eqref{Eq:OP3E_FormalPart1}, \eqref{Eq:Constraints4} and the relationships:
\begin{align*}
&u_j \leq \bar{u}_j(u_k,0) \iff u_j \leq u_j^B \And u_j \leq  H-(\rho_{ij}+\rho_{jk}+u_k),\\
&u_k \leq \bar{u}_k(u_j,0) \iff \\
&u_k \leq u_k^B(u_j,0) \And 
  u_k \leq H-(\rho_{ij}+u_j+\rho_{jk}),
\end{align*} 
have been used. Note that $u_k^B(u_j,0)$ is linear and increasing with $u_j$ (see \eqref{Eq:Lambdaj03}).

\paragraph*{\textbf{- Case 2}} 
$(u_j,v_j)\in\mathbb{U}_{j1}$, $(u_k,v_k)\in\mathbb{U}_{k2}$ in \eqref{Eq:Constraints4}: Then, $v_j = v_j^* = 0,\ u_k = u_k^* = u_k^B(u_j^*,0)$ and \eqref{Eq:OP3E_Formal} takes the form: 
\begin{equation}\label{Eq:OP3E_FormalPart2}
    \begin{aligned}
    (u_j^*,v_k^*) =\  &\underset{(u_j,v_i)}{\arg\min}\ {J_H(u_j,0,u_k^B(u_j,0),v_k)}\\
    &0 \leq u_j \leq u_j^B,\\
    &v_k \geq 0,\\
    &u_j+u_k^B(u_j,0)+v_k \leq H-(\rho_{ij}+\rho_{jk}). 
    \end{aligned}
\end{equation}
The constraints in \eqref{Eq:OP3E_FormalPart2} are from \eqref{Eq:Constraints4} and the relationships:
\begin{align*}
&u_j \leq \bar{u}_j(u_k^B(u_j,0),v_k) \iff \\
&u_j \leq u_j^B \And u_j \leq   H-(\rho_{ij}+\rho_{jk}+u_k^B(u_j,0)+v_k) \mbox{ and }\\
&v_k \leq \bar{v}_k(u_j,0) \iff v_k \leq H-(\rho_{ij}+u_j+\rho_{jk}+u_k^B(u_j,0)).
\end{align*}

\paragraph*{\textbf{- Case 3}}
$(u_j,v_j)\in\mathbb{U}_{j2}$, $(u_k,v_k)\in\mathbb{U}_{k1}$ in \eqref{Eq:Constraints4}: Then, $u_j = u_j^* = u_j^B,\ v_k = v_k^* = 0$ and \eqref{Eq:OP3E_Formal} takes the form: 
\begin{equation}\label{Eq:OP3E_FormalPart3}
    \begin{aligned}
    (v_j^*,u_k^*) =\  &\underset{(v_j,u_k)}{\arg\min}\ {J_H(u_j^B,v_j,u_k,0)}\\
    &v_j \geq 0,\\
    &0 \leq u_k \leq u_k^B(u_j^B,v_j),\\
    &v_j+u_k \leq H-(\rho_{ij}+u_j^B+\rho_{jk}). 
    \end{aligned}
\end{equation}
The constraints in \eqref{Eq:OP3E_FormalPart3}, are from \eqref{Eq:Constraints4} and the relationships:
\begin{align*}
&v_j \leq \bar{v}_j(u_k,0) \iff v_j \leq  H-(\rho_{ij}+u_j^B+\rho_{jk}+u_k) \mbox{ and }\\
&u_k \leq \bar{u}_k(u_j^B,v_j) \iff \\
&u_k \leq u_k^B(u_j^B,v_j) \And 
  u_k \leq H-(\rho_{ij}+u_j^B+v_j+\rho_{jk}).
\end{align*} 
Note that $u_k^B(u_j^B,v_j)$ is linear and increasing with $v_j$ \eqref{Eq:Lambdaj03}.

\paragraph*{\textbf{- Case 4}} 
$(u_j,v_j)\in\mathbb{U}_{j2}$, $(u_k,v_k)\in\mathbb{U}_{k2}$ in \eqref{Eq:Constraints4}: Then, $u_j = u_j^* = u_j^B,\ u_k = u_k^* = u_k^B(u_j^B,v_j^*),\ $ and \eqref{Eq:OP3E_Formal} takes the form: 
\begin{equation}\label{Eq:OP3E_FormalPart4}
    \begin{aligned}
    (v_j^*,v_k^*) =\  \underset{(v_j,v_k)}{\arg\min}\ &{J_H(u_j^B,v_j,u_k^B(u_j^B,v_j),v_k)}\\
    v_j &\geq 0, \\ 
    v_k &\geq 0, \\
    v_j + v_k + u_k^B(u_j^B,v_j) &\leq H-(\rho_{ij}+u_j^B+\rho_{jk}).
    \end{aligned}
\end{equation}
To write the last constraint in \eqref{Eq:OP3E_FormalPart4}, \eqref{Eq:Constraints4} and the relationships:
\begin{align*}
v_j \leq&\ \bar{v}_j(u_k^B(u_j^B,v_j),v_k) \iff \\ 
&v_j \leq H-(\rho_{ij}+u_j^B+\rho_{jk}+u_k^B(u_j^B,v_j)+v_k), \mbox{ and, }\\
v_k \leq&\ \bar{v}_k(u_j^B,v_j) \iff \\
&v_k \leq H-(\rho_{ij}+u_j^B+v_j+\rho_{jk}+u_k^B(u_j^B,v_j)),
\end{align*} 
have been used.

\paragraph*{\textbf{- Combined Result}}
Similar to all the sub-problems discussed under \textbf{RHCP2} and \textbf{RHCP1}, the sub-problems in \eqref{Eq:OP3E_FormalPart1}, \eqref{Eq:OP3E_FormalPart2}, \eqref{Eq:OP3E_FormalPart3} and \eqref{Eq:OP3E_FormalPart4} formulated above belong to the same class of constrained rational function optimization problems (see  \eqref{Eq:ConstrainedOptimizationBivariate}) discussed in Appendix A. Therefore, each of these four problems is solved exploiting the computationally cheap theoretical solution presented in Appendix A. Exact mapping details of each of these four problems to the generic RFOP form in \eqref{Eq:ConstrainedOptimizationBivariate} are similar to those of \textbf{RHCP1} discussed before. Therefore, we omit those details for brevity.

Upon obtaining solutions to \eqref{Eq:OP3E_FormalPart1}-\eqref{Eq:OP3E_FormalPart4}, the main optimization problem \eqref{Eq:OP3E_Formal} is solved by just comparing objective function values of the obtained individual solutions.  

\paragraph{\textbf{Solving for Optimal Next Destinations: $\{j^*,k^*\}$}}
The second step of the extended \textbf{RHCP3} (i.e.,  \eqref{Eq:RHCP3Extended2}) is to choose the optimum target sequence $\{j,k\}$ according to   
\begin{equation}\label{Eq:OP3E_FormalStep2}
\{j^*,k^*\} = \underset{j\in\mathcal{N}_i, k\in \mathcal{N}_j}{\arg\min}\ J_H(u_j^*,v_j^*,u_k^*,v_k^*).    \end{equation}
Note that this step requires the cost value of the optimal solution $U_{ijk}^{*} = [u_j^*,v_j^*,u_k^*,v_k^*]$ obtained for each $j\in\mathcal{N}_i$ and $k\in\mathcal{N}_j$ (in \eqref{Eq:OP3E_Formal}). 

In the case of extended \textbf{RHCP3}, as shown in Fig. \ref{Fig:ActionHorizon}, above $j^*$ in \eqref{Eq:OP3E_FormalStep2} defines the \textquotedblleft Action\textquotedblright\ that the agent has to take at (current) time $t$. In other words, the agent $a$  has to leave target $i$ at time $t$ and follow the path $(i,j^*)\in\mathcal{E}$ to visit target $j^*$.

\paragraph{\textbf{Discussion}}

Aside from the obvious increment in computational requirements, one clear disadvantage of the Ex-RHC method is that agents now require more information to make their next-visit target $j^{\ast}$ decisions thus compromising the distributed nature of the solution. Even though we expect that the payoff of such a compromise is a considerable performance improvement, this is far from evident in the numerical examples shown in Section \ref{Sec:SimulationResults}. One reason may be the substantial errors in estimating $R_{k}(t)$ trajectories (required to evaluate $J_{k}$ in \eqref{Eq:ObjDerivationOP3EStep1} (or \eqref{Eq:ModifiedObjectiveForRHCP3_2})) when there are multiple agents and when target $k$ is located far from target $i$. These errors indirectly and negatively affect the crucial $j^{\ast}$ decision. The Ex-RHC method generally performs better when (\romannum{1}) the number of both agents and targets are relatively low, and (\romannum{2}) transit times in the graph are relatively short.

\subsection{Using a Weighted Objective in Extended \textbf{RHCP3}}\label{SubSec:EDRHCAlphaBetaMethod}
Inspired by the modification proposed in Section \ref{SubSec:EDRHCAlphaMethod} for the \textbf{RHCP3} of the ED-RHC method, we now reform the above extended \textbf{RHCP3} of the Ex-ED-RHC method. First, the neighborhood objective function decomposition \eqref{Eq:ObjDerivationOP3EStep1} is modified so as to incorporate weight factors $\alpha, \beta \in [0,1]$ as 
\begin{equation}\label{Eq:ModifiedObjectiveForRHCP3_2}
    \tilde{J}_i = \alpha J_j + \beta J_k +  (1-\alpha-\beta)\sum_{m\in\tilde{\mathcal{N}}_i\backslash\{j,k\}} J_m.
\end{equation}

The motivation behind \eqref{Eq:ModifiedObjectiveForRHCP3_2} is to emphasize the contribution to the global cost by the neglected neighborhood targets $m\in\tilde{\mathcal{N}}_i\backslash \{j,k\}$ and by the target $k$ - due to the choice of $j\in\mathcal{N}_i$ as the immediate next target to visit. Hence $\alpha < \frac{1}{3}$ and $\beta < \frac{2}{3}-\alpha$ is preferred and we propose to use $\alpha = \frac{1}{\vert \bar{\mathcal{N}}_i \vert^2}$ and $\beta = \frac{1}{\vert \bar{\mathcal{N}}_i \vert}$ as nominal choices. We represent the Ex-ED-RHC method which uses this modification as ED-RHC$^{\alpha\beta}$. 

\begin{lemma}
\label{Lm:ED-RHCPalphabeta} 
In Ex-RHC$^{\alpha\beta}$, when $\alpha=\beta=0$ is used, the optimum next-visit target sequence $\{j^{*},k^{*}\}$ (i.e., the solution of \eqref{Eq:RHCP3Extended2}) for the extended \textbf{RHCP3} is, 
\begin{align}
\label{Eq:ED-RHCPalphabeta}\{j^{*},k^{*}\} = \underset{j \in\mathcal{N}%
_{i},\ k \in\mathcal{N}_{j}}{\arg\min} \Big[(2\tilde{R}(t)+\tilde{A}(\rho
_{ij}+\rho_{jk})) - (2R_{j}(t)\nonumber\\
+A_{j}(\rho_{ij}+\rho_{jk})) - (2R_{k}(t)+A_{k}(\rho_{ij}+\rho_{jk})) \Big].
\end{align}
\end{lemma}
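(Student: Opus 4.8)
The plan is to mirror the argument used for Lemma~\ref{Lm:ED-RHCPalpha}. The key observation is that putting $\alpha=\beta=0$ in the weighted decomposition \eqref{Eq:ModifiedObjectiveForRHCP3_2} annihilates the contributions $J_j$ and $J_k$ of the two look-ahead targets, leaving $\tilde{J}_i(t,t+w)=\sum_{m\in\tilde{\mathcal{N}}_i\backslash\{j,k\}}J_m(t,t+w)$, i.e.\ only the ``neglected'' neighbors survive. Since none of those targets is visited along the planned path $i\to j\to k$, each of them is, under the covering/projection convention used to render $\tilde{J}_i$ analytic, treated as evolving with $\dot{R}_m\equiv A_m$ throughout $[t,t+w)$.

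Given this, I would apply Theorem~\ref{Th:Contribution} to each neglected target to get $J_m(t,t+w)=wR_m(t)+\tfrac12 A_m w^{2}$, sum over $m\in\tilde{\mathcal{N}}_i\backslash\{j,k\}$ using the aggregate notation in \eqref{Eq:NeighborhoodParameters2}, and obtain $\tilde{J}_i(t,t+w)=w\tilde{R}_{jk}(t)+\tfrac12\tilde{A}_{jk}w^{2}$. Dividing by $w$ as in \eqref{Eq:RHCNewChoices2} then yields the simple form $J_H(\tilde{X}_i(t),U_{ijk};H)=\tilde{R}_{jk}(t)+\tfrac12\tilde{A}_{jk}\,w$ with $w=\rho_{ij}+u_j+v_j+\rho_{jk}+u_k+v_k$; it is cleanest to derive this directly from \eqref{Eq:ObjDerivationOP3EStep1} rather than from the coefficient list in \eqref{Eq:OP3EObjectiveSimplified}, which was written for the unweighted decomposition. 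Because $\tilde{A}_{jk}\ge 0$, this cost is nondecreasing in $w$, so for every admissible pair $(j,k)$ the inner problem \eqref{Eq:RHCP3Extended1} is solved by $u_j^\ast=v_j^\ast=u_k^\ast=v_k^\ast=0$, at which $w=\rho_{ij}+\rho_{jk}$ and the optimal inner cost is $\tilde{R}_{jk}(t)+\tfrac12\tilde{A}_{jk}(\rho_{ij}+\rho_{jk})$.

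It then remains to carry this optimal inner cost into \eqref{Eq:RHCP3Extended2}, substitute $\tilde{R}_{jk}=\tilde{R}-R_j-R_k$ and $\tilde{A}_{jk}=\tilde{A}-A_j-A_k$ from \eqref{Eq:NeighborhoodParameters2}, scale the objective by the positive constant $2$ (which does not move the minimizer), and regroup the terms to land exactly on \eqref{Eq:ED-RHCPalphabeta}. The only genuinely delicate point --- and the step I would write out most carefully --- is the justification that every neglected target may be assumed to satisfy $\dot{R}_m\equiv A_m$ over the whole variable horizon, since this is precisely what forces the $w$-dependence of $J_H$ to be linear and hence makes the inner minimization trivial; this rests on the neighbor-set modification \eqref{Eq:ReducedNeighborhood} and the projection rule used when computing $\tilde{J}_i$. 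Everything else is routine bookkeeping parallel to the proof of Lemma~\ref{Lm:ED-RHCPalpha}.
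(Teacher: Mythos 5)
Your proposal is correct and follows essentially the same route as the paper: the paper proves the lemma by noting it follows "the same steps as in the proof of Lemma \ref{Lm:ED-RHCPalpha}", which is exactly what you do — with $\alpha=\beta=0$ only the neglected targets survive, each grows at rate $A_m$ over the horizon, so $J_H=\tilde{R}_{jk}(t)+\tfrac12\tilde{A}_{jk}w$ is nondecreasing in $w$, forcing $u_j^\ast=v_j^\ast=u_k^\ast=v_k^\ast=0$, and substituting $\tilde{R}_{jk}=\tilde{R}-R_j-R_k$, $\tilde{A}_{jk}=\tilde{A}-A_j-A_k$ gives \eqref{Eq:ED-RHCPalphabeta}. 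Your explicit attention to the projection/covering convention that justifies $\dot{R}_m\equiv A_m$ on the whole horizon is a sound (and slightly more careful) spelling-out of a step the paper leaves implicit.
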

\emph{Proof: }
This result can be obtained directly following the same steps as in the proof of Lemma \ref{Lm:ED-RHCPalpha}. Therefore it is omitted.
\hfill $\blacksquare$

The above result generalizes Lemma \ref{Lm:ED-RHCPalpha} and provides a direct, simple and intuitive way to obtain the next target decision $j^*$ for the extended \textbf{RHCP3} (skipping \eqref{Eq:RHCP3Extended1}). Note that the three terms in \eqref{Eq:ED-RHCPalphabeta} respectively approximates the contribution to the main objective \eqref{Eq:MainObjective} during transit time $\rho_{ij}+\rho_{jk}$ by the neighborhood, target $j$ and target $k$.

To provide an intuition on how the choices of $\alpha$ and $\beta$ affects the PMG problem performance, we provide Figs. \ref{Fig:AlphaBetaVariation1} and \ref{Fig:AlphaBetaVariation2} which shows that in some PMG scenarios $\alpha=\beta=0$ is directly the optimum choice and in some others, there may be a particular $\alpha<\frac{1}{3}$ and a $\beta<\frac{2}{3}-\alpha$ choice which provides the optimum performance.   

\begin{figure}[!t]
     \centering
     \begin{subfigure}[b]{0.48\columnwidth}
         \centering
         \includegraphics[width = 1.5in]{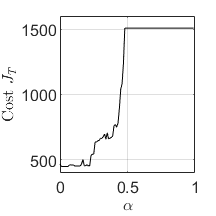}
         \caption{$J_T$ vs $\alpha$ with fixed $\beta$}
     \end{subfigure}
     \begin{subfigure}[b]{0.48\columnwidth}
         \centering
         \includegraphics[width = 1.5in]{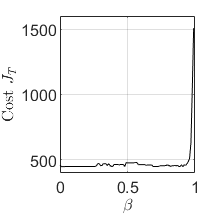}
         \caption{$J_T$ vs $\beta$ with fixed $\alpha$}
     \end{subfigure}
     \caption{Single-Agent Case (SASE4 in Fig. \ref{Fig:SASE4}).}
    \label{Fig:AlphaBetaVariation1}
\end{figure}

\begin{figure}[!t]
     \centering
     \begin{subfigure}[b]{0.48\columnwidth}
         \centering
         \includegraphics[width = 1.5in]{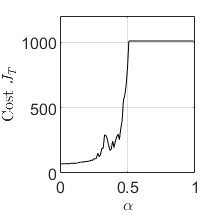}
         \caption{$J_T$ vs $\alpha$ with fixed $\beta$}
     \end{subfigure}
     \begin{subfigure}[b]{0.48\columnwidth}
         \centering
         \includegraphics[width = 1.5in]{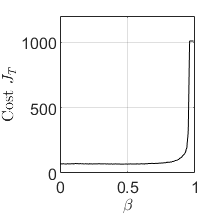}
         \caption{$J_T$ vs $\beta$ with fixed $\alpha$}
     \end{subfigure}
     \caption{Multi-Agent Case (MASE4 in Fig. \ref{Fig:MASE4}).}
    \label{Fig:AlphaBetaVariation2}
\end{figure}

\section{Simulation Results} \label{Sec:SimulationResults}

\subsection{Performance Comparison}

We begin by comparing the performance measured through $J_{T}$ in \eqref{Eq:MainObjective} over several different PMN problem configurations using: 
(\romannum{1}) The IPA-TCP method \cite{Zhou2019} 
(\romannum{2}) The RHC method proposed in Section \ref{Sec:SolvingRHCPs}, 
(\romannum{3}) The RHC$^{\alpha}$ method (Section \ref{SubSec:EDRHCAlphaMethod}), and
(\romannum{4}) The Ex-RHC$^{\alpha\beta}$ method (Section \ref{Sec:TwoTargetLookahead}). All of these control techniques have been implemented in a JavaScript based simulator, which is made available at \href{http://www.bu.edu/codes/simulations/shiran27/PersistentMonitoring/}{http://www.bu.edu/codes/simulations/shiran27/ PersistentMonitoring/}. Readers are invited to reproduce the reported results and also to try new problem configurations using the developed interactive simulator.


We specifically consider three configurations with a single agent as shown in Figs. \ref{Fig:SASE1}-\ref{Fig:SASE4} and with multiple agents as shown in Figs. \ref{Fig:MASE1}-\ref{Fig:MASE4}. Blue circles represent the targets, while black lines represent trajectory segments that agents can use to travel between targets. Red triangles and yellow vertical bars indicate the agent locations and the target uncertainty levels, respectively. Moreover, since both $s_{a}(t)$ and $R_{i}(t)$ are time-varying, the figures show only their state at time $t=T$.

In each problem configuration, the target parameters were chosen as $A_{i}=1,\ B_{i}=10$ and $R_{i}(0)=0.5,\ \forall i\in\mathcal{T}$, and their locations are specified and placed inside a $600\times600$ mission space. The overall time period is $T=500$. Each agent is assumed to follow first-order dynamics (similar to \cite{Zhou2019, Welikala2019P3}) with a maximum speed of $V_{ij}=50$ units per second on each trajectory segment $(i,j)\in\mathcal{E}$. The initial locations of the agents were chosen such that they are uniformly distributed among the targets at time $t=0$ (i.e., $s_{a}(0)=Y_{i}$ with $i=1+(a-1)\times\mathrm{round}(M/N)$). The (non-critical) upper-bound for each planning horizon $w$ was chosen as $H=T/2=250$ for the three RHC approaches and $\alpha=\frac{1}{|\mathcal{N}_{i}|^{2}},\ \beta=\frac{1}{|\mathcal{N}_{i}|}$ were used in the RHC$^{\alpha}$ and Ex-RHC$^{\alpha\beta}$ methods.

Each sub-figure caption in Figs. \ref{Fig:SASE1}-\ref{Fig:MASE4} provides the cost value $J_{T}$ in \eqref{Eq:MainObjective} observed under each controller (i.e., either IPA-TCP, RHC, RHC$^{\alpha}$ or Ex-RHC$^{\alpha\beta}$). These cost values are summarized in Tab. \ref{Tab:SummaryResults}. From the observed results, note that the proposed RHC method has performed considerably better (on average $50.4\%$ better) than the IPA-TCP method for multi-agent problem configurations. For single-agent problem configurations, both methods have performed equally except for SASE3. The proposed RHC$^{\alpha}$ approach further improves these performances compared to the IPA-TCP method by an average of $66.8\%$ for multi-agent situations and by $9.76\%$ for single-agent situations. On the other hand, the proposed Ex-RHC$^{\alpha\beta}$ method provides on average $63.3\%$ and $9.9\%$ improvements respectively for multi-agent and single-agent cases compared to the IPA-TCP method. In light of the fact that Ex-RHC$^\alpha\beta$ compromises the distributed nature of the original RHC, all evidence points to the conclusion that there is no benefit to this extension for most network topologies.

\begin{table}[]
\centering
\caption{Summary of obtained results across all the simulation examples.}
\label{Tab:SummaryResults}
\resizebox{\columnwidth}{!}{%
\begin{tabular}{|l|c|r|r|r|c|r|r|r|}
\hline
\multirow{2}{*}{$J_T$ in \eqref{Eq:MainObjective}} & \multicolumn{4}{c|}{\begin{tabular}[c]{@{}c@{}}Singe Agent \\ Simulation Examples\end{tabular}} & \multicolumn{4}{c|}{\begin{tabular}[c]{@{}c@{}}Multi-Agent \\ Simulation Examples\end{tabular}} \\ \cline{2-9} 
 & 1 & \multicolumn{1}{c|}{2} & \multicolumn{1}{c|}{3} & \multicolumn{1}{c|}{4} & 1 & \multicolumn{1}{c|}{2} & \multicolumn{1}{c|}{3} & \multicolumn{1}{c|}{4} \\ \hline
IPA-TCP & \multicolumn{1}{r|}{831.3} & 129.2 & 651.3 & 497.9 & \multicolumn{1}{r|}{270.2} & 91.7 & 274.0 & 201.3 \\ \hline
RHC & \multicolumn{1}{r|}{791.1} & 141.4 & 912.3 & 490.4 & \multicolumn{1}{r|}{105.5} & 63.7 & 114.1 & 97.2 \\ \hline
RHC$^\alpha$ & \multicolumn{1}{r|}{\textbf{790.1}} & 121.3 & \textbf{527.7} & 464.8 & \multicolumn{1}{r|}{96.6} & \textbf{40.4} & \textbf{63.7} & \textbf{60.1} \\ \hline
Ex-RHC$^{\alpha\beta}$ & \multicolumn{1}{r|}{790.1} & \textbf{120.9} & 529.7 & \textbf{449.5} & \multicolumn{1}{r|}{\textbf{95.7}} & 45.0 & 75.0 & 70.2 \\ \hline
\end{tabular}%
}
\end{table}

\begin{figure}[!h]
     \centering
     \begin{subfigure}[b]{0.24\columnwidth}
         \centering
         \includegraphics[width = \textwidth]{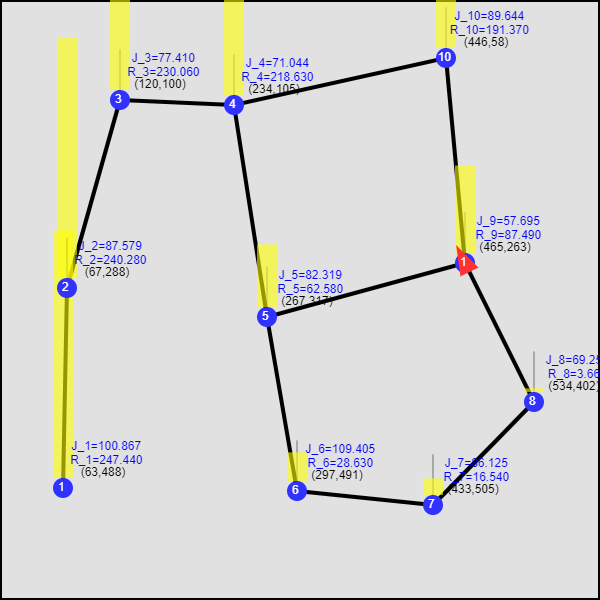}
         \caption{IPA-TCP:\\ \centering{$J_T= 831.3$}.}
     \end{subfigure}
     \begin{subfigure}[b]{0.24\columnwidth}
         \centering
         \includegraphics[width = \textwidth]{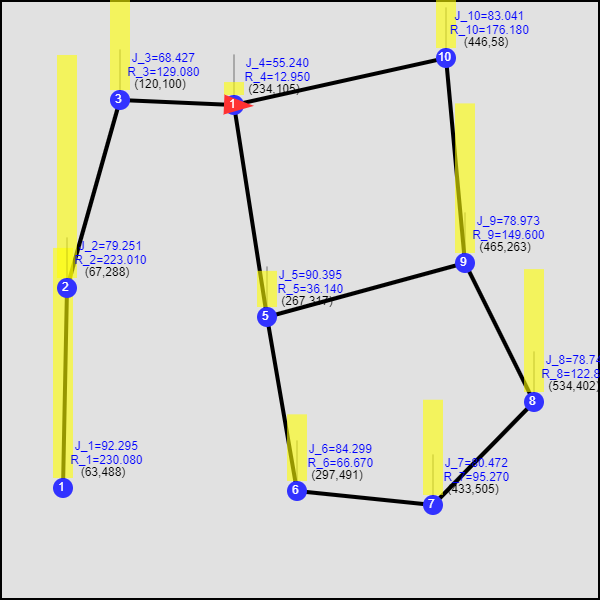}
         \caption{RHC:\\ \centering{$J_T= \textbf{791.1}$.}}
     \end{subfigure}
     \begin{subfigure}[b]{0.24\columnwidth}
         \centering
         \includegraphics[width = \textwidth]{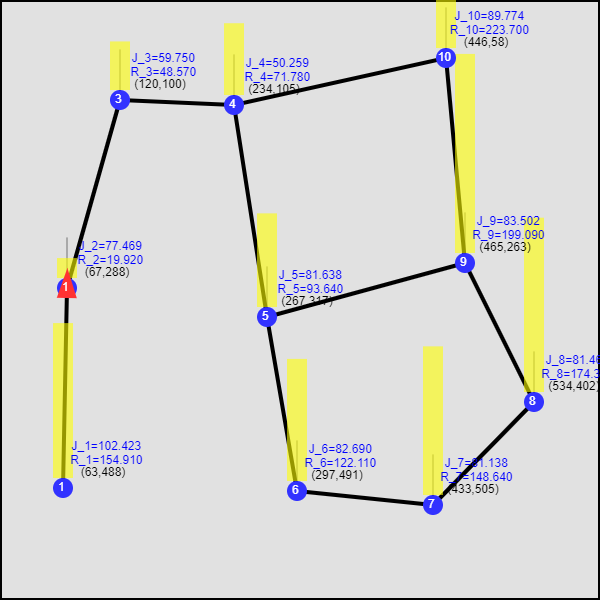}
         \caption{RHC$^\alpha$:\\ \centering{$J_T= 790.1$}.}
     \end{subfigure}
     \begin{subfigure}[b]{0.24\columnwidth}
         \centering
         \includegraphics[width = \textwidth]{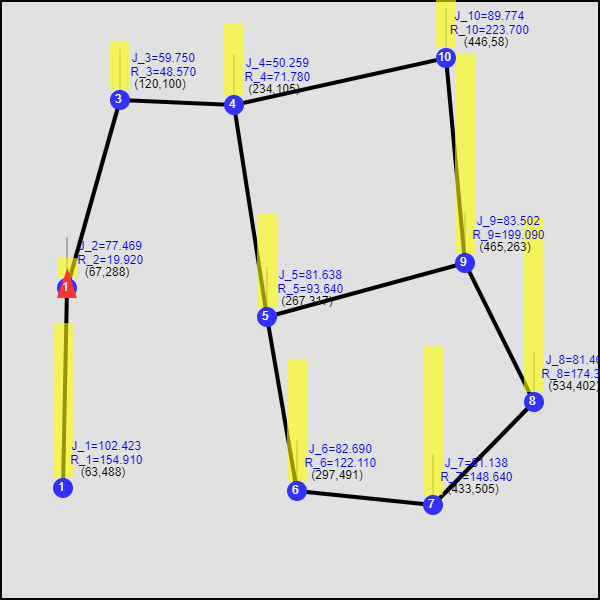}
         \caption{Ex-RHC$^{\alpha\beta}$:\\ \centering{$J_T= 790.1$}.}
     \end{subfigure}
    \caption{Single-agent simulation example 1 (SASE1).}
    \label{Fig:SASE1}
\end{figure}

\begin{figure}[!h]
     \centering
     \begin{subfigure}[b]{0.24\columnwidth}
         \centering
         \includegraphics[width = \textwidth]{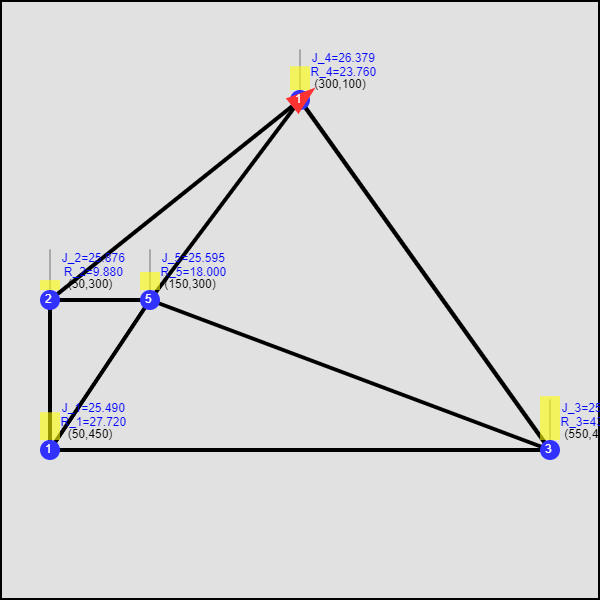}
         \caption{IPA-TCP:\\ \centering{$J_T= 129.2$.}}
     \end{subfigure}
     \begin{subfigure}[b]{0.24\columnwidth}
         \centering
         \includegraphics[width = \textwidth]{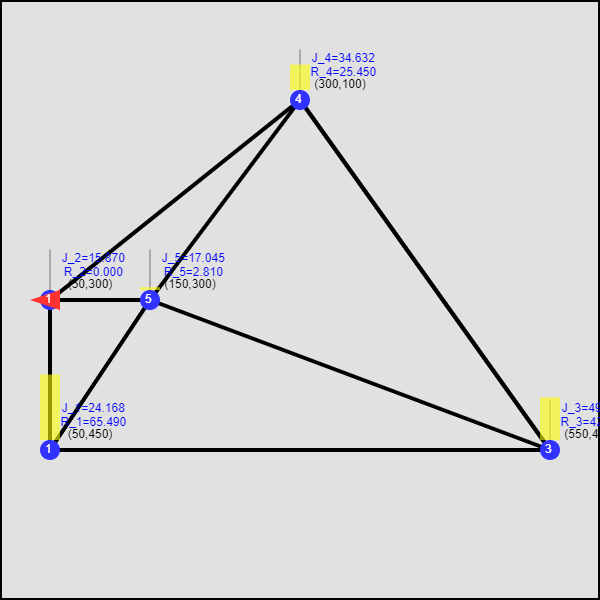}
         \caption{RHC:\\ \centering{$J_T= 141.4$.}}
     \end{subfigure}
     \begin{subfigure}[b]{0.24\columnwidth}
         \centering
         \includegraphics[width = \textwidth]{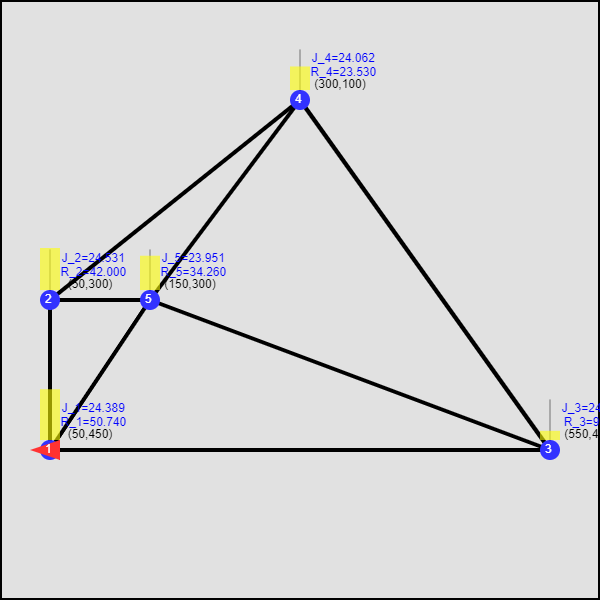}
         \caption{RHC$^\alpha$:\\ \centering{$J_T= 121.3$}.}
     \end{subfigure}
     \begin{subfigure}[b]{0.24\columnwidth}
         \centering
         \includegraphics[width = \textwidth]{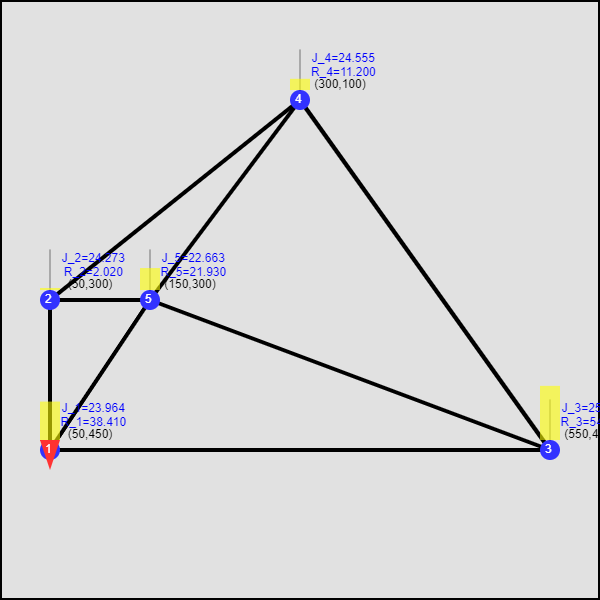}
         \caption{Ex-RHC$^{\alpha\beta}$:\\ \centering{$J_T= \textbf{120.9}$}.}
     \end{subfigure}
    \caption{Single-agent simulation example 2 (SASE2).}
    \label{Fig:SASE2}
\end{figure}

\begin{figure}[!h]
     \centering
     \begin{subfigure}[b]{0.24\columnwidth}
         \centering
         \includegraphics[width = \textwidth]{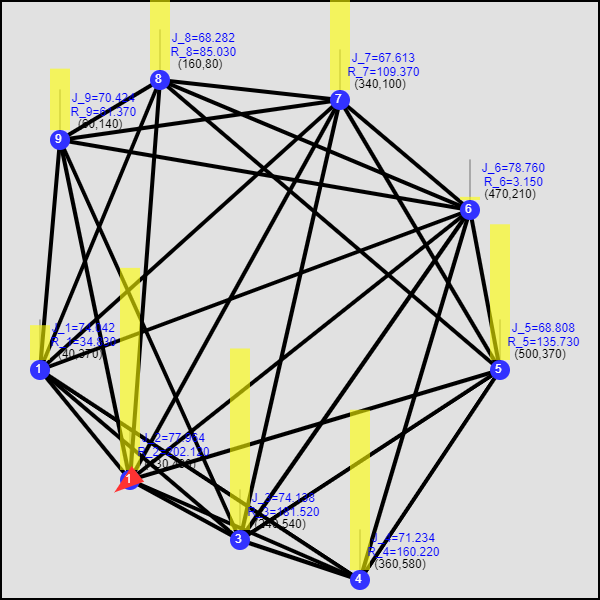}
         \caption{IPA-TCP:\\ \centering{$J_T= 651.3$}.}
     \end{subfigure}
     \begin{subfigure}[b]{0.24\columnwidth}
         \centering
         \includegraphics[width = \textwidth]{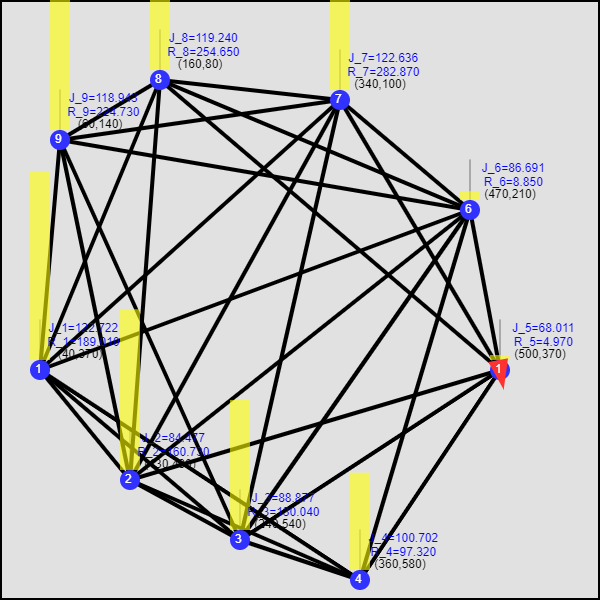}
         \caption{RHC:\\ \centering{$J_T= 912.2$.}}
     \end{subfigure}
     \begin{subfigure}[b]{0.24\columnwidth}
         \centering
         \includegraphics[width = \textwidth]{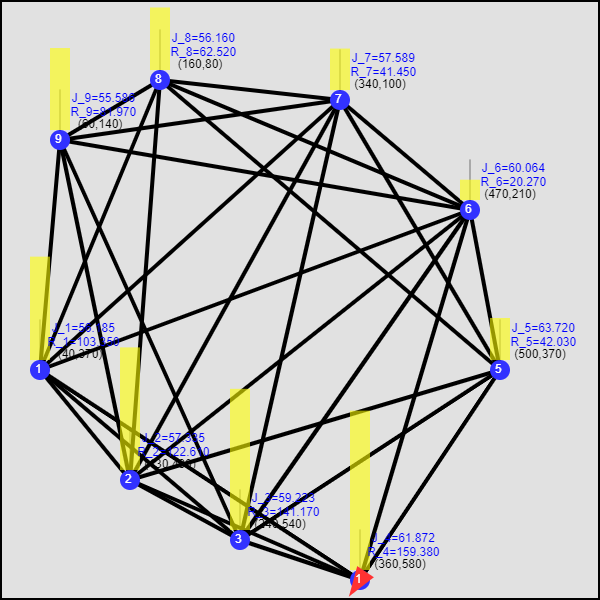}
         \caption{RHC$^\alpha$:\\ \centering{$J_T= \textbf{527.7}$}.}
     \end{subfigure}
     \begin{subfigure}[b]{0.24\columnwidth}
         \centering
         \includegraphics[width = \textwidth]{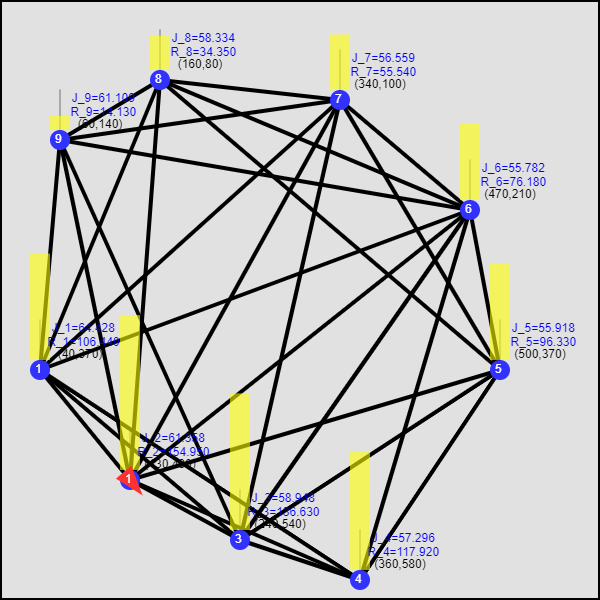}
         \caption{Ex-RHC$^{\alpha\beta}$:\\ \centering{$J_T= 529.7$}.}
     \end{subfigure}
    \caption{Single-agent simulation example 3 (SASE3).}
    \label{Fig:SASE3}
\end{figure}

\begin{figure}[!h]
     \centering
     \begin{subfigure}[b]{0.24\columnwidth}
         \centering
         \includegraphics[width = \textwidth]{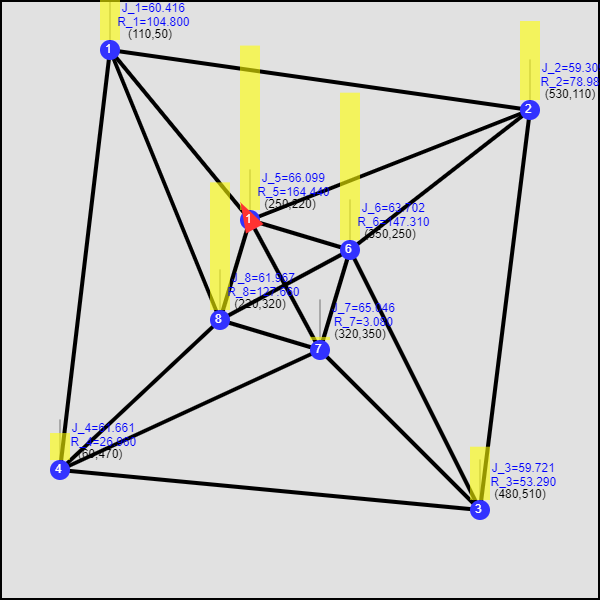}
         \caption{IPA-TCP: \\ \centering{$J_T= 497.9$.}}
     \end{subfigure}
     \begin{subfigure}[b]{0.24\columnwidth}
         \centering
         \includegraphics[width = \textwidth]{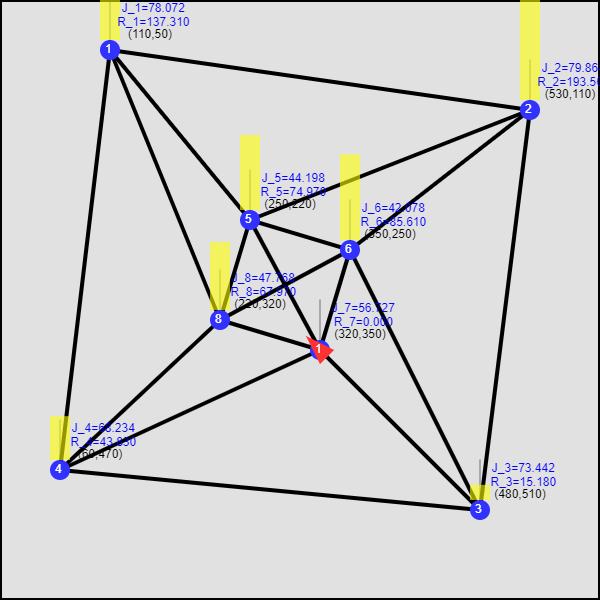}
         \caption{RHC: \\ \centering{$J_T=490.4$.}}
     \end{subfigure}
     \begin{subfigure}[b]{0.24\columnwidth}
         \centering
         \includegraphics[width = \textwidth]{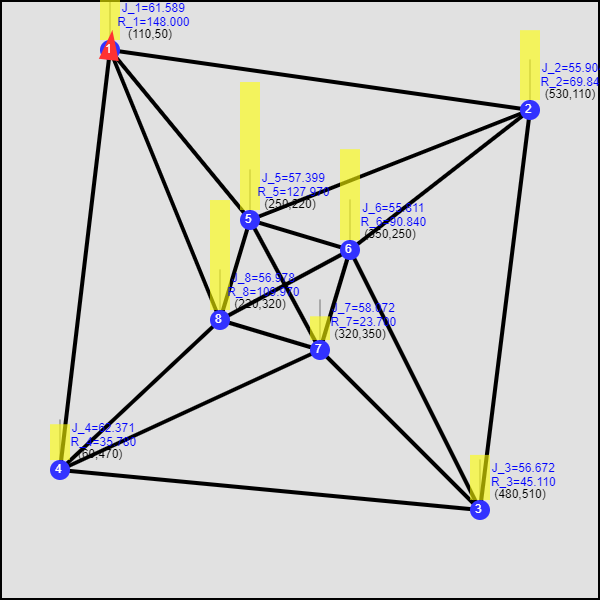}
         \caption{RHC$^\alpha$:\\ \centering{$J_T=464.8$}.}
     \end{subfigure}
     \begin{subfigure}[b]{0.24\columnwidth}
         \centering
         \includegraphics[width = \textwidth]{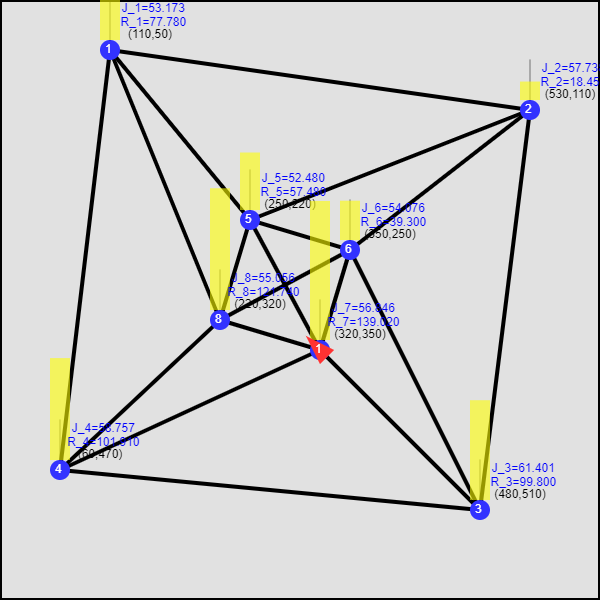}
         \caption{Ex-RHC$^{\alpha\beta}$:\\ \centering{$J_T=\textbf{449.5}$}.}
     \end{subfigure}
    \caption{Single-agent simulation example 4 (SASE4).}
    \label{Fig:SASE4}
\end{figure}

\begin{figure}[!h]
     \centering
     \begin{subfigure}[b]{0.24\columnwidth}
         \centering
         \includegraphics[width = \textwidth]{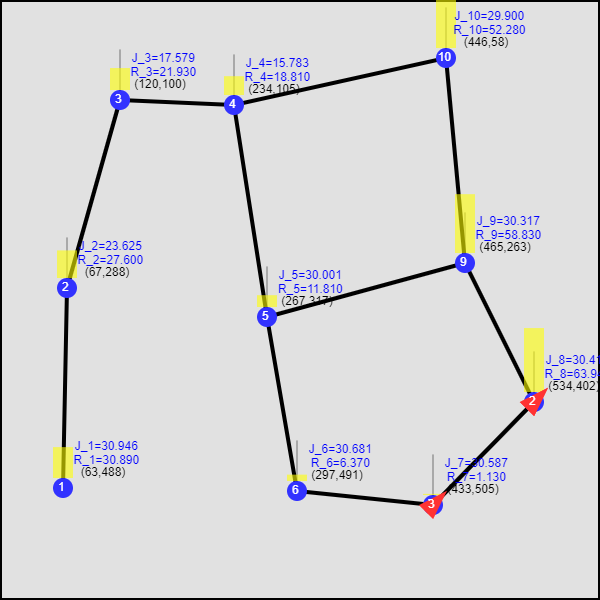}
         \caption{IPA-TCP: \\ \centering{$J_T= 270.2$.}}
     \end{subfigure}
     \begin{subfigure}[b]{0.24\columnwidth}
         \centering
         \includegraphics[width = \textwidth]{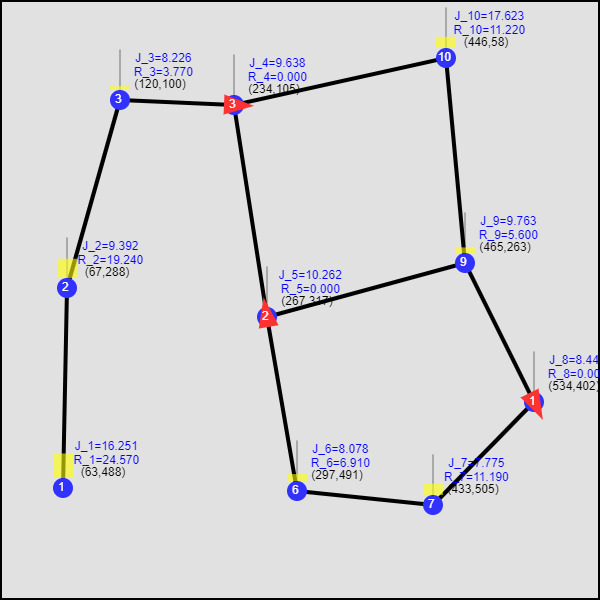}
         \caption{RHC: \\ \centering{$J_T= 105.5$.}}
     \end{subfigure}
     \begin{subfigure}[b]{0.24\columnwidth}
         \centering
         \includegraphics[width = \textwidth]{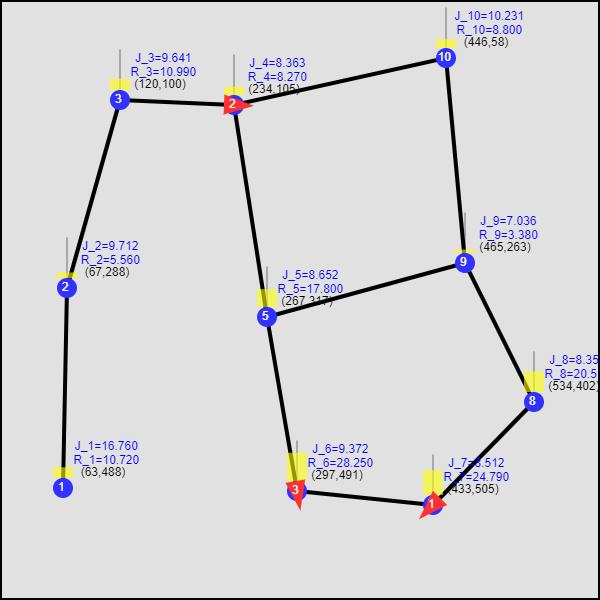}
         \caption{RHC$^\alpha$:\\ \centering{$J_T=96.6$}.}
     \end{subfigure}
     \begin{subfigure}[b]{0.24\columnwidth}
         \centering
         \includegraphics[width = \textwidth]{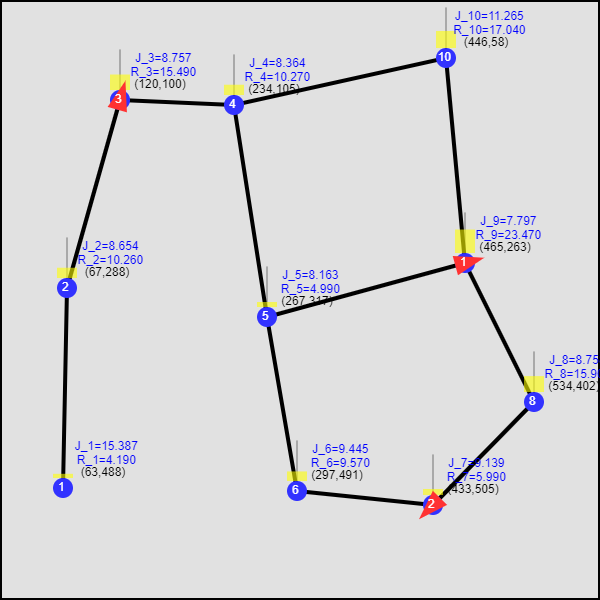}
         \caption{Ex-RHC$^{\alpha\beta}$:\\ \centering{$J_T=\textbf{95.7}$}.}
     \end{subfigure}
    \caption{Multi-agent simulation example 1 (MASE1).}
    \label{Fig:MASE1}
\end{figure}

\begin{figure}[!h]
     \centering
     \begin{subfigure}[b]{0.24\columnwidth}
         \centering
         \includegraphics[width = \textwidth]{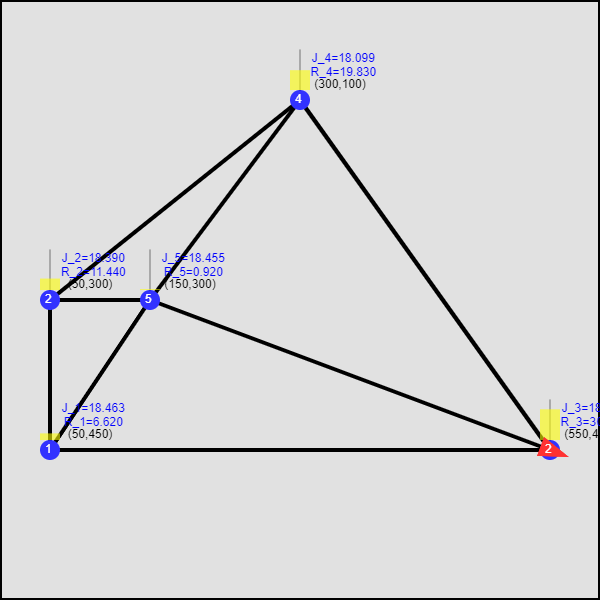}
         \caption{IPA-TCP: \\ \centering{$J_T= 91.7$.}}
     \end{subfigure}
     \begin{subfigure}[b]{0.24\columnwidth}
         \centering
         \includegraphics[width = \textwidth]{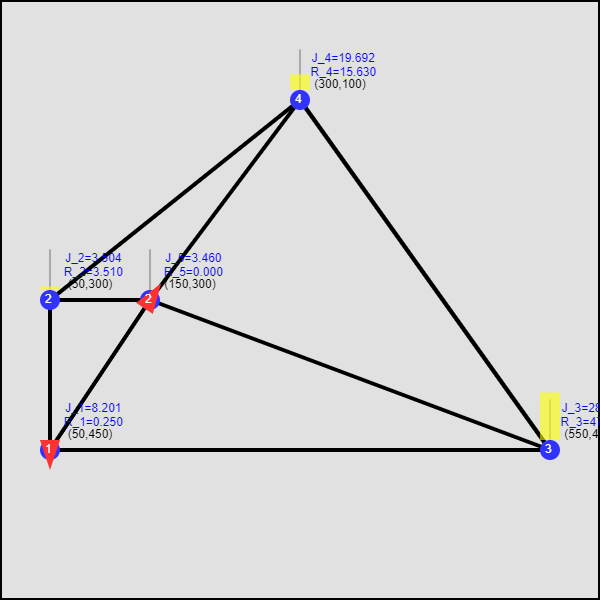}
         \caption{RHC: \\ \centering{$J_T= 63.7$.}}
     \end{subfigure}
     \begin{subfigure}[b]{0.24\columnwidth}
         \centering
         \includegraphics[width = \textwidth]{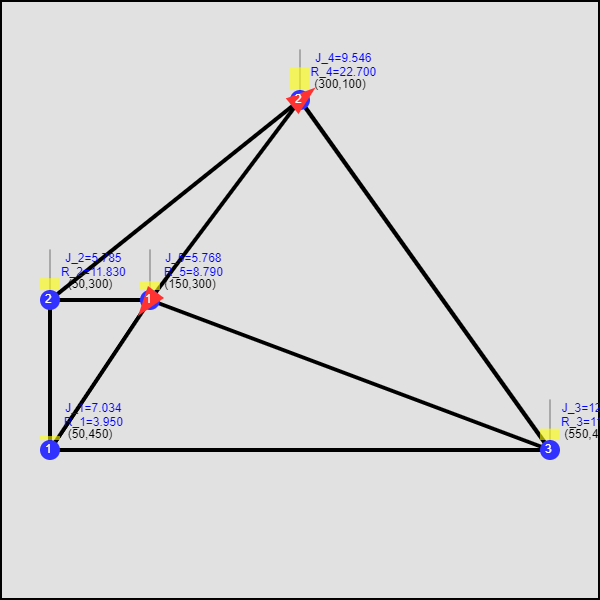}
         \caption{RHC$^\alpha$:\\ \centering{$J_T= \textbf{40.4}$}.}
     \end{subfigure}
     \begin{subfigure}[b]{0.24\columnwidth}
         \centering
         \includegraphics[width = \textwidth]{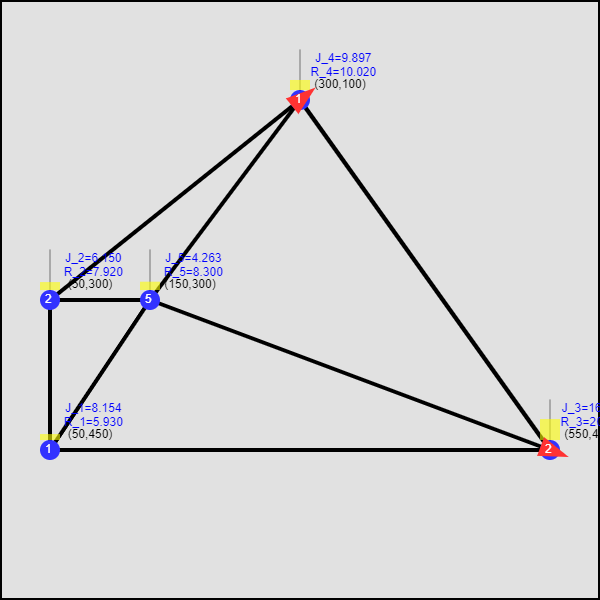}
         \caption{Ex-RHC$^{\alpha\beta}$:\\ \centering{$J_T=45.0$}.}
     \end{subfigure}
    \caption{Multi-agent simulation example 2 (MASE2).}
    \label{Fig:MASE2}
\end{figure}

\begin{figure}[!h]
     \centering
     \begin{subfigure}[b]{0.24\columnwidth}
         \centering
         \includegraphics[width = \textwidth]{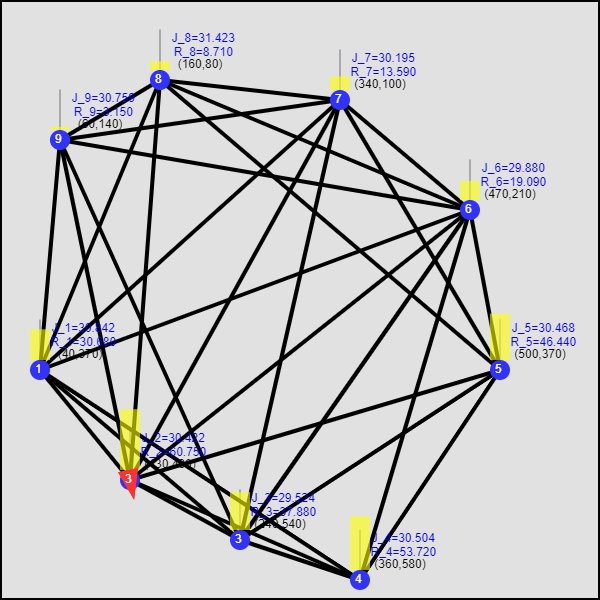}
         \caption{IPA-TCP: \\ \centering{$J_T= 274.0$.}}
     \end{subfigure}
     \begin{subfigure}[b]{0.24\columnwidth}
         \centering
         \includegraphics[width = \textwidth]{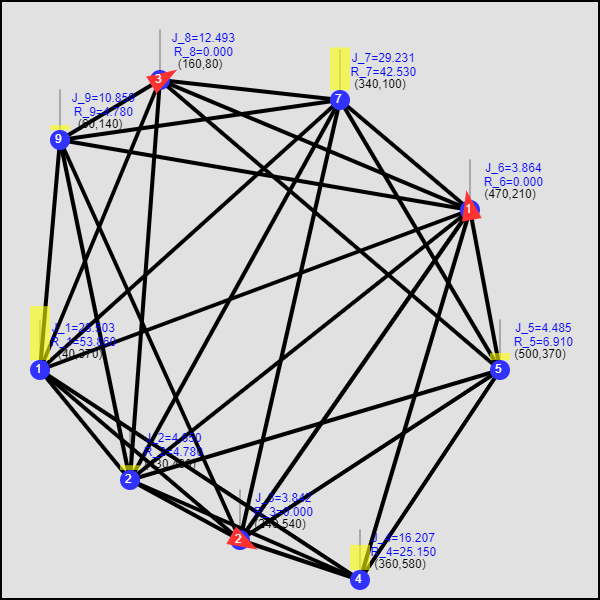}
         \caption{RHC: \\ \centering{$J_T= 114.1$.}}
     \end{subfigure}
     \begin{subfigure}[b]{0.24\columnwidth}
         \centering
         \includegraphics[width = \textwidth]{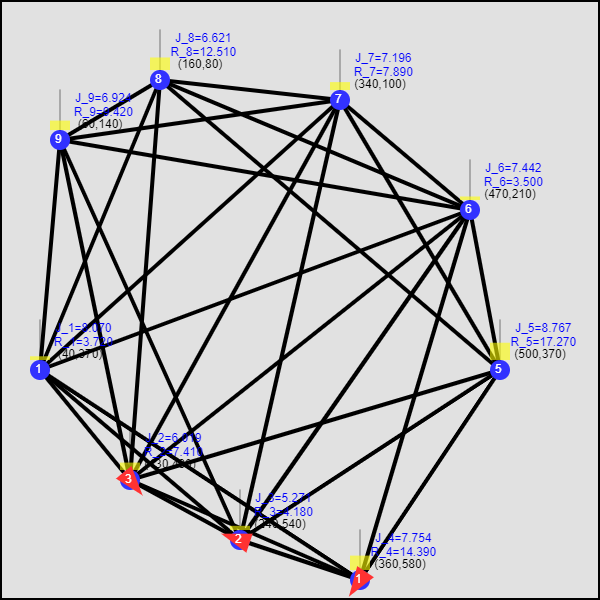}
         \caption{RHC$^\alpha$:\\ \centering{$J_T= \textbf{63.7}$}.}
     \end{subfigure}
     \begin{subfigure}[b]{0.24\columnwidth}
         \centering
         \includegraphics[width = \textwidth]{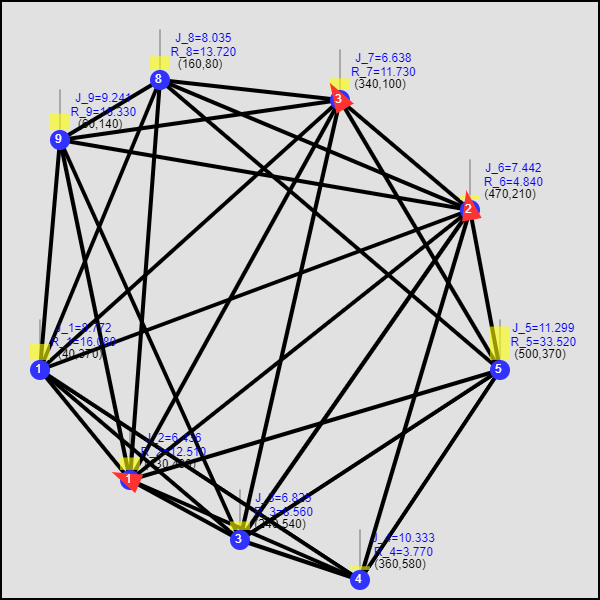}
         \caption{Ex-RHC$^{\alpha\beta}$:\\ \centering{$J_T=75.0$}.}
     \end{subfigure}
    \caption{Multi-agent simulation example 3 (MASE3).}
    \label{Fig:MASE3}
\end{figure}

\begin{figure}[!h]
     \centering
     \begin{subfigure}[b]{0.24\columnwidth}
         \centering
         \includegraphics[width = \textwidth]{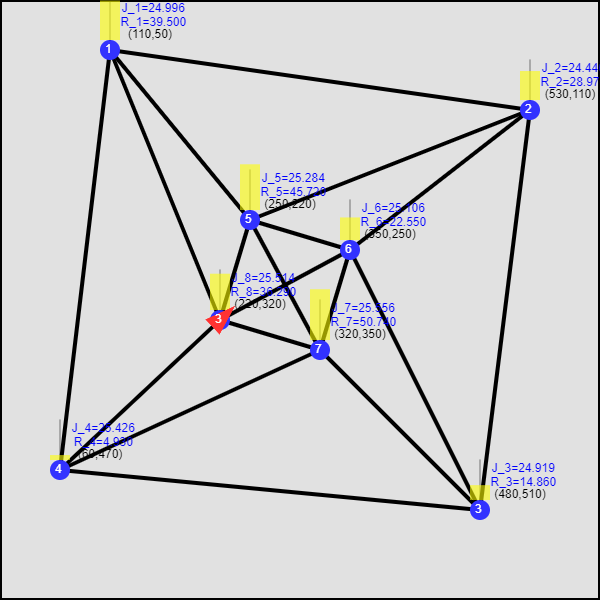}
         \caption{IPA-TCP: \\ \centering{$J_T= 201.3$.}}
     \end{subfigure}
     \begin{subfigure}[b]{0.24\columnwidth}
         \centering
         \includegraphics[width = \textwidth]{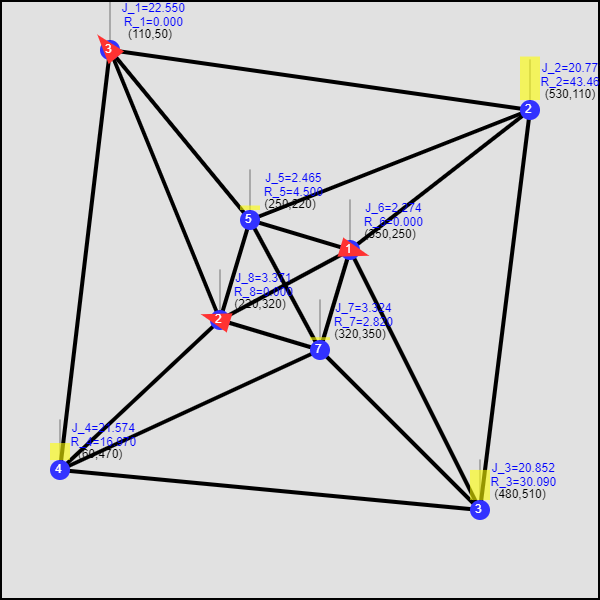}
         \caption{RHC: \\ \centering{$J_T= 97.2$.}}
     \end{subfigure}
     \begin{subfigure}[b]{0.24\columnwidth}
         \centering
         \includegraphics[width = \textwidth]{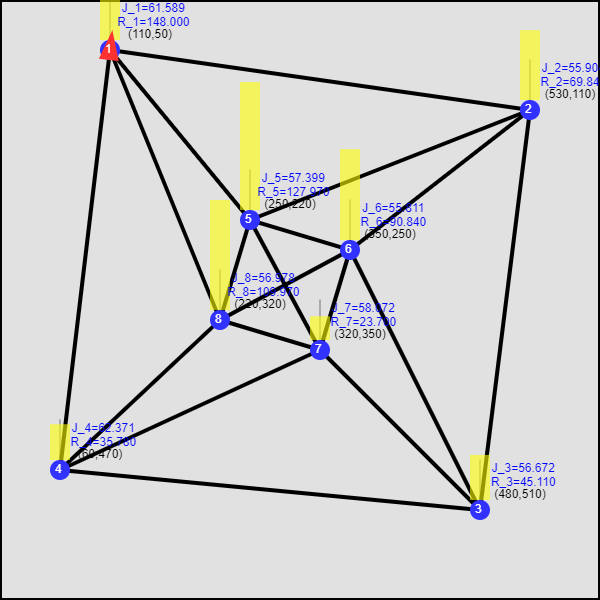}
         \caption{RHC$^\alpha$:\\ \centering{$J_T= \textbf{60.1}$}.}
     \end{subfigure}
     \begin{subfigure}[b]{0.24\columnwidth}
         \centering
         \includegraphics[width = \textwidth]{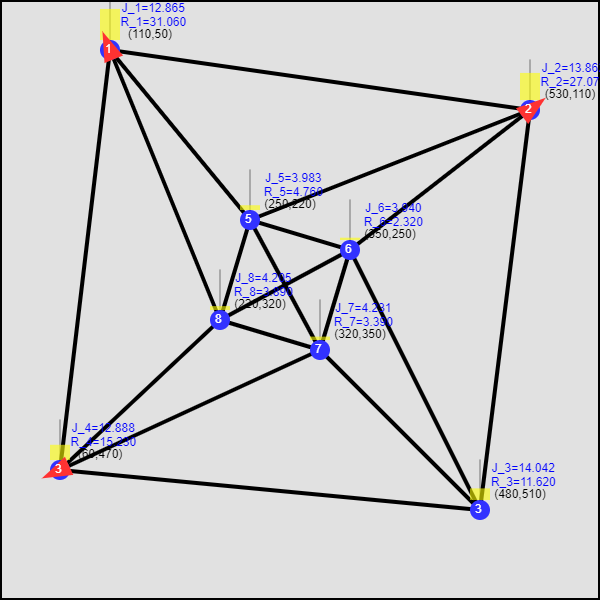}
         \caption{Ex-RHC$^{\alpha\beta}$:\\ \centering{$J_T=70.2$}.}
     \end{subfigure}
    \caption{Multi-agent simulation example 4 (MASE4).}
    \label{Fig:MASE4}
\end{figure}

\subsection{Impact of Using a Variable Planning Horizon}

For the problem configurations shown in Fig. \ref{Fig:SASE4} and \ref{Fig:MASE4}, the respective sub-figures in Fig. \ref{Fig:CostVsHorizonPlot} shows the performance of the RHC solution (i.e., $J_{T}$) under different upper-bound values on each planning horizon $w$ (i.e., $H$). In each case, the optimum upper-bound value (i.e., $H^{*}$) and its corresponding (minimum) cost value are indicated in the respective sub-figure caption. These results imply that having a large enough $H$ can directly give a performance level that is very closer to the optimum (within $1.1\%$). Hence, there is no evident importance in attempting to fine-tune the $H$ value. This highlights the impact of our main contribution of introducing a variable planning horizon (i.e., $w$ in \eqref{Eq:VariableHorizon}) so as to determine the optimal planning horizon at each instance where the RHCP is solved.

\begin{figure}[h]
\centering
\hfill\begin{subfigure}[b]{0.45\columnwidth}
\centering
\includegraphics[width = \textwidth]{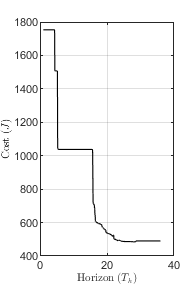}
\caption{SASE 4: Optimum at: \\
$H^* = 28.58$ with $J_T= 484.1$.\\
Yet, at: $H = 250$, $J_T= 490.4$.}
\end{subfigure}
\hfill\begin{subfigure}[b]{0.45\columnwidth}
\centering
\includegraphics[width = \textwidth]{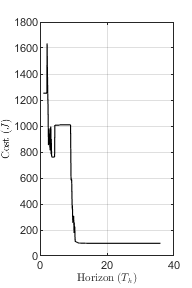}
\caption{MASE 4: Optimum at: \\
$H^* = 12.27$ with $J_T= 96.1$. \\
Yet, at: $H = 250$, $J_T= 97.2$.}
\end{subfigure}
\hfill\caption{Cost $J_{T}$ vs $H$ plot for SASE4 and MASE4.}%
\label{Fig:CostVsHorizonPlot}%
\end{figure}

\subsection{Robustness to the Randomness}

In this section, we introduce randomness to different aspects of the persistent monitoring system (e.g., system parameters, state dynamics or state information shared with neighbors) to determine the RHC version with superior robustness properties. In particular, we change the magnitude of the introduced randomness and explore how it affects the mean ($\mu$) and the variance ($\sigma^{2}$) of the performance $J_{T}$ (observed over 250 realizations) given by the RHC$^{\alpha}$ and Ex-RHC$^{\alpha\beta}$ methods for the MASE1 in Fig. \ref{Fig:MASE1}.

We consider five cases where we introduce randomness to: 
(\romannum{1}) Target uncertainty growth rate $A_i$,
(\romannum{2}) Maximum agent travel speed $V_{ij}$ on each trajectory segment $(i,j)\in\mathcal{E}$, 
(\romannum{3}) Target locations $Y_i$,
(\romannum{4}) Target uncertainty $R_{i}(t)$ trajectories of each target $i\in\mathcal{T}$, and
(\romannum{5}) Neighbor state information $\{R_{j}(t):j\in\mathcal{N}_{i}\}$ received at each target $i\in\mathcal{T}$ to solve the RHCP.
Details on how each parameter was perturbed and the observed respective results are as follows.


\paragraph{\textbf{Noise in $A_i$}}
The target uncertainty growth rate $A_i$ in \eqref{Eq:TargetDynamics} was replaced with $A_i\zeta_i(t)$ where $\zeta_i(t)\sim U[1-m,1+m]$. The noise magnitude $m$ was varied between $0.5$ to $5$ as shown in Fig. \ref{Fig:NoiseA_i} and observed that average performance deviation from its nominal value is very small (remains within $1.85\%$) for $m\in[0,5]$. The corresponding variance values observed are also bounded at reasonable levels for $m\in[0,5]$.

\begin{figure}[!h]
\centering
\begin{subfigure}[b]{0.48\columnwidth}
\centering
\includegraphics[width = 1.5in]{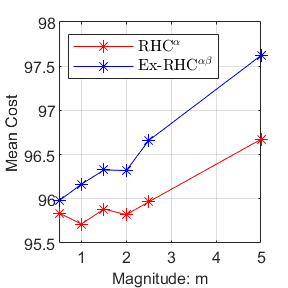}
\caption{Mean cost vs $m$}
\end{subfigure}
\begin{subfigure}[b]{0.48\columnwidth}
\centering
\includegraphics[width = 1.5in]{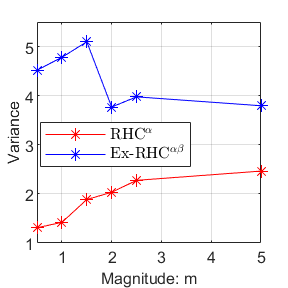}
\caption{Variance vs $m$}
\end{subfigure}
\caption{The effect of random noise in $A_i$.}
\label{Fig:NoiseA_i}
\end{figure}

\paragraph{\textbf{Noise in $V_{ij}$}}

Recall that we have used a first-order agent motion model with a fixed maximum velocity $V_{ij}$ for each trajectory segment $(i,j)\in\mathcal{E}$. In such a model, as proved in \cite{Zhou2019}, the optimal transit time is $\rho_{ij}=\frac{l_{ij}}{V_{ij}}$ where $l_{ij}$ is the length of the trajectory segment $(i,j)$. In this experiment, we have replaced $V_{ij}$ with $V_{ij}\zeta_{ij}$ where $\zeta_{ij}\sim U[1-m,1+m]$ and renewed $\zeta_{ij}$ whenever an agent has decided to travel on a trajectory segment $(i,j)$. The noise magnitude $m$ was selected such that $m\in\lbrack0,1]$. According to the results shown in Fig. \ref{Fig:NoiseV_ij}, in both RHC methods explored, the average performance deviation observed from its nominal value is less than $17.61\%$ for $m\leq0.6$. Moreover, for such $m$ values, the corresponding variance values observed are also reasonably small (i.e., such that $\sigma/\mu\leq0.038$).

\begin{figure}[!t]
     \centering
     \begin{subfigure}[b]{0.48\columnwidth}
         \centering
         \includegraphics[width = 1.5in]{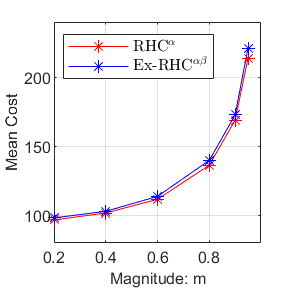}
         \caption{Mean cost vs $m$}
     \end{subfigure}
     \begin{subfigure}[b]{0.48\columnwidth}
         \centering
         \includegraphics[width = 1.5in]{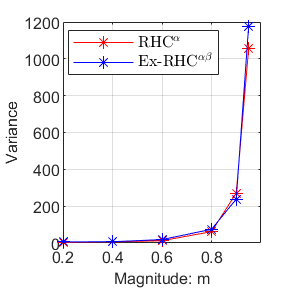}
         \caption{Variance vs $m$}
     \end{subfigure}
     \caption{The effect of random noise in $V_{ij}$.}
    \label{Fig:NoiseV_ij}
\end{figure}

\paragraph{\textbf{Noise in $Y_{i}$}}
Note that throughout this work we assumed the target locations $Y_i\in \R^n, \forall i\in\mathcal{T}$ to be a fixed set of locations. However, we now assume $Y_i$ changes dynamically following $\ddot{Y}_i(t) =  \zeta_i(t)$ where each component of $\zeta_i(t)$ is drawn from $U[-m,+m]$. We further constrain $\Vert Y_i(t)-Y_i(0)\Vert \leq R$ using standard projection techniques and we set the boundary radius $R=20$ and $Y_i(0) = Y_i$ (i.e., the nominal target location).

In this scenario, when an agent solves the RHCP, it uses the transit times based on current target locations (i.e., $\rho_{ij}=\frac{\Vert Y_i(t)-Y_j(t)\Vert}{V_{ij}}, \forall j\in \mathcal{N}_i$). Following the first order agent model, during the traveling period (say from target $i$ to target $j$) an agent $a\in\mathcal{A}$ uses $\dot{s}_a(t)=V_{ij}\cdot \frac{Y_j(t)-s_a(t)}{\Vert Y_j(t)-s_a(t)\Vert}$ until $\Vert Y_j(t)-s_a(t) \Vert$ is sufficiently small.

The noise magnitude $m$ is selected such that $m \in [0,100]$. From the results shown in Fig. \ref{Fig:NoiseV_ij}, the average performance deviation from its nominal value is again relatively small (less than $5.48\%$) for all $m\leq [0,100]$. Despite their gradual growth with $m$, the corresponding variance values remain relatively small.

\begin{figure}[t]
     \centering
     \begin{subfigure}[b]{0.48\columnwidth}
         \centering
         \includegraphics[width = 1.5in]{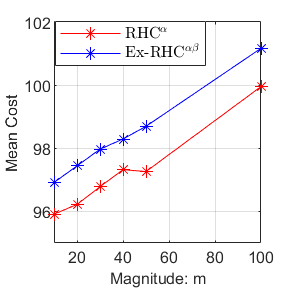}
         \caption{Mean cost vs $m$}
     \end{subfigure}
     \begin{subfigure}[b]{0.48\columnwidth}
         \centering
         \includegraphics[width = 1.5in]{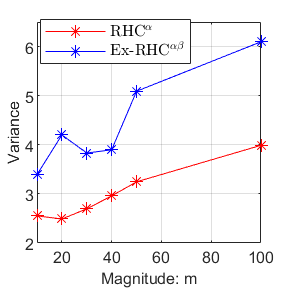}
         \caption{Variance vs $m$}
     \end{subfigure}
     \caption{The effect of random noise in $Y_i$.}
    \label{Fig:NoiseY_i}
\end{figure}

\paragraph{\textbf{Noise in $R_{i}(t)$}}

We next assumed that each target uncertainty state trajectory $R_{i}(t),i\in\mathcal{T}$ following the dynamics in \eqref{Eq:TargetDynamics} experiences occasional random discrete perturbations of the form $\zeta_{i}(t)1\{t=t_{i}^{e}\}$ where $\zeta_{i}(t)\sim U[-m,+m]$ and $t_{i}^{e}$ is a Poisson arrival process with a mean inter-event interval $\lambda$. The noise magnitude $m$ and the inter-event interval value $\lambda$ are such that $m\in\lbrack0,50]$ and $\lambda\in\{25,50,100\}$. The observed results are shown in Fig. \ref{Fig:NoiseR_i}, where we see that both the average performance and the corresponding variance show a slow growth with $m$ when $\lambda=100$. However, as expected this growth rate increases when smaller $\lambda$ values (i.e., higher rates) are used.

\begin{figure}[!t]
     \centering
     \begin{subfigure}[b]{0.48\columnwidth}
         \centering
         \includegraphics[width = 1.6in]{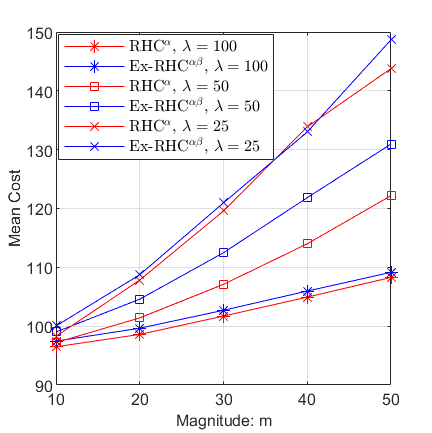}
         \caption{Mean cost vs $m$}
     \end{subfigure}
     \begin{subfigure}[b]{0.48\columnwidth}
         \centering
         \includegraphics[width = 1.6in]{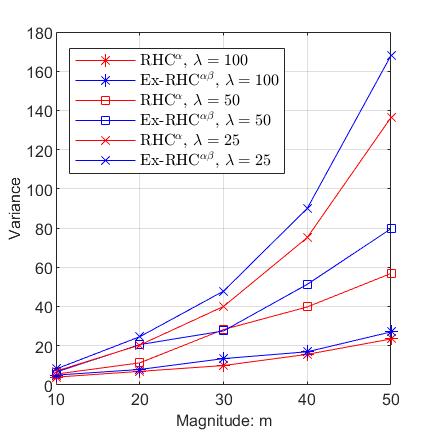}
         \caption{Variance vs $m$}
     \end{subfigure}
     \caption{The effect of random noise in $R_i$.}
    \label{Fig:NoiseR_i}
\end{figure}

It is important to point out that this example illustrates the benefit of the event-driven nature of the proposed controllers by triggering a random event at each random time $t_{i}^{e}$ affecting any agent residing at a target within the neighborhood of $i$: each such agent is forced to re-solve its RHCP. This allows the agents to react to the exogenously introduced random state perturbations. The impact of this advantage is illustrated in Fig. \ref{Fig:NoiseR_i2} (for the case where $\lambda=25$).

\begin{figure}[!t]
     \centering
     \begin{subfigure}[b]{0.48\columnwidth}
         \centering
         \includegraphics[width = 1.5in]{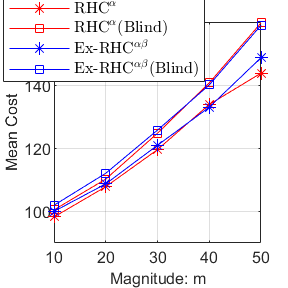}
         \caption{Mean cost vs $m$}
     \end{subfigure}
     \begin{subfigure}[b]{0.48\columnwidth}
         \centering
         \includegraphics[width = 1.5in]{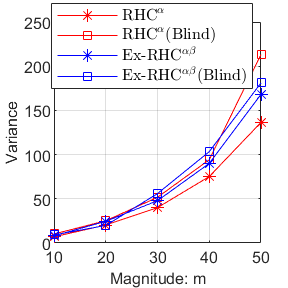}
         \caption{Variance vs $m$}
     \end{subfigure}
     \caption{The effect of random noise in $R_i$: A comparison when the agents are allowed to react and not allowed to react (i.e., ``Blind'') to the random events occurring in the neighborhood.}
    \label{Fig:NoiseR_i2}
\end{figure}

\paragraph{\textbf{Noise in the State Information of Neighbors: $\{R_{j}(t): j\in\mathcal{N}_{i}\}$}}

In this case, we assumed that each agent residing at some target $i$ when solving the RHCP at some time $t$ receives its neighboring target state information through a noisy communication channel which superimposes its actual value $R_{j}(t)$ (which follows \eqref{Eq:TargetDynamics}) with a random noise process $\zeta_{j}(t)\sim U[-m,+m]$ independently for each $j\in\mathcal{N}_{i}$. The noise magnitude $m$ was varied between $4$ to $28$ to generate Fig. \ref{Fig:NoiseR_j} where, as can be seen, the average performance deteriorates linearly with $m$ but at a lower $1.0\%$ rates per unit of $m$.

\begin{figure}[!t]
     \centering
     \begin{subfigure}[b]{0.48\columnwidth}
         \centering
         \includegraphics[width = 1.5in]{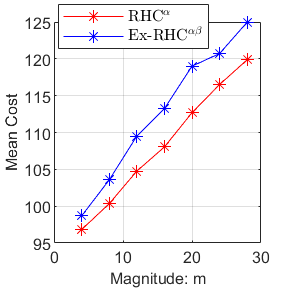}
         \caption{Mean cost vs $m$}
     \end{subfigure}
     \begin{subfigure}[b]{0.48\columnwidth}
         \centering
         \includegraphics[width = 1.5in]{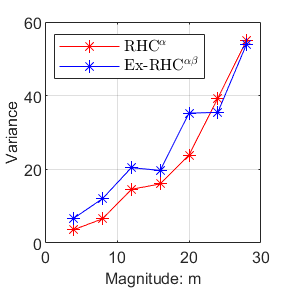}
         \caption{Variance vs $m$}
     \end{subfigure}
     \caption{The effect of random noise in the state information of neighbors $\{R_j(t): j\in \mathcal{N}_i\}$.}
    \label{Fig:NoiseR_j}
\end{figure}

In all, both RHC versions, RHC$^{\alpha}$ and Ex-RHC$^{\alpha\beta}$, while showing superior performance levels when compared to the IPA-TCP method, they also show resilience to a diverse form of random factors. Across all experiments, the RHC$^{\alpha}$ method has shown superior robustness to these factors compared to the Ex-RHC$^{\alpha\beta}$ method.

\subsection{Some Practical Considerations}

The proposed RHC method and related theoretical results are not limited by the nature of the graph topology. However, if the graph is not \emph{connected}, the RHC method may result in behaviors such as an agent getting trapped in a region of the graph, leading to lower performance levels, depending on the particular graph topology.

In our PMN problem setup, we assumed the transit time values $\rho_{ij},\ (i,j)\in\mathcal{E}$ and the trajectory segments $(i,j)\in\mathcal{E}$ to be predefined. However, if the problem setup allows selecting these quantities, based on the target state dynamics in \eqref{Eq:TargetDynamics} and the global objective in \eqref{Eq:MainObjective}, each $\rho_{ij}$ should represent the \textquotedblleft minimum time to reach target $j$ from target $i$\textquotedblright\ and $(i,j)$ should represent the corresponding agent trajectory. Clearly, such choices will depend on the agent dynamic model, possible obstacles in the actual mission space and the target locations $Y_{i}$ and $Y_{j}$.

Even though the covering events (defined in Section \ref{SubSec:RHCIntro}) can effectively enforce the no-simultaneous-target-sharing policy \eqref{Eq:NoTargetSharing}, still, multiple agents can end up in the same target at the same time due to: (i) initial conditions, (ii) symmetric problem configurations and (iii) communication delays. In such a rare situation, each agent residing on a common target $i\in\mathcal{T}$ can solve its RHCP limiting it to a partition of the current neighborhood $\bar{\mathcal{N}}_{i}$, so as to prevent simultaneous target sharing at its next-visit target.

\section{Conclusion}\label{Sec:Conclusion} 
Optimal multi-agent persistent monitoring problem defined on a set of targets interconnected according to a fixed network topology is considered in this paper. In contrast to existing computationally expensive and slow threshold-based parametric control solutions, a novel computationally efficient, parameter-free, and robust event-driven receding horizon control solution is proposed. Specifically, we simultaneously determine both an optimal planning horizon and a finite sequence of optimal trajectory decisions for each agent at different discrete event times on its trajectory. Numerical results show significant improvements compared to existing gradient-based parametric control solutions, as well as robustness to various forms of randomness in the system. Ongoing work is aimed to extend the proposed controller to PMN problems with variable transit times dictated by different higher order agent dynamic models.



\section*{APPENDIX}

\subsection{Constrained Optimization of Bivariate Rational Functions}
\label{App:A}

\paragraph{\textbf{Convexity of Rational Functions}}
Consider a \emph{rational function} $h:\mathbb{R}\rightarrow\mathbb{R}$ of the form $h(r)=\frac{f(r)}{g(r)}$ and assume $g(r)>0\ \forall r\in\mathcal{U}\subseteq\mathbb{R}$ where $\mathcal{U}$ is a closed interval. In the following the argument of $f(r),g(r)$ or $h(r)$ is omitted for notational convenience. Also, the notation \textquotedblleft\ $^{\prime}$ \textquotedblright\ is used to denote the derivative of a function with respect to $r$.

\begin{lemma}
\label{Lm:ConvexityOfRationalFunction} Whenever polynomials $g(r)$ and $f(r)$ satisfy
\begin{equation}
g[gf^{\prime\prime\prime}-fg^{\prime\prime\prime}]-3g^{\prime\prime
}[gf^{\prime}-fg^{\prime}]=0,\ \forall r\in\mathcal{U}%
,\label{Eq:ConvexityOfRationalFunction}%
\end{equation}
$h(r)$ is convex (or concave) on $\mathcal{U}$ if $\Delta_{h}(r_{0})>0$ (or $\Delta_{h}(r_{0})<0$) where $r_{0}\in\mathcal{U}$ and 
\begin{equation}
\Delta_{h}(r)\triangleq g[gf^{\prime\prime}-fg^{\prime\prime}]-2g^{\prime
}[gf^{\prime}-fg^{\prime}].\label{Eq:Deltah}
\end{equation}
\end{lemma}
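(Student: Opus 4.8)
The plan is to establish two facts: (i) $h''$ has the same sign as the quantity $\Delta_h$ in \eqref{Eq:Deltah} at every point of $\mathcal{U}$, and (ii) under hypothesis \eqref{Eq:ConvexityOfRationalFunction} the quantity $\Delta_h$ is \emph{constant} on $\mathcal{U}$. Granting these, if $\Delta_h(r_0)>0$ then $\Delta_h\equiv\Delta_h(r_0)>0$ on $\mathcal{U}$, whence $h''>0$ throughout and $h$ is convex on $\mathcal{U}$; the concave case follows identically with the sign reversed.

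For (i) I would differentiate $h=f/g$ twice. Denote $N_1\triangleq gf'-fg'$, so that $h'=N_1/g^2$; since $N_1'=gf''-fg''$, denoting $N_2\triangleq gf''-fg''$ we have $N_1'=N_2$, and a second differentiation gives $h''=(gN_2-2g'N_1)/g^3=\Delta_h/g^3$. Because $g>0$ on $\mathcal{U}$, the sign of $h''$ coincides pointwise with the sign of $\Delta_h$, so it suffices to control the latter.

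For (ii) the crux is to show that $\Delta_h'$ is exactly the left-hand side of \eqref{Eq:ConvexityOfRationalFunction}. Differentiating $\Delta_h=gN_2-2g'N_1$ and using $N_1'=N_2$ reduces the expression to $\Delta_h'=gN_2'-g'N_2-2g''N_1$. Introducing $N_3\triangleq gf'''-fg'''$ and $P\triangleq g'f''-f'g''$, one checks $N_2'=N_3+P$ together with the cancellation identity $gP-g'N_2=-g''N_1$; substituting these collapses the expression to $\Delta_h'=gN_3-3g''N_1=g[gf'''-fg''']-3g''[gf'-fg']$. Thus \eqref{Eq:ConvexityOfRationalFunction} asserts precisely that $\Delta_h'\equiv0$ on $\mathcal{U}$, and since $\mathcal{U}$ is a connected interval this forces $\Delta_h$ to be constant there, completing (ii).

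The only real work is the bookkeeping in the two differentiation steps; the single place where care pays off is the algebraic identity $gP-g'N_2=-g''N_1$ (a Wronskian-type cancellation among $f$, $g$ and their derivatives), which is what makes the seemingly third-order derivative of $\Delta_h$ reduce to the compact form appearing in \eqref{Eq:ConvexityOfRationalFunction}. I do not anticipate any genuine obstacle beyond verifying that cancellation carefully.
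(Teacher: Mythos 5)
Your proposal is correct and follows essentially the same route as the paper's own proof: both show $h''=\Delta_h/g^3$ with $g^3>0$, identify the hypothesis \eqref{Eq:ConvexityOfRationalFunction} as exactly $\Delta_h'\equiv 0$ so that $\Delta_h$ is constant on $\mathcal{U}$, and propagate the sign of $\Delta_h(r_0)$. Your explicit verification of the cancellation $gP-g'N_2=-g''N_1$ is a correct (and slightly more detailed) account of the differentiation step the paper states without showing.
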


\emph{Proof: } The first and second order derivatives of $h(r)$ are 
\[
h^{\prime}=\frac{gf^{\prime}-fg^{\prime}}{g^{2}}\mbox{ and }h^{\prime\prime
}=\frac{g[gf^{\prime\prime}-fg^{\prime\prime}]-2g^{\prime}[gf^{\prime
}-fg^{\prime}]}{g^{3}}.
\]
Note that $h^{\prime\prime}(r)=\frac{\Delta_{h}(r)}{g^{3}(r)}$ and $g^{3}(r)>0\ \forall r\in\mathcal{U}$. Therefore, convexity of $h(r)$ will only depend on the condition:
\[
h^{\prime\prime}(r)>0,\ \forall r\in\mathcal{U}\iff\Delta_{h}(r)>0,\forall
r\in\mathcal{U}.
\]
This condition is easily seen to be satisfied whenever
\[
\Delta_{h}(r_{0})>0\mbox{ for some }r_{0}\in\mathcal{U}\mbox{ and }\Delta
_{h}^{\prime}(r)=0\mbox{ for all }r\in\mathcal{U}.
\]
Finally, evaluating $\Delta_{h}^{\prime}(r)$ yields the expression in \eqref{Eq:ConvexityOfRationalFunction}
\[
\Delta_{h}^{\prime}(r)=g[gf^{\prime\prime\prime}-fg^{\prime\prime\prime
}]-3g^{\prime\prime}[gf^{\prime}-fg^{\prime}],
\]
which completes the proof. \hfill$\blacksquare$

\begin{remark}\label{Rm:ExampleCase}
According to Lemma \ref{Lm:ConvexityOfRationalFunction}, the condition in \eqref{Eq:ConvexityOfRationalFunction} along with $\Delta_{h}(r_{0})>0$ (or $\Delta_{h}(r_{0})<0$) for some $r_{0}\in\mathcal{U}$ is sufficient to determine the convexity (or concavity) of $h(r)$ on $\mathcal{U}$. As an example, \eqref{Eq:ConvexityOfRationalFunction} is satisfied whenever the rational function $h(r)$ has a denominator polynomial $g(r)$ of first degree and a numerator polynomial $f(r)$ of second degree. In such a case, the convexity/concavity of $h(r)$ over $\mathcal{U}$ can be identified by simply evaluating the sign of $\Delta_{h}(r)$ at a convenient $r=r_{0}\in\mathcal{U}$ point.
\end{remark}

\paragraph{\textbf{Constrained Minimization of $h(r)$}}
Assume $h(r)$ to be a rational function which satisfies the conditions discussed above: $g(r)>0,\ \Delta_{h}^{\prime}(r)=0\ \forall r\in \mathcal{U}\subseteq\mathbb{R}$. Further, assume the signs of $\Delta_{h}(r_{0})$ and $h^{\prime}(r_{0})$ are known at some point of interest $r=r_{0}\in\mathcal{U}$ (recall that the sign of $\Delta_{h}(r_{0})$ mimics the sign of $h^{\prime\prime}(r),r\in\mathcal{U}$). According to Lemma \ref{Lm:ConvexityOfRationalFunction}, the latter assumption fully determines the convexity (or concavity) of $h(r)$ on $\mathcal{U}$ and its gradient direction at $r=r_{0}$, respectively. Now, consider the following optimization problem:
\begin{equation}
\begin{aligned}
r^* = &\underset{r}{\mathrm{argmin}}\ h(r), \\
&r_0 \leq r \leq r_1
\end{aligned}
\label{Eq:ConvexOptimizationSingleVar}%
\end{equation}
where $[r_{0},r_{1}]\subseteq\mathcal{U}$. A critical $r$ value $r=r^{\#}$
(which is important to the analysis) is defined as
\begin{equation*}
r^{\#}\triangleq%
\begin{cases}
\{r:h^{\prime}(r)=0,r>r_{0}\} & \mbox{if }\Delta_{h}(r_{0})>0\And h^{\prime
}(r_{0})<0,\\
\{r:h(r)=h(r_{0}),r>r_{0}\} & \mbox{if }\Delta_{h}(r_{0})<0\And h^{\prime
}(r_{0})>0.
\end{cases}
\end{equation*}
Note that the two cases considered above are the only ones where a stationary point of $h(r)$ could occur for some $r>r_{0},\ r\in\mathcal{U}$ (see also Fig. \ref{Fig:LineGraphs}).

\begin{lemma}
\label{Lm:ConvexOptimizationSingleVar} 
The optimal solution to \eqref{Eq:ConvexOptimizationSingleVar} is as follows:
\newline If $\Delta
_{h}(r_{0})<0 \And h^{\prime}(r_{0})>0$
\[
r^{*} =
\begin{cases}
r_{1} \mbox{ if } r_{1} > r^{\#}\\
r_{0} \mbox{ otherwise,}
\end{cases}
\]
else if $\Delta_{h}(r_{0})>0 \And h^{\prime}(r_{0})<0$
\[
r^{*} =
\begin{cases}
r^{\#} \mbox{ if } r_{1} > r^{\#}\\
r_{1} \mbox{ otherwise,}
\end{cases}
\]
otherwise,
\[
r^{*} =
\begin{cases}
r_{0} \mbox{ if } \Delta_{h}(r_{0}) \geq0 \And h^{\prime}(r_{0}) \geq0\\
r_{1} \mbox{ otherwise.}
\end{cases}
\]
\end{lemma}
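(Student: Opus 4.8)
The plan is to reduce everything to the global curvature of $h$ on $\mathcal{U}$ and then do a finite case split. First I would note that since $\Delta_{h}^{\prime}(r)=0$ for every $r$ in the interval $\mathcal{U}$, the mean value theorem gives that $\Delta_{h}$ is constant on $\mathcal{U}$, i.e.\ $\Delta_{h}(r)=\Delta_{h}(r_{0})$ for all $r\in\mathcal{U}$. Combining this with the identity $h^{\prime\prime}(r)=\Delta_{h}(r)/g^{3}(r)$ established in the proof of Lemma \ref{Lm:ConvexityOfRationalFunction} (see also \eqref{Eq:Deltah}) and the standing hypothesis $g(r)>0$, the sign of $h^{\prime\prime}$ is constant on $\mathcal{U}$ and equals that of $\Delta_{h}(r_{0})$. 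Hence $h$ is strictly convex on $\mathcal{U}$ when $\Delta_{h}(r_{0})>0$, strictly concave when $\Delta_{h}(r_{0})<0$, and affine when $\Delta_{h}(r_{0})=0$; in all three cases $h^{\prime}$ is monotone (respectively increasing, decreasing, constant). Knowing in addition the sign of $h^{\prime}(r_{0})$ then pins down the qualitative shape of $h$ on $[r_{0},r_{1}]\subseteq\mathcal{U}$, and the argument splits into the three cases of the statement.

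\emph{Convex, decreasing at $r_{0}$} ($\Delta_{h}(r_{0})>0$, $h^{\prime}(r_{0})<0$). Here $h^{\prime}$ is strictly increasing with $h^{\prime}(r_{0})<0$, so either $h^{\prime}$ vanishes at a unique point $r^{\#}>r_{0}$ of $\mathcal{U}$ --- the unconstrained minimizer of $h$, with $h$ decreasing on $[r_{0},r^{\#}]$ and increasing afterwards --- or $h^{\prime}<0$ throughout $\mathcal{U}$, in which case $h$ is monotone decreasing and we may read $r^{\#}=+\infty$. In the first situation $r^{*}=r^{\#}$ when $r_{1}>r^{\#}$ (interior stationary point) and $r^{*}=r_{1}$ when $r_{1}\le r^{\#}$ ($h$ still decreasing on $[r_{0},r_{1}]$); the second situation is the subcase $r_{1}<r^{\#}$. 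This is the second branch of the lemma.

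\emph{Concave, increasing at $r_{0}$} ($\Delta_{h}(r_{0})<0$, $h^{\prime}(r_{0})>0$). A concave function on a compact interval attains its minimum at an endpoint, so it suffices to compare $h(r_{0})$ and $h(r_{1})$. Since $h^{\prime}(r_{0})>0$ and $h^{\prime}$ is strictly decreasing, $h$ lies strictly above $h(r_{0})$ on a right-neighborhood of $r_{0}$; by definition $r^{\#}$ is where $h$ first returns to the level $h(r_{0})$, and concavity then forces $h(r)>h(r_{0})$ on $(r_{0},r^{\#})$ and $h(r)<h(r_{0})$ for all $r>r^{\#}$ in $\mathcal{U}$ (cf.\ Fig.\ \ref{Fig:LineGraphs}). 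Hence $h(r_{1})<h(r_{0})$ iff $r_{1}>r^{\#}$, giving $r^{*}=r_{1}$ when $r_{1}>r^{\#}$ and $r^{*}=r_{0}$ otherwise --- the first branch of the lemma.

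\emph{Remaining cases.} If $\Delta_{h}(r_{0})\ge0$ and $h^{\prime}(r_{0})\ge0$, then $h$ is convex or affine, so $h^{\prime}(r)\ge h^{\prime}(r_{0})\ge0$ on $[r_{0},r_{1}]$ and $h$ is non-decreasing there, giving $r^{*}=r_{0}$. Any other sign pattern of $(\Delta_{h}(r_{0}),h^{\prime}(r_{0}))$ that is neither Case 1 nor Case 2 necessarily has $\Delta_{h}(r_{0})\le0$ and $h^{\prime}(r_{0})\le0$; then $h$ is concave or affine with $h^{\prime}(r)\le h^{\prime}(r_{0})\le0$ on $[r_{0},r_{1}]$, so $h$ is non-increasing and $r^{*}=r_{1}$. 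The only delicate points in the whole proof are the bookkeeping that the two strict cases, the convex--nondecreasing case, and the concave--nonincreasing case together exhaust all sign patterns (including the boundary ones with $\Delta_{h}(r_{0})=0$ or $h^{\prime}(r_{0})=0$) and each lands in the claimed branch, and the geometric step in the concave case identifying $r^{\#}$ as the level-crossing point of $h$; I expect the latter to need the most words.
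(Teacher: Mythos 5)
Your proof is correct and follows essentially the same route as the paper, which simply argues ``by inspection of all cases shown in Fig.~\ref{Fig:LineGraphs}'': you have written out analytically exactly that case inspection, using the constancy of $\Delta_{h}$ (hence of the sign of $h^{\prime\prime}$) together with the sign of $h^{\prime}(r_{0})$ to classify the shape of $h$ on $[r_{0},r_{1}]$ and then reading off the minimizer branch by branch. The only additions beyond the paper's figure-based argument are routine bookkeeping (exhausting the sign patterns and the level-crossing property of $r^{\#}$ in the concave case), which you handle correctly.
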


\emph{Proof: } The proof easily follows by inspection of all cases shown in Fig. \ref{Fig:LineGraphs}. \hfill$\blacksquare$

\begin{figure}[h]
\centering
\includegraphics[width=2.7in]{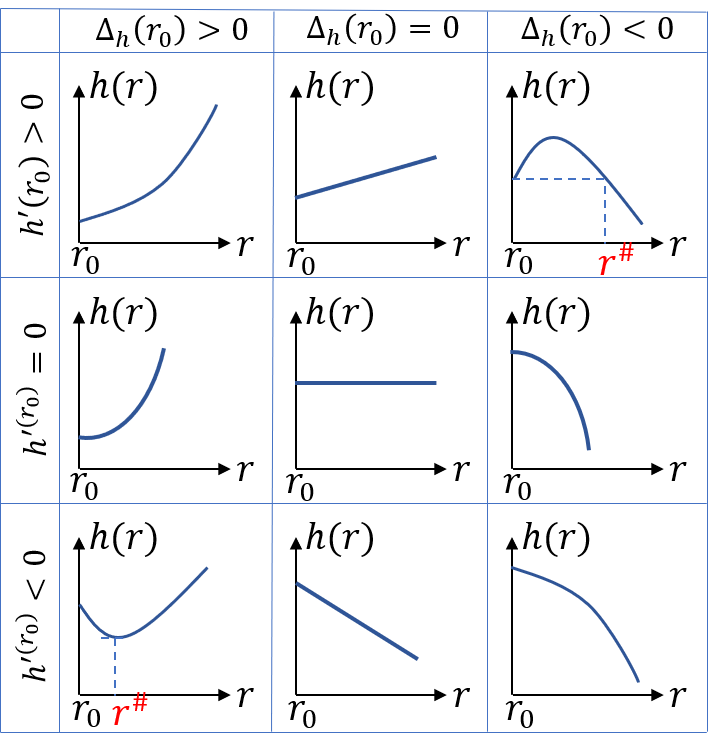}  \caption{Graphs of possible $\{h(r): r \geq r_{0},\ r \in\mathcal{U}\}$ profiles for different cases of $h^{\prime}(r_{0})$ and $\Delta_{h}(r_{0})$ (recall $sgn(\Delta_{h}(r_{0}))=sgn(h^{\prime\prime}(r))$ determines the convexity or concavity).}
\label{Fig:LineGraphs}%
\end{figure}

In essence, an optimization problem of the form \eqref{Eq:ConvexOptimizationSingleVar} can be solved based exclusively on the values of $h^{\prime}(r_{0}),\ \Delta_{h}(r_{0})$ and $r^{\#}$. Note that $r^{\#}$ is only required in two special cases and for the application example mentioned in Remark \ref{Rm:ExampleCase}, it can be obtained simply by solving for the roots of a quadratic expression (single variable).

\paragraph{\textbf{Bivariate Rational Functions}}
Next, consider the class of \emph{bivariate rational functions} that can be represented by a function $H:\mathbb{R}_{+}^{2}\rightarrow\mathbb{R}$ of the
form
\begin{equation}
H(x,y)=\frac{F(x,y)}{G(x,y)}=\frac{C_{1}x^{2}+C_{2}y^{2}+C_{3}xy+C_{4}%
x+C_{5}y+C_{6}}{C_{7}x+C_{8}y+C_{9}},\label{Eq:BivariateRational}%
\end{equation}
where the coefficients $C_{1},\ldots,C_{9}$) are known scalar constants with $C_{7}\geq0,\ C_{8}\geq0$ and $C_{9}>0$. Note that the range space of $H(x,y)$ is limited to the non-negative orthant of $\mathbb{R}^{2}$ (denoted by $\mathbb{R}_{+}^{2}$).

Developing conditions for the convexity of $H(x,y)$ is a complicated task. Even if such conditions were derived, interpreting them and exploiting them to solve a two-dimensional constrained optimization problem that involves minimizing $H(x,y)$ (analogous to \eqref{Eq:ConvexOptimizationSingleVar}) is challenging. To address this, the behavior of $H(x,y)$ is next studied along a generic line segment of the form $y=mx+b$ starting at some point $(x_{0},y_{0})\in\mathbb{R}_{+}^{2}$ as shown in Fig. \ref{Fig:LineSampling}. A parameter $r$ is used to represent a generic location $(x_{r},y_{r})$ on this line as $(x_{r},y_{r})=(x_{0}+r,\ y_{0}+mr)$ where $r$ is introduced exploiting the gradient $m$ of the line segment: 
\begin{equation}
\frac{y_{r}-y_{0}}{x_{r}-x_{0}}=m\implies\frac{y_{r}-y_{0}}{m}=\frac
{x_{r}-x_{0}}{1}=r.\label{Eq:Parametrization1}%
\end{equation}
A rational function $h(r)$ can now be defined as  \begin{equation}
h(r)\triangleq H(x_{0}+r,\ y_{0}+mr)=\frac{F(x_{0}+r,\ y_{0}+mr)}{G(x_{0}+r,\ y_{0}+mr)}=\frac{f(r)}{g(r)},
\label{Eq:HxyOnLine}
\end{equation}
to represent $H(x,y)$ along the line segment of interest.

The parameter $r$ is constrained such that $r\in\mathcal{U}\triangleq \lbrack-x_{0},\frac{-y_{0}}{m}]$ to limit the line segment to $\mathbb{R}_{+}^{2}$. This allows $h(r)$ to fall directly into the category of rational functions discussed in Lemma \ref{Lm:ConvexityOfRationalFunction} and in Remark \ref{Rm:ExampleCase}.

\begin{figure}[!t]
     \centering
     \begin{subfigure}[b]{0.48\columnwidth}
        \centering
        \includegraphics[width=1.6in]{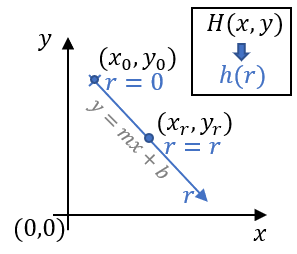}
        \caption{}
        \label{Fig:LineSampling}
     \end{subfigure}
     \begin{subfigure}[b]{0.48\columnwidth}
        \centering
        \includegraphics[width=1.6in]{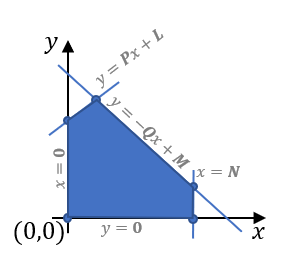}
        \caption{}
        \label{Fig:FeasibleSpace}%
     \end{subfigure}
     \caption{(a) $H(x,y)$ along the line $y=mx+b$,\ \ 
     (b) Feasible space for $H(x,y)$ in         \eqref{Eq:ConstrainedOptimizationBivariate}.}
    \label{Fig:Hxy}
    \vspace{-4mm}
\end{figure}

\begin{theorem}
\label{Th:ConvexityOfBivariateRational} 
The rational function $h(r),\ r \in \mathcal{U}$ defined in \eqref{Eq:HxyOnLine} is convex (or concave) if $\Delta_{h}(r_{0}) > 0$ (or $\Delta_{h}(r_{0}) < 0$), where $r_{0}\in\mathcal{U}$ and $\Delta_{h}(r)$ is defined in \eqref{Eq:Deltah}.
\end{theorem}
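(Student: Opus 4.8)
The plan is to verify that $h(r)$, as defined in \eqref{Eq:HxyOnLine}, falls precisely into the setting of Lemma \ref{Lm:ConvexityOfRationalFunction} and Remark \ref{Rm:ExampleCase}, so that the claim follows as an immediate specialization. The key observation is a degree count: substituting $(x_r,y_r)=(x_0+r,\,y_0+mr)$ into the numerator $F(x,y)=C_1x^2+C_2y^2+C_3xy+C_4x+C_5y+C_6$ produces $f(r)=F(x_0+r,\,y_0+mr)$, which is a polynomial in $r$ of degree \emph{at most two}, and substituting into the denominator $G(x,y)=C_7x+C_8y+C_9$ produces $g(r)=G(x_0+r,\,y_0+mr)=(C_7+mC_8)r+(C_7x_0+C_8y_0+C_9)$, which is a polynomial in $r$ of degree \emph{at most one}. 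Thus $h(r)=f(r)/g(r)$ is exactly the case singled out in Remark \ref{Rm:ExampleCase}: a rational function with a first-degree denominator and a second-degree numerator.

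First I would establish positivity of $g(r)$ on $\mathcal U$. Since $\mathcal U=[-x_0,-y_0/m]$ is chosen precisely so that the parametrized line segment stays in the non-negative orthant $\mathbb R_+^2$, we have $x_r\ge 0$ and $y_r\ge 0$ for all $r\in\mathcal U$; combined with $C_7\ge 0$, $C_8\ge 0$, $C_9>0$ this gives $g(r)=C_7x_r+C_8y_r+C_9\ge C_9>0$ throughout $\mathcal U$, so $h(r)$ is well-defined and the hypothesis $g>0$ of Lemma \ref{Lm:ConvexityOfRationalFunction} holds. Next I would check the identity \eqref{Eq:ConvexityOfRationalFunction}, namely $g[gf'''-fg''']-3g''[gf'-fg']=0$ on $\mathcal U$: because $g$ is affine in $r$ we have $g''\equiv 0$ and $g'''\equiv 0$, so each of the two bracketed groups vanishes identically and \eqref{Eq:ConvexityOfRationalFunction} is satisfied trivially. (Equivalently, $\Delta_h'(r)=g[gf'''-fg''']-3g''[gf'-fg']\equiv 0$, so $\Delta_h$ is constant on $\mathcal U$ and its sign at any $r_0$ determines its sign everywhere.) With these two facts in hand, Lemma \ref{Lm:ConvexityOfRationalFunction} applies verbatim and yields that $h(r)$ is convex on $\mathcal U$ when $\Delta_h(r_0)>0$ and concave when $\Delta_h(r_0)<0$, which is exactly the assertion of the theorem.

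I do not expect a genuine obstacle here; the statement is essentially a packaging of the already-proved Lemma \ref{Lm:ConvexityOfRationalFunction} applied to the specific algebraic form \eqref{Eq:BivariateRational}. The only point requiring a small amount of care is confirming that the degree bounds are not accidentally violated — e.g. that $g$ really is affine and not constant-degree-zero in a way that would break the division, and that the endpoints of $\mathcal U$ and the sign conventions on $m$ are consistent so that $\mathcal U$ is a genuine (nonempty, closed) interval on which the non-negativity of $x_r,y_r$ holds. These are straightforward bookkeeping checks rather than substantive difficulties, so the proof can be stated compactly by simply invoking Lemma \ref{Lm:ConvexityOfRationalFunction} and Remark \ref{Rm:ExampleCase} after noting $g''\equiv 0$.
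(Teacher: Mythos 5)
Your proposal is correct and follows essentially the same route as the paper's proof: establish $g(r)>0$ on $\mathcal{U}$ from $C_7,C_8\geq 0$, $C_9>0$ and the non-negativity of $(x_r,y_r)$, then note the degree count ($f$ quadratic, $g$ affine) so that condition \eqref{Eq:ConvexityOfRationalFunction} holds and Lemma \ref{Lm:ConvexityOfRationalFunction} applies directly. (One small wording nit: the first bracket $gf'''-fg'''$ vanishes because $f'''\equiv 0$ as well as $g'''\equiv 0$, not solely because $g$ is affine — but you established the degree bound on $f$ up front, so the argument stands.)
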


\emph{Proof: } According to \eqref{Eq:HxyOnLine} and $\mathcal{U}$ defined above, the denominator polynomial $g(r) = G(x_{0}+r,y_{0}+mr) > 0$ for all $r\in\mathcal{U}$ as $C_{7}\geq0,\ C_{8}\geq0$ and $C_{9}>0$ in \eqref{Eq:BivariateRational}.

Since $g(r)$ and $f(r)$ are polynomials of degree 1 and 2 respectively, they satisfy condition \eqref{Eq:ConvexityOfRationalFunction}. Thus, Lemma \ref{Lm:ConvexityOfRationalFunction} is applicable for $h(r)$ in \eqref{Eq:HxyOnLine} and its convexity will depend on the condition $\Delta_{h}(r_{0})>0$. \hfill$\blacksquare$

It is worth pointing out that $\Delta_{h}(r)$ is in fact independent of $r$ as $\Delta_{h}^{\prime}(r)=0,\forall r\in\mathcal{U}$ (see the last step of the proof of Lemma \ref{Lm:ConvexityOfRationalFunction} and \eqref{Eq:ConvexityOfRationalFunction}). However, it will depend on other parameters contained in \eqref{Eq:BivariateRational} including $x_{0},y_{0}$ and $m$. For example, when the line segment defined by $x_{0}=0,y_{0}=0,m=0$ (i.e., the $x$-axis) is used, $\Delta_{h}(r)=2C_{6}C_{7}^{2}-2C_{4}C_{7}C_{9}+2C_{1}C_{9}^{2},\forall r\in\mathbb{R}_{+}.$

In the introduced parameterization scheme, the parameter $r$ represents the distance along the $x$ axis from $x_{0}$ (projected from the line segment $y=mx+b$). However, if $H(x,y)$ needs to be studied along the $y$ axis (from $y_{0}$ projected from a line segment $x=ny+c$), then using
\begin{equation}
\frac{y_{r}-y_{0}}{x_{r}-x_{0}}=\frac{1}{n}\implies\frac{y_{r}-y_{0}}{1}=\frac{x_{r}-x_{0}}{n}=r,\label{Eq:Parametrization2}%
\end{equation}
is more appropriate as it gives $(x_{r},y_{r})=(x_{0}+nr,y_{0}+r)$.

Theorem \ref{Th:ConvexityOfBivariateRational} enables determining the optimal $H(x,y)$ value along a known line segment (on $\mathbb{R}_{+}^{2}$) using Lemma \ref{Lm:ConvexOptimizationSingleVar} for a problem of the form \eqref{Eq:ConvexOptimizationSingleVar}. This capability is exploited next.

\paragraph{\textbf{Constrained Minimization of $H(x,y)$}}

The main objective of this discussion is to obtain a closed form solution to a constrained optimization problem of the form
\begin{equation}
\label{Eq:ConstrainedOptimizationBivariate}\begin{aligned} (x^*, y^*) = &\underset{(x,y)}{\mathrm{argmin}}\ H(x,y)\\ & 0\leq x \leq \mathbf{N},\\ & 0 \leq y \leq \min\{\mathbf{P}x+\mathbf{L},\, -\mathbf{Q}x+\mathbf{M}\}, \end{aligned}
\end{equation}
where $H(x,y)$ is a known bivariate rational function of the form \eqref{Eq:BivariateRational} and $\mathbf{P},\mathbf{Q},\mathbf{L},\mathbf{M}$ are known positive (scalar) constants. These constraints define a convex 2-Polytope as shown in Fig. \ref{Fig:FeasibleSpace}. The steps to solve the above problem are discussed next.

\paragraph*{\textbf{- Step 1}}
The unconstrained version of \eqref{Eq:ConstrainedOptimizationBivariate} is considered first. This is solved using the KKT necessary conditions \cite{Bertsekas2016nonlinear}, which reveal two equations of generic conics \cite{Rosenberg}. Therefore, the stationary points of $H(x,y)$ lie at the (four) intersection points of those two conics. The problem of determining the intersection of two conics boils down to solving a quartic equation, which has a well-known closed-form solution \cite{Auckly2007}. These (four) solutions are computed and stored in a \emph{solution pool} if they satisfy the problem constraints.

\paragraph*{\textbf{- Step 2}}
Next, the constrained version of \eqref{Eq:ConstrainedOptimizationBivariate} is considered. In such a case, it is possible for $(x^{\ast},y^{\ast})$ to lie on a constraint boundary. To capture such situations, $H(x,y)$ is optimized along each of the boundary line segments of the feasible space (there are five of them as shown in Fig. \ref{Fig:FeasibleSpace}).

On a selected boundary line segment, the first step is to parameterize $H(x,y)$ to obtain a single variable rational function $h(r)$ (following either \eqref{Eq:Parametrization1} or \eqref{Eq:Parametrization2}). Then, the next step is to solve the resulting convex (or concave) optimization problem (of the form \eqref{Eq:ConvexOptimizationSingleVar}) using Lemma \ref{Lm:ConvexOptimizationSingleVar}. Note that this is enabled by Theorem \ref{Th:ConvexityOfBivariateRational}. Finally, the obtained optimal solution is added to the solution pool from \textbf{Step 1}. 

\paragraph*{\textbf{- Step 3}}

The final step is to pick the best solution out of the solution pool (which only contains at most nine candidates solutions). Therefore, this is achieved by directly evaluating $H(x,y)$ and comparing all candidate solutions to each other.

This approach is computationally cheap, accurate and provides the global optimal solution compared to gradient-based methods which are susceptible to local optima. This concludes the discussion on how to solve a generic problem of the form \eqref{Eq:ConstrainedOptimizationBivariate}.

\subsection{Omitting the Denominator of the RHCP Objective Function} 
In this appendix, we consider the case where the RHCP objective function $J_H$ takes the form $$J_H(X_i(t),U_{ij};H) = \bar{J}_i(t,t+w).$$ 
Note that in above equation, the denominator term $w$ found in the original definition of $J_H$ in  \eqref{Eq:RHCNewChoices} is omitted. The main focus here is given to the \textbf{RHCP3} where now the objective function $J_H(u_j,v_j)$ takes the form
$$J_H(u_j,v_j) = C_1u_j^2 + C_2v_j^2 + C_3u_jv_j +C_4 u_j + C_5v_j + C_6.$$ 
Each coefficient $C_i$ for all $i$ above is same as in \eqref{Eq:OP3ObjectiveSimplified}. In this appendix, it is shown that the above objective function leads to a spurious policy for $j^*$ in \eqref{Eq:RHCGenSolStep2}.

The first step of \textbf{RHCP3} (i.e., \eqref{Eq:RHCGenSolStep1}) can be stated as:
\begin{equation}\label{Eq:OP3_FormalOld}
    \begin{aligned}
    (u_j^*,v_j^*) =&\ \underset{(u_j,v_j)}{\arg\min}\ {J_H(u_j,v_j)}\\
    &(u_j,v_j) \mbox{ s.t. \eqref{Eq:Constraints1}}.  
    \end{aligned}
\end{equation}
where $(u_j,v_j)\in\mathbb{U}_1$ or $(u_j,v_j)\in \mathbb{U}_2$ as in \eqref{Eq:Constraints1}.

\paragraph{\textbf{Solving \eqref{Eq:OP3_FormalOld} for Optimal Control }$(u_j^*,v_j^*)$ }

\paragraph*{\textbf{- Case 1}} $(u_j,v_j) \in \mathbb{U}_1 = \{0\leq u_{j}\leq\bar{u}_{j},\ v_{j}=0\}$ in \eqref{Eq:Constraints1}: Then, $v_j^* = 0$ and \eqref{Eq:OP3_FormalOld} takes the form: 
\begin{equation}\label{Eq:OP3_FormalOldPart1}
    \begin{aligned}
    u_j^* =\  &\underset{u_j}{\arg\min}\ {J_H(u_j,0)}\\
    &0 \leq u_j \leq \bar{u}_j.  
    \end{aligned}
\end{equation}

\begin{lemma}
\label{Lm:OP3_FormalOldPart1}
The optimal solution for \eqref{Eq:OP3_FormalOldPart1} is
\begin{equation}
    u_j^* = 0.
\end{equation}
\end{lemma}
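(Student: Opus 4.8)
The plan is to analyze the one-variable objective $J_H(u_j,0) = C_1 u_j^2 + C_4 u_j + C_6$ obtained by setting $v_j = 0$ in the (denominator-free) \textbf{RHCP3} objective, and to show that it is non-decreasing on $[0,\bar u_j]$, so that the minimizer is the left endpoint $u_j^* = 0$. First I would record the relevant coefficients from \eqref{Eq:OP3ObjectiveSimplified}: $C_1 = \tfrac12(\bar A - B_j)$ and $C_4 = \bar R(t) + \bar A \rho_{ij}$. Since $\bar A = \bar A_{ij} + A_i + A_j \ge 0$ and $\bar R(t) \ge 0$ with $\rho_{ij} \ge 0$, we have $C_4 \ge 0$ immediately. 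Hence $J_H'(u_j,0) = 2C_1 u_j + C_4 = (\bar A - B_j) u_j + C_4$.

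The next step is a case split on the sign of $\bar A - B_j$. If $\bar A \ge B_j$, then $C_1 \ge 0$, so $J_H'(u_j,0) = 2C_1 u_j + C_4 \ge C_4 \ge 0$ for all $u_j \ge 0$, and $J_H(u_j,0)$ is non-decreasing on $[0,\bar u_j]$; thus $u_j^* = 0$. If $\bar A < B_j$, then $J_H(u_j,0)$ is a downward-opening parabola in $u_j$, so on the interval $[0,\bar u_j]$ its minimum is attained at one of the endpoints, i.e.\ the candidates are $u_j = 0$ and $u_j = \bar u_j$. Comparing values, $J_H(\bar u_j,0) - J_H(0,0) = C_1 \bar u_j^2 + C_4 \bar u_j = \bar u_j\bigl(C_1 \bar u_j + C_4\bigr)$, and since $C_4 \ge 0$ this is non-negative precisely when $C_1 \bar u_j + C_4 \ge 0$. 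Here I would invoke the structural bound on $\bar u_j$ from \eqref{Eq:Lambdaj000}--\eqref{Eq:Constraints1}, namely $\bar u_j \le u_j^B = \tfrac{R_j(t) + A_j\rho_{ij}}{B_j - A_j}$, together with the fact (used in the proof of Lemma \ref{Lm:OP3_FormalPart1}) that $u_j = u_j^{\#} = \tfrac{\bar A \rho_{ij}}{B_i - \bar A}$ is the point where $J_H(u_j,0) = J_H(0,0)$ in the \emph{rational} objective; the analogous break-even point for the polynomial objective is where $C_1 u_j + C_4 = 0$, i.e.\ $u_j = -C_4/C_1 = \tfrac{\bar R(t) + \bar A\rho_{ij}}{\,(B_j - \bar A)/2\,} \cdot \tfrac12$... which is manifestly positive and, because it involves the full neighborhood uncertainty $\bar R(t)$ rather than just $R_j(t)$, typically exceeds $\bar u_j$.

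The main obstacle is exactly this last comparison: establishing that $J_H(\bar u_j,0) \ge J_H(0,0)$ in the subcase $\bar A < B_j$, which amounts to showing $C_1 \bar u_j + C_4 \ge 0$, i.e.\ $\bar u_j \le 2C_4/(B_j - \bar A) = 2(\bar R(t) + \bar A\rho_{ij})/(B_j - \bar A)$. I expect this to follow by substituting $\bar u_j \le u_j^B$ and using $\bar A \ge A_j$, $\bar R(t) \ge R_j(t)$ to dominate the expression, but some care is needed if $\bar A$ can be smaller than $A_j$ — it cannot, since $\bar A = \bar A_{ij} + A_i + A_j \ge A_j$, so the bound goes through. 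An alternative, cleaner route that sidesteps the arithmetic entirely is to argue \emph{monotone-coefficient} style: since $C_1$ multiplies $u_j^2$ and $C_4 \ge 0$ multiplies $u_j$, and since the feasible set is the interval $[0,\bar u_j]$, the key point to pin down is whether the objective ever dips below $J_H(0,0)$ on that interval; the lemma statement asserts it does not, so the proof reduces to the displayed endpoint inequality, and I would present precisely that inequality check, noting that equality $u_j^* = \bar u_j$ could in principle occur at the boundary of the parameter regime but the intended claim is the generic $u_j^* = 0$. (This is the counterexample the appendix is building toward: dropping the denominator makes $u_j^*$ collapse to $0$ regardless of $R_j(t)$, which is the spurious behavior flagged in the main text.)
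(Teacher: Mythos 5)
Your proposal is correct and follows essentially the same route as the paper's proof: split on the sign of $C_1=\tfrac12(\bar A-B_j)$, observe the quadratic is non-decreasing on $[0,\bar u_j]$ when $C_1\ge 0$, and when $C_1<0$ compare endpoints via $J_H(\bar u_j,0)-J_H(0,0)=\bar u_j\left(C_1\bar u_j+C_4\right)$, closing the argument with exactly the paper's chain $\bar u_j\le u_j^B=\tfrac{R_j(t)+A_j\rho_{ij}}{B_j-A_j}\le\tfrac{\bar R(t)+\bar A\rho_{ij}}{B_j-\bar A}$ (using $\bar R(t)\ge R_j(t)$, $\bar A\ge A_j$), which already places $\bar u_j$ below the break-even point $-C_4/C_1$. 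The only blemishes are cosmetic — the digression about the rational-objective $u_j^{\#}$ and a transient factor-of-two slip when first writing $-C_4/C_1$ — and they are harmless, since the inequality you ultimately target, $\bar u_j\le 2(\bar R(t)+\bar A\rho_{ij})/(B_j-\bar A)$, is the correct (indeed weaker) condition and is implied by the chain above.
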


\begin{proof}
Substituting $v_j=0$ in \eqref{Eq:OP3ObjectiveSimplified} gives $J_H(u_j,0)$ as  
\begin{equation}\label{Eq:Lm_OP3_FormalPart1_PfS0}
    J_H(u_j,0) = C_1u_j^2 + C_4u_j + C_6.
\end{equation}
Recall $C_4\geq0$, $C_6 \geq 0$ and $C_1 = \frac{1}{2}(\bar{A}-B_j)$.

First, consider the case where $C_1 = 0$. Then, $J_H(u_j,0)$ is linear in $u_j$. Also it will has a non-negative gradient as $C_4\geq0$. Therefore, when $C_1 = 0$, clearly $u_j^*=0$.

However, note that when $C_1\neq 0$, the unconstrained optimum of $J_H(u_j,0)$ is at $u_j = u_j^{\#} = \frac{-C_4}{2C_1}$ (using calculus). Also note that due to the quadratic nature of $J_H(u_j,0)$, it should be symmetric around $u_j=u_j^{\#}$.

As the second case, consider case where $C_1>0$. Then, $J_H(u_j,0)$ is convex and $u_j^{\#} \leq 0$. Therefore, when $C_1>0$, $u_j^*=0$. 

Finally, consider the case where $C_1<0$. Then , $J_H(u_j,0)$ is concave and $u_j^{\#} \geq 0$. In this case, using the aforementioned symmetry, the constrained optimum $u_j^*$ can be written as
\begin{equation}\label{Eq:Lm_OP3_FormalPart1_PfS1}
    u_j^* = 
    \begin{cases}
    \bar{u}_j &\mbox{ if }  2u_j^{\#} < \bar{u}_j \\
    0 &\mbox{ otherwise,}
    \end{cases}
\end{equation}
Now, it is required to prove that the condition $2u_j^{\#} < \bar{u}_j$ never occurs (whenever $C_1<0$). Using \eqref{Eq:OP3ObjectiveSimplified}, $u_j^{\#}$ can be written as,
\begin{equation}\label{Eq:Lm_OP3_FormalPart1_PfS2}
    u_j^{\#} = \frac{-C_4}{2C_1} = \frac{\bar{R}(t)+\bar{A}\rho_{ij}}{B_j-\bar{A}}.
\end{equation}
Note that $C_1<0 \iff B_j \geq \bar{A}$. Also from \eqref{Eq:OP3ObjectiveSimplified}, $R_i(t) \leq \bar{R}(t)$ and $A_i \leq \bar{A}$. Therefore, the denominator and the numerator of $u_j^{\#}$ above can be bounded as    
\begin{equation*}
    \left[\bar{R}(t)+\bar{A}\rho_{ij}\right] \geq \left[R_j(t)+ A_j\rho_{ij}\right] \mbox{ and } (B_j-\bar{A}) \leq (B_j-A_j). 
\end{equation*}
The above result gives (also using $\bar{u}_j,u_j^B$ definitions in \eqref{Eq:Constraints1}),
\begin{align}\label{Eq:Lm_OP3_FormalPart1_PfS3}
    u_j^{\#} =  \frac{\bar{R}(t)+\bar{A}\rho_{ij}}{B_j-\bar{A}} 
    \geq \frac{R_j(t)+A_j\rho_{ij}}{B_j-A_j} = u_j^B \geq \bar{u}_j.  
\end{align}
Therefore, $u_j^{\#} \geq \bar{u}_j$ and hence the condition  $2u_j^{\#} < \bar{u}_j$ in \eqref{Eq:Lm_OP3_FormalPart1_PfS1} does not hold. Thus, even when $C_1<0$, $u_j^*=0$. This completes the proof.
\end{proof}

\paragraph*{\textbf{- Case 2}} $(u_j,v_j)\in \mathbb{U}_2 = \{u_{j}=u_{j}^{B},\ 0\leq v_{j}\leq\bar{v}_{j}\}$. Then, $u_j = u_j^* = u_j^B$ and $0 \leq v_j \leq \bar{v}_j$. Therefore, \eqref{Eq:OP3_FormalOld} takes the form: 
\begin{equation}\label{Eq:OP3_FormalOldPart2}
    \begin{aligned}
    v_j^* =\  &\underset{v_j}{\arg\min}\ {J_H(u_j^B,v_j)}\\
    &0 \leq v_j \leq \bar{v}_j.
    \end{aligned}
\end{equation}

\begin{lemma}\label{Lm:OP3_FormalOldPart2}
The optimal solution for \eqref{Eq:OP3_FormalOldPart2} is
\begin{equation}
    v_j^* = 0.
\end{equation}
\end{lemma}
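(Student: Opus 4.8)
The plan is to substitute the fixed value $u_j=u_j^B$ into the (denominator-free) objective $J_H(u_j,v_j)=C_1u_j^2+C_2v_j^2+C_3u_jv_j+C_4u_j+C_5v_j+C_6$ and analyze the resulting single-variable function of $v_j$. After substitution one obtains
\[
J_H(u_j^B,v_j)=C_2v_j^2+\big(C_3u_j^B+C_5\big)v_j+\big(C_1(u_j^B)^2+C_4u_j^B+C_6\big),
\]
i.e. a quadratic (or, when $C_2=0$, affine) function of $v_j$ on the interval $[0,\bar v_j]$. The constant term is irrelevant for the minimization, so everything reduces to the sign of the leading coefficient $C_2$ and the sign of the linear coefficient $C_3u_j^B+C_5$.

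First I would record that $C_2=\tfrac{\bar A_j}{2}\ge 0$, $C_3=\bar A_j\ge 0$, $C_5=\bar R_j(t)+\bar A_j\rho_{ij}\ge 0$ (from \eqref{Eq:OP3ObjectiveSimplified}, \eqref{Eq:NeighborhoodParameters} and Assumption \ref{As:TargetUncertaintyRateInequality}), and that $u_j^B\ge 0$ by \eqref{Eq:Lambdaj00}. Hence the linear coefficient $C_3u_j^B+C_5\ge 0$. I would then split into two cases exactly as in the proof of Lemma \ref{Lm:OP3_FormalOldPart1}: if $C_2=0$, then $J_H(u_j^B,v_j)$ is affine in $v_j$ with non-negative slope, so it is non-decreasing on $[0,\bar v_j]$ and $v_j^*=0$; if $C_2>0$, then $J_H(u_j^B,v_j)$ is a convex parabola in $v_j$ whose vertex lies at $v_j^{\#}=-\tfrac{C_3u_j^B+C_5}{2C_2}\le 0$, so on the feasible interval $[0,\bar v_j]\subseteq[0,\infty)$ the function is again non-decreasing, giving $v_j^*=0$. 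Combining the two cases yields $v_j^*=0$, which completes the proof.

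The main (and essentially only) obstacle is the verification that the linear coefficient $C_3u_j^B+C_5$ is non-negative; unlike in Case 1 of the old \textbf{RHCP3}, where a concave branch ($C_1<0$) forced a delicate argument bounding $u_j^{\#}$ against $\bar u_j$, here the relevant leading coefficient $C_2=\bar A_j/2$ is automatically non-negative, so no analogous boundary comparison is needed and the argument stays short. I would simply make sure to cite $u_j^B\ge 0$ and the non-negativity of the neighbor-set parameters explicitly, since that is what rules out the vertex falling inside $(0,\bar v_j]$.
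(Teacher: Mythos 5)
Your proposal is correct and follows essentially the same route as the paper's proof: substitute $u_j=u_j^B$ to get a quadratic in $v_j$, treat the degenerate case $C_2=0$ as an affine function with non-negative slope, and for $C_2>0$ note that the vertex $v_j^{\#}=-\frac{C_3u_j^B+C_5}{2C_2}\le 0$ lies outside the feasible interval so the convex objective is non-decreasing on $[0,\bar{v}_j]$. Your added remark that, unlike Case 1, no boundary comparison of the type $u_j^{\#}$ versus $\bar{u}_j$ is needed here is accurate and consistent with the paper's treatment.
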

\begin{proof}
Substituting $u_j=u_j^B$ in \eqref{Eq:OP3ObjectiveSimplified} gives $J_H(u_j^B,v_j)$ as  
\begin{equation}\label{Eq:Lm_OP3_FormalPart1_PfS1}
    J_H(u_j^B,v_j) 
= C_2v_j^2 + \left[C_3u_j^B+C_5\right]v_j + \left[C_1(u_j^B)^2 + C_4u_j^B + C_6\right].
\end{equation} 
Recall $C_2,C_3,C_5,u_j^B \geq 0 $ and $C_2 = \frac{\bar{A}_j}{2}=\frac{1}{2}\sum_{m\in \bar{\mathcal{N}}_i\backslash \{j\}}A_m$.

If $C_2=0$, the objective $J_H(u_j^B,v_j)$ is linear in $v_j$. Also its gradient is non-negative. Therefore, when $C_2 = 0$, clearly $v_j^*=0$.

Consider the case where $C_2>0$. Then $J_H(u_j^B,v_j)$ has its unconstrained optimum is at $v_j = v_j^{\#}$ where
$$v_j^{\#} = \frac{-\left[C_3u_j^B+C_5\right]}{2C_2},$$ (using calculus). Also note that due to the quadratic nature of $J_H(u_j^B,v_j)$, it should also be symmetric around $v_j=v_j^{\#}$. 

Since $C_2>0$, $J_H(u_j^B,v_j)$ is convex. Also, $v_j^{\#} \leq 0$ as $C_3,C_5,u_j^B \geq 0$. This implies that the constrained optimum is at $v_j^*=0$ even when $C_2>0$. This completes the proof.
\end{proof}

\paragraph*{\textbf{- Combined Result}}
\begin{theorem}\label{Eq:OP3CombinedResultOld}
The optimal solution of \eqref{Eq:OP3_FormalOld} is $u_j^*=0, \ v_j^*=0$, and the optimal cost is $J_H(u_j^*,v_j^*)=C_6$.
\end{theorem}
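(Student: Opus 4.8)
The plan is simply to compose the two lemmas that immediately precede the statement, in exact parallel with the way Theorem \ref{Th:OP3FormalCombined} was assembled from Lemmas \ref{Lm:OP3_FormalPart1} and \ref{Lm:OP3_FormalPart2}. Recall that the feasible set of \eqref{Eq:OP3_FormalOld} decomposes as $\mathbb{U}_1 \cup \mathbb{U}_2$ per \eqref{Eq:Constraints1}, with the point $(u_j^B,0)$ shared by both. First I would invoke Lemma \ref{Lm:OP3_FormalOldPart1}: over $\mathbb{U}_1$ (where $v_j = 0$), the optimal control is $u_j^* = 0$, so the best achievable value on $\mathbb{U}_1$ is $J_H(0,0)$. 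Next I would invoke Lemma \ref{Lm:OP3_FormalOldPart2}: over $\mathbb{U}_2$ (where $u_j = u_j^B$), the optimal control is $v_j^* = 0$, so the best achievable value on $\mathbb{U}_2$ is $J_H(u_j^B,0)$.

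The remaining step is to compare these two candidate optima and conclude that $J_H(0,0) \le J_H(u_j^B,0)$, so that the global optimum over $\mathbb{U}_1 \cup \mathbb{U}_2$ is attained at $(u_j^*,v_j^*) = (0,0)$. This follows from the argument already carried out inside the proof of Lemma \ref{Lm:OP3_FormalOldPart1}: substituting $v_j=0$ into \eqref{Eq:OP3ObjectiveSimplified} gives $J_H(u_j,0) = C_1 u_j^2 + C_4 u_j + C_6$ with $C_4 \ge 0$, and the case analysis there (over the sign of $C_1 = \tfrac12(\bar A - B_j)$, using the bound $u_j^\# \ge \bar u_j$ derived in \eqref{Eq:Lm_OP3_FormalPart1_PfS3} for the concave case) shows $J_H(u_j,0) \ge J_H(0,0) = C_6$ for every feasible $u_j \in [0,\bar u_j]$, in particular for $u_j = u_j^B$ when $u_j^B \le \bar u_j$. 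Hence neither branch can beat $C_6$, and since $(0,0) \in \mathbb{U}_1$ is feasible, $(u_j^*,v_j^*)=(0,0)$ is the global minimizer. Finally, evaluating \eqref{Eq:OP3ObjectiveSimplified} at $(0,0)$ gives the optimal cost $J_H(u_j^*,v_j^*) = C_6$.

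There is no real obstacle here: the theorem is a bookkeeping composition of the two lemmas, so the proof is one or two sentences. The only point requiring a touch of care is making explicit that the shared point $(u_j^B,0)$ does not provide a strictly better value than $(0,0)$ — but this is already contained in the chain $u_j^\# \ge u_j^B \ge \bar u_j$ established in Lemma \ref{Lm:OP3_FormalOldPart1}, so I would just cite that inequality rather than redo it. Thus the proof reads: \emph{This result is a composition of the respective solutions given in Lemmas \ref{Lm:OP3_FormalOldPart1} and \ref{Lm:OP3_FormalOldPart2}, together with the comparison $J_H(0,0) \le J_H(u_j^B,0)$ noted in the proof of Lemma \ref{Lm:OP3_FormalOldPart1}; evaluating \eqref{Eq:OP3ObjectiveSimplified} at $(0,0)$ yields the cost $C_6$.}
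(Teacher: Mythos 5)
Your proposal is correct and takes essentially the same route as the paper: the paper also composes Lemmas \ref{Lm:OP3_FormalOldPart1} and \ref{Lm:OP3_FormalOldPart2} and then rules out the $\mathbb{U}_2$ candidate $(u_j^B,0)$ using the chain $u_j^{\#}\geq u_j^B\geq\bar{u}_j$ from \eqref{Eq:Lm_OP3_FormalPart1_PfS3}, though it does so by an explicit contradiction ($C_1u_j^B+C_4\leq 0$ would force $2u_j^{\#}\leq\bar{u}_j$) rather than your slightly cleaner observation that $(u_j^B,0)\in\mathbb{U}_1$ so Lemma \ref{Lm:OP3_FormalOldPart1} already yields $J_H(0,0)\leq J_H(u_j^B,0)$. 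No gap.
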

\begin{proof}
First, assume that the optimal solution of \eqref{Eq:OP3_FormalOld} $(u_j^*,v_j^*)$ belongs to $\mathbb{U}_1$ in \eqref{Eq:Constraints1}. Then, Lemma \ref{Lm:OP3_FormalOldPart1} gives that $u_j^*=0,\ v_j^*=0$. The corresponding objective function value (using \eqref{Eq:Lm_OP3_FormalPart1_PfS0}) is 
$$ \left[J_H(u_j^*,v_j^*)\right]_{Class1} = C_6.$$

However, if the optimal solution of \eqref{Eq:OP3_FormalOld} is assumed to be in $\mathbb{U}_2$ of \eqref{Eq:Constraints1}, Lemma \ref{Lm:OP3_FormalOldPart2} gives that  $u_j^*=u_j^B,\ v_j^*=0$. The corresponding objective function value (using \eqref{Eq:Lm_OP3_FormalPart1_PfS1}) is 
$$ \left[J_H(u_j^*,v_j^*)\right]_{Class2} = \left[C_1(u_j^B)^2 + C_4u_j^B + C_6\right].$$

If the latter case (i.e., $(u_j,v_j) \in \mathbb{U}_2$) provides a better performing solution \begin{align}\nonumber
    C_1(u_j^B)^2 + C_4u_j^B + C_6 \leq C_6.
\end{align}
Using $u_j^B\geq 0$, above condition can be simplified into: $C_1u_j^B + C_4 \leq 0 $, which is only possible when $C_1<0$ as $C_4 \geq 0$. Therefore, this condition can be simplified as: $C_1<0$ and, 
$$\frac{-C_4}{C_1} \leq u_j^B.$$ 
Using \eqref{Eq:Lm_OP3_FormalPart1_PfS2}, the above condition can be written as $2u_j^{\#} \leq \bar{u}_j$. however, \eqref{Eq:Lm_OP3_FormalPart1_PfS3} shows that whenever $C_1\leq 0$, $u_j^{\#} \geq \bar{u}_j$. Thus, clearly the condition $2u_j^{\#} \leq \bar{u}_j$ does not hold (i.e., a contradiction). 

Therefore, the optimal solution of \eqref{Eq:OP3_FormalOld} should belong to $\mathbb{U}_1$ and hence $u_j^* = 0,\ v_j^*=0$ and $J_H(u_j^*,v_j^*)=C_6$. 
\end{proof}

As a result of the above theorem, when the agent $a$ is ready to leave target $i$ at time $t$, it can compute the optimal trajectory costs $J_H(u_j^*,v_j^*)$ for all $j\in \mathcal{N}_i$ by simply using the expression for $C_6$ where 
\begin{equation}\label{Eq:Op3OptimumCostF}
    J_H(u_j^*,v_j^*) = C_6 = \frac{\rho_{ij}}{2}\left[2\bar{R}(t) + \bar{A}\rho_{ij}\right].
\end{equation}

\paragraph{\textbf{Solving for Optimal Next Destination $j^*$}}
The second step of the \textbf{RHCP3} (i.e., \eqref{Eq:RHCGenSolStep2}) is to choose the optimum neighbor $j$ according to 
\begin{equation}\label{Eq:OP3_FormalOldStep2}
j^* = \underset{j\in\mathcal{N}_i}{\arg\min}\ J_H(u_j^*,v_j^*).    
\end{equation}
As shown in \eqref{Fig:OneVisitTimeline}, above $j^*$ defines the \textquotedblleft Action\textquotedblright\ that the agent has to take at $t=t$.

\begin{theorem}\label{Th:OP3jStarOld}
The optimal solution to \eqref{Eq:OP3_FormalOldStep2} is the neighbor $j=j^*\in \mathcal{N}_i$ whom can be reached in a shortest time, i.e.,  
$$j^* = \arg\min_{j\in \mathcal{N}_i}\ \rho_{ij}.$$
\end{theorem}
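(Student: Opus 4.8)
The plan is to read the next-visit rule straight off the closed-form optimal cost already established in Theorem~\ref{Eq:OP3CombinedResultOld}. By that theorem (see also \eqref{Eq:Op3OptimumCostF}), for every candidate neighbor $j\in\mathcal{N}_i$ the inner problem \eqref{Eq:OP3_FormalOld} has optimal control $u_j^*=v_j^*=0$ and optimal cost
\[
J_H(u_j^*,v_j^*)=C_6=\frac{\rho_{ij}}{2}\left[2\bar{R}(t)+\bar{A}\rho_{ij}\right],
\]
so the second step \eqref{Eq:OP3_FormalOldStep2} reduces to minimizing this scalar quantity over $j\in\mathcal{N}_i$. The proof thus amounts to analyzing how this expression depends on $j$.

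The crucial observation is that, unlike the transit time $\rho_{ij}$, the quantities $\bar{R}(t)$ and $\bar{A}$ do \emph{not} depend on the choice of $j$. Indeed, combining the identities in \eqref{Eq:NeighborhoodParameters}, $\bar{A}=\bar{A}_{ij}+A_i+A_j=A_i+\sum_{m\in\mathcal{N}_i}A_m=\sum_{m\in\bar{\mathcal{N}}_i}A_m$, and likewise $\bar{R}(t)=\sum_{m\in\bar{\mathcal{N}}_i}R_m(t)$: the apparent $j$-dependence cancels because the term removed from $\bar{A}_{ij}$ or $\bar{R}_{ij}(t)$ (namely $A_j$ or $R_j(t)$) is added back in. Hence, writing $g(\rho)\triangleq \bar{R}(t)\,\rho+\tfrac12\bar{A}\,\rho^2$, the cost of choosing neighbor $j$ is exactly $g(\rho_{ij})$ with the \emph{same} function $g$ for every $j$.

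It remains to note that $g$ is strictly increasing on $\mathbb{R}_{\geq 0}$: its derivative $g'(\rho)=\bar{R}(t)+\bar{A}\rho$ is non-negative since $\bar{R}(t)\geq 0$ (because $R_m(t)\geq 0$ for all $m$) and $\bar{A}\geq 0$ (because $A_m\geq 0$ for all $m$ by Assumption~\ref{As:TargetUncertaintyRateInequality}), and is strictly positive for $\rho>0$ unless $\bar{R}(t)=\bar{A}=0$ — a degenerate case in which all candidate costs coincide and any neighbor is optimal. Since all transit times are positive, minimizing $g(\rho_{ij})$ over $j\in\mathcal{N}_i$ is therefore equivalent to minimizing $\rho_{ij}$ over $j\in\mathcal{N}_i$, which yields $j^*=\arg\min_{j\in\mathcal{N}_i}\rho_{ij}$.

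There is no genuine technical obstacle here; the only step needing care is extracting the $j$-independence of $\bar{R}(t)$ and $\bar{A}$ from the dense notation of \eqref{Eq:NeighborhoodParameters}. The content of the statement is interpretive rather than computational: it shows that dropping the denominator $w$ from the RHCP objective collapses the next-visit decision into a purely myopic ``nearest-neighbor'' rule that disregards both the uncertainty level $R_j(t)$ and the growth rate $A_j$ of the target actually being visited — precisely the spurious behavior this appendix is meant to highlight, in contrast to the state- and topology-aware policy obtained from the variable-horizon objective \eqref{Eq:RHCNewChoices}.
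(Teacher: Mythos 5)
Your proposal is correct and follows essentially the same route as the paper: both read the optimal cost $C_6=\frac{\rho_{ij}}{2}\left[2\bar{R}(t)+\bar{A}\rho_{ij}\right]$ off Theorem~\ref{Eq:OP3CombinedResultOld}, note that $\bar{R}(t)$ and $\bar{A}$ are independent of $j$, and conclude that this quadratic in $\rho_{ij}$ is monotonically increasing on $\mathbb{R}_{\geq 0}$, so the minimizer is the neighbor with the smallest transit time. The only differences are cosmetic: you argue monotonicity via the derivative $\bar{R}(t)+\bar{A}\rho\geq 0$ (and flag the degenerate case $\bar{R}(t)=\bar{A}=0$), whereas the paper argues via convexity and the locations of the quadratic's roots.
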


\begin{proof}
The objective function of the discrete optimization problem  \eqref{Eq:OP3_FormalOldStep2} is \eqref{Eq:Op3OptimumCostF}. Therefore, 
$$j^* = \underset{j\in\mathcal{N}_i}{\arg\min}\ \frac{\rho_{ij}}{2}\left[2\bar{R}(t) + \bar{A}\rho_{ij}\right].$$
Note that $\bar{R}(t)$ and $\bar{A}$ terms are independent of $j$ (see \eqref{Eq:OP3ObjectiveSimplified}). Therefore, the above objective function (i.e., $C_6$) can be seen as a quadratic function of $\rho_{ij}$. Also, it is convex and its poles are located at $\rho_{ij}=0$ and $\rho_{ij}=-\frac{2\bar{R}(t)}{\bar{A}}\leq 0$. Thus, $C_6$ monotonically increases with $\rho_{ij}$. As a result, $j^*$ is the neighbor $j$ with the smallest $\rho_{ij}$ value.
\end{proof}

Above theorem implies that it is optimal to choose the next destination target only based on the (shortest) transit time. This is clearly unfavorable as an agent could converge to oscillate between two targets in the target topology while ignoring others. Hence the importance of the denominator $w$ term included in the RHCP objective function definition \eqref{Eq:RHCNewChoices} is evident.


\addtolength{\textheight}{-12cm}








\bibliographystyle{IEEEtran}
\bibliography{References}

\end{document}